\documentclass{article}
\usepackage[utf8]{inputenc}
\usepackage[T1]{fontenc}
\usepackage{amsmath}
\usepackage{dsfont}
\usepackage{fancyhdr}
\usepackage[table,xcdraw]{xcolor}
\usepackage{graphicx}
\usepackage{multirow}
\usepackage{endnotes}
\usepackage{float}
\usepackage[nottoc]{tocbibind}
\usepackage{vmargin}
\usepackage{amssymb}
\usepackage{subfigure}
\usepackage{import}
\usepackage{wrapfig}
\usepackage{amsthm}
\usepackage{rotating}
\usepackage[title]{appendix}
\usepackage{listings}

\usepackage{amsmath,amssymb,amsthm,bbm,bm,nicefrac}

\usepackage{soul}

\usepackage{stmaryrd}
\usepackage{tikz}
\usepackage{mleftright}
\mleftright
\usetikzlibrary{matrix}
\usepackage{tikz}
\usetikzlibrary{shapes,arrows,chains, decorations.pathmorphing, decorations.pathreplacing}
\tikzset{quantum/.style={decorate, decoration=snake}}

\newcommand{\floor}[1]{\left\lfloor #1 \right\rfloor}

\newcommand{\ket}[1]{\left|#1\right\rangle}							
\newcommand{\bra}[1]{\left\langle#1\right|}

\newcommand{\ketbra}[2]{\left|#1\rangle\langle#2\right|}
\newcommand{\braket}[2]{\left\langle #1\lvert#2\right\rangle}
\newcommand{\abs}[1]{\lvert #1\rvert}

\newcommand{\expectedbraket}[1]{\langle #1\rangle}
\newcommand{\norm}[1]{\| #1\|}

\newcommand{\prover}{\ensuremath{\textnormal{\textsf{P}}}}

\newcommand{\QPVBB}{$\mathrm{QPV}_{\mathrm{BB84}}$}
\newcommand{\QPVBBeta}{$\mathrm{QPV}_{\mathrm{BB84}}^{\eta}$}
\newcommand{\QPVBBetaf}{$\mathrm{QPV}_{\mathrm{BB84}}^{\eta,f}$}
\newcommand{\QPVBBetam}{$\mathrm{QPV}_{m_{\theta\varphi}}^{\eta}$}
\newcommand{\QPVBBf}{$\mathrm{QPV}_{\mathrm{BB84}}^{f}$}

\newcommand{\QPVBBetafm}{$\mathrm{QPV}^{\eta,f}_{m_{\theta\varphi}}$}

\newcommand{\tr}[1]{\mathrm{Tr}\left[#1\right]}

\rhead{\thepage}
\lfoot{}
\cfoot{}
\rfoot{}

\usepackage{slashed}
\usepackage{enumitem}

\usepackage{authblk}

\theoremstyle{plain}
\newtheorem{theorem}{Theorem}[section]
\newtheorem{remark}[theorem]{Remark}
\newtheorem{prop}[theorem]{Proposition}
\newtheorem{definition}[theorem]{Definition}
\newtheorem{ex}[theorem]{Example}
\newtheorem{lemma}[theorem]{Lemma}

\newtheorem{result}[theorem]{Result}

\usepackage[normalem]{ulem}

\usepackage[pdfencoding=auto, psdextra]{hyperref}
\hypersetup{
    colorlinks=true, 
    linktoc=all,     
    linkcolor=blue,  
}

\newcommand{\pr}[1]{\mathrm{Pr}\left[#1\right]}
\usetikzlibrary{backgrounds, positioning,calc}
\newcommand{\diagdots}[3][-25]{%
  \rotatebox{#1}{\makebox[0pt]{\makebox[#2]{\xleaders\hbox{$\cdot$\hskip#3}\hfill\kern0pt}}}%
}

\begin{document}
\title{Single-qubit loss-tolerant quantum position verification protocol secure against entangled attackers}

\author[1,2]{Lloren\c{c} Escol\`a-Farr\`as}
\author[1,2]{Florian Speelman}
\newcommand{\lle}[1]{{\color{blue}#1}}
\newcommand{\fs}[1]{{\textcolor{red}{[Florian: #1]}}}
\newcommand{\re}[1]{{\textcolor{green}{[Referee: #1]}}}

\affil[1]{QuSoft, CWI Amsterdam, Science Park 123, 1098 XG Amsterdam, The Netherlands}
\affil[2]{Multiscale Networked Systems (MNS),  Informatics Institute, University of Amsterdam, Science Park 904,  1098 XH Amsterdam, The Netherlands }

\renewcommand\Affilfont{\itshape\small}
\maketitle
\begin{abstract}
Protocols for quantum position verification (QPV) which combine classical and quantum information are insecure in the presence of loss. We study the exact loss-tolerance of the most popular protocol for QPV, which is based on BB84 states, and generalizations of this protocol.
By bounding the winning probabilities of a variant of the monogamy-of-entanglement game using semidefinite programming (SDP), we find tight bounds for the relation between loss and error for these extended non-local games.
    
These new bounds enable the usage of QPV protocols using more-realistic experimental parameters.
We show how these results transfer to the variant protocol which combines $n$ bits of classical information with a single qubit, thereby exhibiting a protocol secure against a linear amount of entanglement (in the classical information $n$) even in the presence of a moderate amount of photon loss. Moreover, this protocol stays secure even if the photon encoding the qubit travels arbitrarily slow in an optical fiber.
We also extend this analysis to the case of more than two bases, showing even stronger loss-tolerance for that case. 
    
Finally, since our semi-definite program bounds a monogamy-of-entanglement game, we describe how they can also be applied to improve the analysis of one-sided device-independent QKD protocols.
\end{abstract}

\section{Introduction}
Position-based cryptography (PBC), initially introduced by Chandran, Goyal,
Moriarty and Ostrovsky \cite{OriginalPositionBasedCryptChandran2009}, aims to allow a party to use its geographical location as a credential to implement various cryptographic protocols.
An important building block for PBC is so-called Position Verification (PV), where an untrusted prover \prover{} wants to convince a set of verifiers $V_0,\dots,V_k$ that she is at a certain position $r$. 
In \cite{OriginalPositionBasedCryptChandran2009} it was proven that no secure classical protocol for position verification can exist, since there exists a general attack based on copying classical information.
Adrian Kent first studied PV protocols that use quantum information in 2002, originally named \emph{quantum tagging} \cite{PatentKentANdOthers,OriginalQPV_Kent2011} and currently called Quantum Position Verification (QPV) in the literature.
Due to the no-cloning theorem \cite{Wootters1982NoCloning}, attacks based on simply copying information do not transfer to the quantum setting.
Nevertheless, unconditionally secure QPV was proven to be impossible, with \cite{Buhrman_2014} showing that any QPV could be attacked with double-exponential entanglement, which was later improved to exponential entanglement \cite{Beigi_2011}.

On the other side, it is possible to prove a linear lower bound for the amount of entanglement~\cite{Beigi_2011,TomamichelMonogamyGame2013}---an exponential gap. 
Still, the extreme inefficiency of the general attack left the possibility open that QPV could be shown secure if we consider attackers that are bounded in some (realistic) way.
Ideally that would mean a protocol which requires an exponential amount of entanglement to attack, but any provable gap between the hardness of executing the protocol versus breaking the protocol could be interesting.
This possibility created the opportunity for many interesting follow-up work, showing security under model assumptions such as finding clever polynomial attacks to proposed protocols~\cite{Lau_2011,Chakraborty_2015,speelman2016instantaneous,dolev2019constraining,dolev2022non,gonzales2019bounds,cree2022code}  and analyzing security in other models, such as the random oracle model~\cite{Unruh_2014_QPV_random_oracle,liu2021beating,gao2016quantum}.

Particularly interesting here is the recent work by Liu, Liu, and Qian~\cite{liu2021beating}, which surprisingly shows that classical communication suffices to construct a secure protocol---immediately solving the issue of transmission loss.
Despite this protocol being a great theoretical breakthrough, there are some downsides which make it unappealing to implement in the near future.
A minor theoretical downside is that this proposed protocol is secure only under computational assumptions, and additionally requires the random oracle model to be secure against any entanglement. The larger practical downside is that the honest prover in \cite{liu2021beating}'s protocol requires a large quantum computer to execute the protocol's steps---instead of manipulating and measuring single qubits as in most other  protocols we're considering.

One of the simplest and best-studied QPV protocols, which constitutes the basis of this work, is based on BB84 states, which we will denote \QPVBB. This protocol was initially introduced in \cite{OriginalQPV_Kent2011} and later on proved secure against unentangled adversaries in \cite{Buhrman_2014}, with parallel repetition shown in \cite{TomamichelMonogamyGame2013}\footnote{The analysis was tightened in \cite{https://doi.org/10.48550/arxiv.1504.07171}, however this was proven in a slightly-weaker security model where the attackers share a round of \emph{classical} communication instead of \emph{quantum} communication, so the analyses are not directly comparable.}. The protocol, explained in detail below, consists of one verifier ($V_0$) sending a BB84 state to the prover and the other verifier ($V_1$) sending a classical bit describing in which basis the prover has to measure, either the computational basis or the Hadamard basis. The prover then has to broadcast the measurement outcome to both verifiers, with all communication happening at the speed of light. This protocol can be attacked by attackers that share a single EPR pair.
However, by encoding the basis information in several longer classical strings $x,y\in\{0,1\}^n$ that have to be combined in some way (i.e.\ the basis is given by $f(x,y)$ for some function $f$) it is possible to construct a single-qubit protocol, \QPVBBf, for which the \emph{quantum} effort to break it grows with the amount of \emph{classical} information that an honest party would use.
This type of extension was already considered by Kent, Munro, and Spiller~\cite{OriginalQPV_Kent2011} for a slightly different protocol, and analyzed more in-depth by Buhrman, Fehr, Schaffner, and Speelman~\cite{Buhrman_2013}, with a linear lower bound shown by 
Bluhm, Christandl, and Speelman~\cite{bluhm2022single}\footnote{This linear lower bound is not necessarily tight---the best currently-known attack on \QPVBBf~ requires exponential entanglement in the number of classical bits for most functions $f$.}.(Also see \cite{junge2022geometry} for a related lower bound for a slightly different protocol which combines quantum and classical communication.)
The ability to enhance security by adding classical information makes the functional version of $\mathrm{QPV_{BB84} }$  an appealing candidate for future use.

However, applying QPV experimentally encounters implementation problems, of which two are large enough that  they force us to redesign our protocols. Whereas the transmission of classical information without loss at the speed of light is technologically feasible, e.g.\ via radio waves, the quantum counterpart faces obstacles.
Firstly, most QPV protocols require the quantum information to be transmitted at the speed of light in vacuum, but the speed of light in optical fibers, a medium which would be necessary for many practical scenarios, is significantly lower than in vacuum.  
Secondly, a sizable fraction of photons is lost in transmission in practice. In practice, if telecom wavelength ($\sim$1550nm) single-photon sources are used, and the photons are sent through optical fibers with a loss of $0.15$dB/km~\cite{Cao_2019}. For this loss problem, we can distinguish two recent approaches.
The first of which is to create protocols which are secure against any amount of loss, which we can call \emph{fully loss-tolerant protocols}. This type of protocol was first introduced by Lim, Xu, Siopsis, Chitambar, Evans, and Qi~\cite{lim2016loss}, based on ideas from device-independent QKD. New examples and further analyses of fully loss-tolerant protocols were given by Allerstorfer, Buhrman, Speelman, and Verduyn Lunel \cite{SWAP_protocol_Rene_et_all,allerstorfer2022role}.
These protocols could be excellent realistic candidates for an implementation of QPV, but in the longer term they have two shortcomings: they are not secure against much entanglement---\cite{allerstorfer2022role} for instance show that if security against unbounded loss is required, this is unavoidable---and they require fast transmission of \emph{quantum} information.

In this work we therefore advance another approach, which involves bounding the exact combination of loss rate and error rate that an attacker can achieve, thereby constructing what we may call \emph{partially loss-tolerant protocols}.
The first published example of this is given by Qi and Siopsis~\cite{OtherExtentionBB84Qi_Siopsis2015}, who propose extending the $\mathrm{QPV_{BB84} }$ protocol to more bases to give some loss-tolerance, a proposal which was independently made by Buhrman, Schaffner, Speelman, and Zbinden (available as \cite[Chapter 5]{FlorianThesis}).

One might wonder: because it's desirable to create a protocol which can tolerate a certain level of \emph{measurement error}, and we are only secure against limited loss anyway, why it is not possible to see photon loss as merely another source of error?
The answer is that it is very beneficial to treat these parameters separately because the numbers involved are entirely different. The basic $\mathrm{QPV_{BB84} }$ protocol has a perfect attack, which is possible by claiming a `loss' 50\% of the time. However, the best non-lossy attack has error 0.15.
This means that an experimental set-up which has a loss of 49\% could conceivably enable secure QPV, even though that clearly could never be good enough if the lost photons are counted as error. 
For different protocols, this difference in allowed parameters can be even larger.

In this paper, we study the security of the \QPVBB~ protocol in the lossy case in the No Pre-shared Entanglement (No-PE) model~\cite{Buhrman_2014}---that is, for attackers that do not pre-share entanglement prior to the execution of the protocol, but that are allowed to perform local operations and a single simultaneous round of quantum communication (LOBQC)~\cite{gonzales2019bounds}. Security in this model is motivated by the fact that generating and distributing high-quality entanglement over long distances remains technologically challenging, so the No-PE model captures currently practical limitations.
The best bounds for the non-lossy version of this protocol in the No-PE model come through reducing them to a monogamy-of-entanglement game~\cite{TomamichelMonogamyGame2013}---we extend this game to include a third `loss' response.
Via Semidefinite Programming (SDP), we can then tightly bound the game's winning probability, and therefore \QPVBB's  security under photon loss, showing that a naive mixture of the extremal strategies turns out to give the optimal combination of error and response rate for the attackers\footnote{Note that in our current work we solve the semidefinite programs for a complete range of experimental error probabilities, thereby obtaining an exhaustive characterization of the protocols we consider.
For an experimental implementation only a small number of SDPs would have to be solved, namely the one that corresponds to a small interval around the error and loss present in the experimental set-up, necessitating much less computation time than used in preparing our results.}.
This is a three-player game involving two players, who have complete freedom, and a referee who performs a fixed measurement for every input. {It is well-known how to analyze two-party scenarios using SDP relaxations, but extending these methods to three-party scenarios presents an obstacle. To obtain SDP bounds, we combine the NPA hierarchy \cite{NPA2008} with extra inequalities we derive from the relation between the referee's measurements---the way these inequalities enable us to tackle a three-player problem using SDPs might be of independent interest.}

Importantly, we are able to also show that these results can be adapted to show bounds for the lossy version of \QPVBBf.
We do this by defining a new relaxation of our earlier SDP, which holds for the unentangled case, and show that the numerical bounds for this new SDP can be used to reprove a key lemma in \cite{bluhm2022single} for the case of our protocol.
For instance, if the measurement error is very low, our results show that the single-qubit protocol \QPVBBf~ remains secure against attackers that share an amount of entanglement that is sublinear in the amount of classical bits, even when only, e.g.\ $51\%$ of the photons arrive at the honest party.
The security proof even goes through in the case that  the transmission of quantum information is slow, as long as  classical information can be transmitted fast. 

Moreover, we extend the  \QPVBB~ protocol where $V_0$ and $V_1$ agree on a qubit encoded in $m$ possible different bases over the whole Bloch sphere, a similar extension was proposed in \cite{OtherExtentionBB84Qi_Siopsis2015}, see below for the differences.
Via SDP characterization, although tightness is not guaranteed for $m\geq 3$ in our results, we show that the new protocol becomes more resilient against photon loss when $m$ increases. Again, we are able to show this also holds for the extension to the protocol which lets the basis be determined by more-complicated classical information \QPVBBetafm.
For instance, for $m=5$, if the measurement error is very small the protocol remains secure against attackers sharing less than roughly $\frac{n}{2} $ EPR pairs, even if almost $70\%$ of the photons are lost for the honest parties. 

The main result of this paper can be summarized as follows (see Theorems \ref{theorem q>=n/2-5 for BB84 eta} and \ref{theorem q>=n/2-5 for m-BB84 eta} for a formal version):
\newpage
\begin{theorem} \emph{(Informal)}. The \QPVBBetaf{} protocol is secure against attackers who pre-share a linear amount (in the size of the classical information) of qubits even if photon loss is almost 50\% ($\eta>1/2$). Moreover, the family of its extensions \QPVBBetafm{} achieves security for smaller constant fraction of transmission rate (for $\eta>\frac{1}{m}$).

\end{theorem}

Our results also provide improved upper bounds for the probability of winning some concrete monogamy-of-entanglement (MoE) games \cite{TomamichelMonogamyGame2013}, which encode attacks of  QPV protocols. Finally, we apply our technique proofs to prove security of one-sided device-independent quantum key distribution (DIQKD) BB84 \cite{BB84} for a single round $n=1$, see below, with security under photon loss. However, the interesting case is the asymptotic behavior for arbitrary $n$, which we leave as an open question. 

\paragraph{Comparison to earlier work.}
Our work is directly related to the questions asked by Qi and Siopsis~\cite{OtherExtentionBB84Qi_Siopsis2015}.
The most important difference between that work and ours, is that we are considering a more general security model.
The attack analysis in \cite{OtherExtentionBB84Qi_Siopsis2015} is based upon an assumption that attackers would immediately measure an incoming qubit using a projective measurement, and then communicate classically.
By reducing to (extensions of) monogamy-of-entanglement games, we instead capture \emph{any} quantum action of the attackers, including a round of quantum communication---it's shown in \cite{allerstorfer2022role} that this stronger security model can in fact make a difference in the security.
Critically, this also means that our method of analysis extends to the case where a classical function is combined with a single qubit, and our results make it possible to also prove entanglement bounds, instead of just bounds against unentangled protocols. 
Finally, our results can be reapplied in other settings that can be described by a monogamy-of-entanglement game.

There are two other recent papers by Allerstorfer, Buhrman, Speelman, and Verduyn Lunel~\cite{SWAP_protocol_Rene_et_all,allerstorfer2022role} that study the role of loss in quantum position verification. Those works provide many new results on the capabilities and limitations of fully loss-tolerant protocols. 

Our work builds upon the (unpublished) results of 
Buhrman, Schaffner, Speelman, and Zbinden \cite[Chapter 5]{FlorianThesis}. That work took the initial step to bound QPV protocols in the lossy case using SDPs. However, its analysis was incomplete and numerically significantly less tight, e.g.\ its bounds on $\mathrm{QPV_{BB84} }$ were a factor two worse than the current work. Additionally, we significantly extend its applicability by extending the technique to also show  entanglement bounds for the \QPVBBf~ protocol \cite{Buhrman_2013,bluhm2022single}.

\section{The \texorpdfstring{\QPVBBeta}{QPV-BB84-eta} protocol}\label{section 2 basis no entanglement}
Proposed QPV protocols rely on both relativistic constraints in a $d$-dimensional Minkowski space-time $M^{(d,1)}$ and the laws of quantum mechanics. In the literature, e.g.~\cite{OriginalQPV_Kent2011,Buhrman_2013},  the case $d=1$ is mostly found, i.e.\ verifying the position of \prover{} in a line, since it makes the analysis easier and the main ideas generalize to higher dimensions.
The most general setting for a 1-dimensional QPV protocol is the following: two verifiers $V_0$ and $V_1$, placed on the left and right of \prover{}, send quantum and classical messages to \prover{} at the speed of light, and she has to pass a challenge and reply correctly to them at the speed of light as well.
The verifiers are assumed to have perfect synchronized clocks and if any of them receives a wrong answer or the timing does not correspond with the time it would have taken for light to travel back from the honest prover, they abort the protocol.
Moreover, the time consumed by the prover to perform the challenge (usually a quantum measurement or a classical computation, see below for a detailed example) is assumed to be negligible. 

One of the protocols that has been studied the most is the $\mathrm{QPV_{BB84}}$ protocol, see below, originally introduced in \cite{OriginalQPV_Kent2011}. Here we introduce a variation of it where we consider that the quantum information sent through the quantum channel between $V_0$ and \prover{} can be lost. In practice, a representative example of this will be photon loss in optical fibers.  We also consider that an honest party is assumed to have error rate $p_{err}$, and thus will also respond with a wrong answer sometimes. This error can arise, for example, either from measurement errors or from noise in the quantum channel where the qubit is sent through. We define one round of the lossy-BB84 QPV protocol as follows:

\begin{definition}\label{def:lossyBB84}
Let $\eta$ denote the transmission rate of the qubits sent from $V_0$ to \prover{}. We define one round of the lossy-BB84 QPV protocol, denoted by \QPVBBeta, as follows:
\begin{enumerate}
    \item $V_0$ and $V_1$ secretly agree on  random bits $v,x\in\{0,1\}$. Then, $V_0$ prepares the qubit state $\ket\phi=H^x\ket{v}\in\{\ket0,\ket1,\ket+\,\ket-\}$, where $H$ denotes the Hadamard transformation.
    \item $V_0$ and $V_1$ send $\ket\phi$ and $x$ to \prover{}, respectively, with both signals propagating at the speed of light in vacuum. The two verifiers coordinate so that the qubit and the classical bit arrive to \prover{} simultaneously.
    \item Immediately, \prover{} measures the received qubit in the basis $x$ and broadcasts her outcome, either 0 or 1, to $V_0$ and $V_1$. If she did not receive the qubit, i.e.\  the photon was lost, she sends $\perp$. Therefore, the possible answers from \prover{} are $v_{\textsf{P}}\in\{0,1,\perp\}$. 
    \item If 
    \begin{enumerate} \item $V_0$ and $V_1$ receive their respective answers at the time corresponding with the claimed location, and they are equal, i.e.\  both receive the same $v_{\textsf{P}}$, then, if
        \begin{itemize}
            \item $v_{\textsf{P}}=v$, the verifiers record `\textsc{Correct}', denoted by `\textsc{c}', 
            \item $v_{\textsf{P}}=1-v$, the verifiers record `\textsc{Wrong}', denoted by `\textsc{w}', 
            \item $v_{\textsf{P}}=\perp$, the verifiers record `\textsc{No photon}',  denoted by `$\perp$', 
        \end{itemize}
        \item 
        otherwise, they record `\textsc{Abort}',  denoted by `$\lightning$', and abort the protocol rejecting the location.
    \end{enumerate}
\end{enumerate}
\end{definition}

See Fig.~\ref{fig:QPV_BB84} for a schematic of the original \QPVBBeta{} protocol. By setting $\eta=1$ and $p_{\mathrm{err}}=0$, one recovers the original \QPVBB{} protocol.

\begin{figure}[htbp]
    \centering
    \begin{tikzpicture}[node distance=3cm, auto]
    \node (A) {$V_0$};
    \node [above=0.5cm of A, yshift=-6pt] (line0) {$|$};
    \node [left=1cm of A] {};
    \node [right=of A] (B) {\prover{}};
    \node [above=0.5cm of B, yshift=-5pt] (P0) {$|$};
    \node [above=0.5cm of B, yshift=-15pt] (P00) {};
    \node [right=of B] (C) {$V_1$};
    \node [above=0.5cm of C, yshift=-7pt, xshift=-2pt] (line0) {$|$};
    \node [right=1cm of C] {};
    \node [below=of A] (D) {};
    \node [below=of B] (E) {};
    \node [below=of C] (F) {};
    \node [below=of D] (G) {};
    \node [below=of E] (H) {};
    \node [below=of F] (I) {};
    \node [left= 6cm of E] (J) {};
    \node [below= 3cm of J] (K) {};
    \node [above= 3cm of J] (L) {};
    
    \node [above=0.5cm of A] (posV00) {};
    \node [left=1cm of posV00] (posV0) {};
    \node [above=0.5cm of C] (posV11) {};
    \node [right=1cm of posV11] (posV1) {};
    \draw [->] (posV0) -- (posV1) node[midway,yshift=5pt] {position};

    \draw [->, transform canvas={xshift=0pt, yshift = 0 pt}, quantum] (A) -- (E) node[midway] (x) {} ;
    \draw [->] (C) -- (E);
    \draw [->] (E) -- (I) node[midway] (q) {$v_{\textsf{P}}\in\{0,1,\perp\}$};
    \draw [->] (E) -- (G);

    \draw [->] (L) -- (K) node[midway] {time};

    \node[left=1cm of x, transform canvas={xshift=+ 2pt, yshift = +2 pt}] {$H^x\ket{v}$};
    \node[right = 3.3cm of x, transform canvas={xshift=+ 2pt, yshift = +2 pt}] {$x \in \{0,1\}$};
    \node[left = 3.3cm of q, xshift=-5pt] {$v_{\textsf{P}}\in\{0,1,\perp\}$};
\end{tikzpicture}
\caption{ Schematic representation of steps~2\ and~3\ of the \QPVBBeta{} protocol, where straight lines represent classical information and undulated lines represent quantum information.  
}
\label{fig:QPV_BB84}
\end{figure}

In order to \emph{accept} or \emph{reject} the location, the verifiers run $r$ sequential rounds of \QPVBBeta{}. Let $r_{\textsc{C}}$, $r_{\textsc{W}}$, $r_{\perp}$ and $r_{\lightning}$ denote the number of times that the verifiers output `\textsc{Correct}', `\textsc{Wrong}', `\textsc{No photon}' and `\textsc{Abort}', respectively,  after the $r$ rounds. Given transmission rate $\eta$ and  error probability $p_{err}$, the verifiers expect 
\begin{equation}\label{eq succesful protocol}\begin{split}
   r_{\textsc{C}}&\approx r \eta (1-p_{err}), \hspace{1.5cm} r_{\textsc{W}}\approx r \eta p_{err},\\
    r_{\perp}&\approx r (1- \eta), \hspace{2.2cm} r_{\lightning}=0.
\end{split}
\end{equation}

The symbols $\approx$ are due to the fact that in an implementation, the above values would only be reached for $r\to\infty$.  To make a decision based on the observed outcomes, one must first define what it means for them to be sufficiently ``close” to the expected values in~\eqref{eq succesful protocol}. To this end, we introduce a binary test  (Definition~\ref{def:testBB84}) that determines whether the prover’s location should be \emph{accepted} or \emph{rejected} based on the received data.

Assume that after $r$ sequential repetitions, the verifiers did not receive any `\textsc{Abort}' answers, otherwise they \emph{reject} the location. Notice that the verifiers remain strict with $r_{\lightning}$, since an honest party would never send a different answer to $V_0$ that to $V_1$, and using current technology, her answers will arrive on time, and therefore a single `$\lightning$' answer will suffice to reject the location of the prover.   Denote by $\textbf{a}_i\in\{\textsc{c},\perp,\textsc{w}\}$ whether the answer they recorded in the round $i$ was `\textsc{Correct}', `\textsc{No photon}', or `\textsc{Wrong}'. Consider the payoff function 
    $T_i(\textbf{a}_i)=\sin^2\frac{\pi}{8}\textbf{1}_{\textsc{c}}(\textbf{a}_i)-\sin^2\frac{\pi}{8}\textbf{1}_{\perp}(\textbf{a}_i)-\cos^2\frac{\pi}{8}\textbf{1}_{\textsc{w}}(\textbf{a}_i)$,
for every round $i$ of the protocol.  Let 
\begin{equation}\label{eq:score_BB84}
    \Gamma_r=\sum_{i=1}^r T_i(\textbf{a}_i),
\end{equation}
be the total \emph{score} after $r$ rounds. Fix a parameter $\varepsilon_{\textrm{h}} > 0$, which determines the confidence level of the test that we introduce. That is, it will succeed with probability at least $1 - \varepsilon_{\textrm{h}}$ when interacting with an honest prover. 

\begin{definition}\label{def:testBB84} Let $\varepsilon_{\textrm{h}}>0$. For the \QPVBBeta~protocol executed sequentially $r$ times, we define the acceptance test $\mathsf{T}^{r\mathrm{BB84}}_{\varepsilon_{\textrm{h}}}$, also referred to as the decision criterion, as follows:     the verifiers \emph{accept} the prover’s location if
\begin{equation}\label{eq:test-BB84-qpv}
    \Gamma_r \geq r (\alpha-\delta),
\end{equation}
where $\delta=\sqrt{\frac{\cos^4\frac{\pi}{8}\ln(1/\varepsilon_{\textrm{h}})}{r}}$. 
Otherwise, they \emph{reject}. 
\end{definition}

We will next see that the test $\mathsf{T}^{r\mathrm{BB84}}_{\varepsilon_{\textrm{h}}}$ is \emph{complete}: an honest implementation of the protocol will be accepted, except with negligible probability. For an honest prover ($hp$), we have that, for every round  $i$, 
\begin{equation}
    \mathbb E[T_i^{hp}] = \sin^2\frac{\pi}{8}\eta(1-p_{err})-\sin^2\frac{\pi}{8}(1-\eta)-\cos^2\frac{\pi}{8}\eta p_{err}=:\alpha(\eta,p_{err}),
\end{equation}
and therefore, $\mathbb E [\Gamma_r^{hp}]=r\alpha$.  For simplicity, we will assume the dependence on $\eta$ and $p_{err}$ in $\alpha$ implicit. Then, by Hoeffding's inequality~\cite{hoeffding1963}, see Lemma~\ref{lemma:Hoeffding}, an honest prover will be accepted except with small probability at most $\varepsilon_{\textrm{h}}$. 

\begin{lemma} 
\label{lemma:Hoeffding}Let $T_1,\ldots,T_r$ be independent bounded random variables with $T_i\in[x_a,x_b]$, for all $i\in\{1,\ldots,r\}$, with $-\infty< x_a\leq x_b<\infty$. Then, for all $\delta\geq 0$, the following holds:
\begin{equation}
    \pr{\frac{1}{r}\sum_{i=1}^r(T_i-\mathbb E[T_i])\geq \delta}\leq e^{-\frac{2 r \delta^2 }{(x_b-x_a)^2}}.
\end{equation}
\end{lemma}

In addition, we will see that for any attackers in the No-PE model---which imposes only the restriction that no entanglement is shared prior to the execution of the protocol in each round---the test $\mathsf{T}^{r\mathrm{BB84}}_{\varepsilon_{\textrm{h}}}$ is \emph{sound}: any attackers in this model will be rejected with high probability---in particular, we will show exponentially high probability (in $r$).

\begin{remark} \label{remark constructing test}   This test is engineered from the analysis of the correlations attainable by attackers in Section~\ref{section 2 basis no entanglement}, ensuring that they are rejected except with negligible probability; see the proof of Theorem~\ref{thm sequential repetition no-entanglement} for details of its construction.
\end{remark}

We say that a protocol is \emph{secure} in a given model if it admits a test that is both complete and sound (for any adversaries acting according to the model). Next, we will show that $\mathsf{T}^{r\mathrm{BB84}}_{\varepsilon_{\textrm{h}}}$ fulfills both conditions for a certain range of transmission rate $\eta$ and qubit error $p_{err}$ and thus showing that \QPVBBeta{} is secure.

In order to prove security for \QPVBBeta, we will first show that, in a round, for a range of values of $\eta$ and $p_{\mathrm{err}}$, any adversaries in the No Pre-shared Entanglement (No-PE) model are unable to reproduce the outcome probabilities specified in~\eqref{eq succesful protocol}. This provides intuitive evidence that attackers cannot mimic the behavior of an honest prover, and that their actions can be statistically distinguished. We will later show that the test $\mathsf{T}^{r\mathrm{BB84}}_{\varepsilon_{\textrm{h}}}$ achieves this.

For the security analysis, we will consider the \emph{purified} version of \QPVBB, which is equivalent to it. In this version, instead of $V_0$ sending BB84 states, $V_0$ prepares an EPR pair $\ket{\Phi^+}_{VP}$ and sends the register $P$ to the prover and keeps register $V$. At a later moment, $V_0$ performs the measurement  $\{H^{x}\ketbra{v}{v}_VH^{x}\}_{v\in\{0,1\}}$ in his register $V$. In this way, the verifiers delay the choice of basis in which the qubit is encoded, which, in contrast to the above \emph{prepare-and-measure} version, will make any attack independent of the state sent by $V_0$.

Whereas it is well-known that adversaries sharing an unbounded amount of entanglement can always successfully break any QPV protocol \cite{Buhrman_2014}, the proof of the security of the \QPVBB{} protocol under attackers that do not pre-share entanglement \cite{Buhrman_2014} opened a branch of study, motivated by adversarial models that restrict attackers in a more realistic way.  The most generic attack to \QPVBBeta{} (purified version) in the No-PE model consists of:
\begin{enumerate}
\item 
Alices and Bob prepare an arbitrary quantum state $\sigma_{A_0}$, and Alice holds it. Since the attackers do not pre-share entanglement, any quantum operation that Bob could later perform as a function of $x$ can be included in Alice's operation (see e.g.~\cite{Buhrman_2014,TomamichelMonogamyGame2013}).

\item 
Alice intercepts the qubit register $P$, and applies an arbitrary quantum channel $\mathcal E_{PA_0\rightarrow AB}$ to it, and to $\sigma_{A_0}$. The subscript $PA_0\rightarrow AB$ indicates that the map has input and output registers $PA_0$ and $AB$, respectively. Let $\rho_{V\!AB}$ be the resulting state, that is,
\begin{equation}
    {\mathbb I_V\otimes \mathcal{E}_{PA_0\rightarrow AB}(\ketbra{\Phi^+}{\Phi^+}_{VP}\otimes\sigma_{A_0})=\rho_{V\!AB}}. 
\end{equation}
Alice possesses registers $A$ and $B$, and $V_0$ holds~$V$.   On the other side, Bob intercepts $x$, and copies it.
\item 
Alice keeps register $A$ of $\rho$ and sends register $B$ to Bob. Bob keeps a copy of $x$ and sends a copy to Alice. 
\item 
After one round of simultaneous communication, each party performs a POVM $\{A^x_a\}_{a\in\{0,1,\perp\}}$ and $\{B^x_b\}_{b\in\{0,1,\perp\}}$, on registers $A$ and $B$ of the state $\rho$, and they send  answers $a,b$, respectively, to their corresponding closest verifier. 
\end{enumerate}
See Fig.~\ref{fig:attack_lossyBB84} for a schematic representation of a generic attack to \QPVBBeta{} in the No-PE model. The tuple $\mathbf{S}_\eta:=\{\rho,A^x_a,B^x_b\}_{x,a,b}$ will be called a \emph{strategy} for~\QPVBBeta.

\begin{figure}[ht]
    \centering
    \begin{tikzpicture}[node distance=3cm, auto]
    \node (V0) {$V_0$};   
    \node [above=0.5cm of V0, xshift=-2pt, yshift=-7pt] (P0) {$|$};
    \node[right=4.5cm of P0, yshift=-1pt, xshift=3pt]{$|$}; 
    
    \node [left=2cm of V0](t0){};
    \node [below=7.5cm of t0](t1){};
    \draw [->] (t0) -- (t1) node[midway] {time};

    \node [right=1cm of V0] (A) {Alice};
    \node [right=1.5cm of A, xshift=19pt] (Pos) {};
    \node [above=0.5cm of A, yshift=-6pt] (P01) {$|$};
    
    \node [right= 5cm  of A] (B) {Bob};
    \node [above=0.5cm of B, yshift=-6pt] (P02) {$|$};
    
    \node [right= 2.5cm  of A] (P) {};
    \node [right=1cm of B] (V1) {$V_1$};
    \node [right=1.5cm of P02, xshift=-3pt,yshift=0pt] (P03) {$|$};
    
    \node [below =0.6cm of V0](below_V0){};
    \node [below =0.5cm of V1](below_V1){};

    \node [below =6cm of V0](V_ABans){};
    \node[below =0.5cm of V0](V00){};
    \node[below =0.1cm of V00](V001){};

    

    \node[below =1.2cm of V00, xshift=-10pt](V0001){};

    \node[right=1.2cm of V0001](registerA){$A$};
    \node[right=3.8cm of V0001](registerB){$B$};
    \node[right=5.6cm of V0001](copyz){$x$};
    \node[right=8.4cm of V0001](copyz2){$x$};

    \node [below =7.1cm of V0, xshift=24pt](V0_ans){};
    \node [below =7.1cm of V1, xshift=-24pt](V1_ans){};
    
    \node [below =0.2cm of P](middle){};
    \node [above =0.25cm of middle](middle0){};

    \node [below =0.1cm of middle](middle0){};
    \node [below =0.5cm of A](A_intercepts_Q){};
    \node [below =0.5cm of B](B_intercepts_Q){};
    \node [below =0.5cm of A](A_intercepts_x){$\mathbb I_V\otimes \mathcal{E}(\ketbra{\Phi^+}{\Phi^+}_{VP}\otimes\sigma_{A_0})=\rho_{V\!AB}$};
    \node [right =0.1cm of A_intercepts_x, xshift=60pt](operation){};
    \node [below =0.5cm of B](B_intercepts_y){$x$};

    \node [above=0.5cm of A] (posV00) {};
    \node [left=2.5cm of posV00] (posV0) {};
    \node [above=0.5cm of B] (posV11) {};
    \node [right=2.5cm of posV11] (posV1) {};
    \draw [->] (posV0) -- (posV1) node[midway, yshift=5pt] {position};

    \node [below =5cm of A](A_commits){};
    \node [below =5cm of B](B_commits){};
    \node [below =6cm of A](A_answers){$\{A^{x}_a\}_a$};
    \node [below =0.1cm of A_answers, ](A_POVM){};
    \node [below =6cm of B](B_answers){$\{B^{x}_b\}_b$};
    \node [below =0.1cm of B_answers](B_POVM){};

    \node [left=0.3cm of A](leftA){};
    \node[below=5.6cm of leftA](cA){};
    \node [left=0.2cm of A](leftA2){};
    \node[below=6.6cm of leftA2](aA){};

    \node [right=0.3cm of B](leftB){};
    \node[below=5.6cm of leftB](cB){};
    \node [right=0.2cm of B](leftB2){};
    \node[below=6.6cm of leftB2](bB){};
    
    \node [below =6.2cm of V0](V0bis){};
    \node [below =6.2cm of V1](V1bis){};
    \node [below =7.2cm of V0](V0bis1){};
    \node [below =7.2cm of V1](V1bis1){};

    \draw [->, transform canvas={xshift=0pt, yshift = 0 pt}, quantum] (A_intercepts_x) -- (B_answers) node[midway] (x) {};

    \draw [->, transform canvas={xshift=0pt, yshift = 0 pt}] (B_intercepts_y) -- (A_answers) node[midway] (x) {};

    \draw [->, quantum] (A_intercepts_x) -- (A_answers) {};

    \draw [->] (B_intercepts_y) -- (B_answers) {};
    \draw [->] (A_answers) -- (V0_ans) {};
    \draw [->] (B_answers) -- (V1_ans) {};


    \node [right =2.5cm of A_answers](middle2){};
    \node [below =5.2cm of middle, yshift=-16pt](middle3){$\rho_{V\!AB}$};

    \begin{scope}[on background layer]
    \fill[gray!30, opacity=0.45, rounded corners=2pt] 
        ($(A_answers) + (-2.5,0.25)$) rectangle ($(B_answers) + (0.7,-0.25)$);
\end{scope}

\end{tikzpicture}
\caption{Schematic representation of a generic attack to \QPVBBeta{} in the No-PE model. Straight lines represent classical information, and undulated lines represent quantum information. The gray-shaded region represents the tripartite quantum state $\rho_{VAB}$.}
\label{fig:attack_lossyBB84}
\end{figure}

To illustrate how photon loss can be exploited by adversaries, consider the following example. Suppose the verifiers expect that an honest prover would have a transmission rate of at most~$\eta \leq \frac{1}{2}$, meaning that half or more of the photons are lost in transit. In contrast, since the adversaries may position themselves closer to the verifiers and intercept most of the transmitted photons, we give them the extra power of intercepting all of them. The attack proceeds as follows: Alice selects a random bit $\Tilde{x} \in \{0,1\}$, measures the received qubit in the $\Tilde{x}$ basis, and forwards the outcome and $\Tilde{x}$ to Bob. After a single round of simultaneous communication, both attackers know whether the guess was correct. If $\Tilde{x} = x$, they return the measurement outcome; otherwise, they claim photon loss, i.e.\  output~$\perp$. Since the basis guess is correct with probability~$1/2$, this strategy succeeds with probability~$1/2$---matching the expected success rate of an honest prover under $\eta = \frac{1}{2}$. We refer to this strategy as the \emph{guessing strategy}, denoted~$\mathbf{S}_{\mathrm{guess}}$.

The probabilities that the verifiers, after the attackers' actions (for the random variable $\textsc{V}_{\textsc{AB}}$) record `\textsc{Correct}', `\textsc{Wrong}, `\textsc{No photon}', and `\textsc{Abort}' (different answers)  are thus, respectively, given by 
\begin{align}
q_{\textsc{c}}:=\pr{\textsc{V}_\textsc{AB}=\textsc{c}}&=\frac{1}{2}\sum_{a,x\in \{0,1\}}\tr{\rho V^x_a\otimes A^x_a\otimes B^x_a},\\
      q_{\textsc{w}}:=\pr{\textsc{V}_\textsc{AB}=\textsc{w}}&=\frac{1}{2}\sum_{a,x\in \{0,1\}}\tr{\rho V^x_a\otimes A^x_{1-a}\otimes B^x_{1-a}},\\
     q_{\perp}:=\pr{\textsc{V}_\textsc{AB}=\perp}&=\frac{1}{2}\sum_{x\in\{0,1\}}\tr{\rho \mathbb I_{V}\otimes A^x_{\perp}\otimes B^x_{\perp}},\\
      q_{\lightning}:=\pr{\textsc{V}_\textsc{AB}=\lightning}&=\frac{1}{2}\sum_{a\neq b\in \{0,1,\perp\},x\in\{0,1\}}\tr{\rho \mathbb{I}\otimes A^x_a\otimes B^x_b}\label{eq abort}.
\end{align}
In every round $i\in\{1,\ldots,r\}$ of the protocol (executed sequentially $r$ times), the attackers will pick a strategy $\mathbf S^i_\eta$ that can depend on the previous rounds. A strategy $\mathbf S^i_\eta$ will induce a probability vector $\textbf{q}^i=(q_{\textsc{c}}^i,q_{\perp}^i,q_{\textsc{w}}^i,q_{\lightning}^i)$.  In an honest implementation, 
\begin{equation}\label{eq succesful 1 round attack}\begin{split}
   &\pr{\textsc{V}_\textsc{AB}=\textsc{c}} =\eta (1-p_{err})=:p_{\textsc{c}}, \hspace{1cm} \pr{\textsc{V}_\textsc{AB}=\textsc{w}}=\eta p_{err}=:p_{\textsc{w}},\\
    &\pr{\textsc{V}_\textsc{AB}=\perp} = 1- \eta=:p_{\perp}, \hspace{2cm}  \pr{\textsc{V}_\textsc{AB}=\lightning}=0=:p_{\lightning}. 
\end{split}
\end{equation}
This defines a probability vector $\textbf{p}_{hp}=(p_{\textsc{c}},p_{\perp},p_{\textsc{w}},p_{\lightning})$ that the honest prover would ideally reproduce. An attack is \emph{successful} if the verifiers cannot distinguish if their data came from the distribution $\textbf{p}_{hp}...\textbf{p}_{hp}$ ($r$ times) or from $\textbf{q}^1...\textbf{q}^r$.  As mentioned above, in Section~\ref{section:sequential rep BB84-eta} we provide a test to distinguish between these two cases based on the received data.

\subsection{Exact loss-tolerance of \texorpdfstring{\QPVBBeta}{QPV-eta}}

A single round attack to \QPVBBeta~ for $\eta=1$ can be identified with a so-called monogamy-of-entanglement (MoE) game, introduced by Tomamichel, Fehr, Kaniewski and Wehner in \cite{TomamichelMonogamyGame2013}, formalized below in Definition~\ref{def MoE game} and generalized by Johnston, Mittal, Russo, and Watrous~\cite{Extended_non-local_games_andMoE_games}. The authors of \cite{TomamichelMonogamyGame2013} showed that the optimal probability that the attackers are correct in one round of the \QPVBB~protocol is $\cos^2(\frac{\pi}{8})$. Moreover, they show strong parallel repetition, i.e.\  if \QPVBB~is executed $m$ times in parallel, the optimal probability that the attackers are answer correct is $(\cos^2(\frac{\pi}{8}))^m$. Here we consider an \emph{extension} of a MoE game, which we will call \emph{lossy} MoE game, that will capture a round attack of \QPVBBeta---specifically step~4---and extensions of it, see Section~\ref{section m basis no entanglement}.

\begin{definition}\label{def MoE game} Let $\mathcal{X}$ and  $\mathcal{V}$ be finite non-empty alphabets. Let $V^{x}_{v}$ be POVMs of the same finite dimension for all ${(x,v)\in\mathcal{X}\times\mathcal{V}}$, and let $\mathcal M:=\{V^x_v\}_{x,v}$. A \emph{lossy} monogamy-of-entanglement game with parameter ${\eta\in[0,1]}$,   played by a referee, with associated Hilbert space $\mathcal H_R$, and two collaborative parties Alice and Bob, denoted by 
\begin{equation}
    \textsf{G}_\eta:=(\eta,\mathcal M),
\end{equation}
is described as follows: 
\begin{enumerate}
    \item Alice and Bob, with associated Hilbert spaces $\mathcal{H}_A$ and $\mathcal{H}_B$, respectively, prepare a quantum state $\rho_{RAB}\in\mathcal S(\mathcal{H}_R\otimes\mathcal{H}_A\otimes\mathcal{H}_B)$. 
    \item They send register $R$ to the referee, holding on registers $A$ and $B$, respectively. The two parties are no longer allowed to communicate. 
    \item The referee chooses $x\in\mathcal{X}$ uniformly at random and measures register $R$ using $\{V_v^x\}_v$ to obtain the measurement outcome $v$. Then, he announces $x$ to Alice and Bob.
    \item The collaborative parties make a guess for $v$ and they win the game if and only if both either guess $v$ correctly or both answer $\perp$ (with probability $1-\eta$).     In order to obtain the answers, Alice and Bob perform POVMs $\{A_a^x\}_{a\in\mathcal{V}\cup\{\perp\}}$ and $\{B_a^x\}_{a\in\mathcal{V}\cup\{\perp\}}$ on their local registers, respectively. The tuple $\mathbf{S}_\eta:=\{\rho_{RAB},A_v^x,B_v^x\}_{v\in\mathcal{V}\cup\{\perp\},x\in\mathcal{X}}$ will be called a \emph{strategy} for $\mathcal{G}_\eta$. 
\end{enumerate}
The lossy constraint, i.e.\  answering $\perp$ with probability $1-\eta$, is given by  
\begin{equation}\label{eq:lossy_constraint}
    \frac{1}{\abs{\mathcal{X}}}\sum_{x\in\mathcal{X}}\tr{\rho_{RAB}\mathbb{I}_V\otimes A^x_{\perp}\otimes B^x_{\perp}}=1-\eta.
\end{equation}
\end{definition}

\noindent See Fig.~\ref{fig:lossyMoEgame} for a schematic representation of a lossy MoE game. A monogamy-of-entanglement game $\mathsf{G}$ as introduced in \cite{TomamichelMonogamyGame2013} is recovered by setting $\eta=1$, i.e.\  $\mathsf{G}=\mathsf{G}_{\eta=1}$. The winning probability of a lossy MoE game $\mathsf{G}_\eta$, given a strategy ${\mathbf{S}_\eta=\{\rho_{RAB},A_v^x,B_v^x\}_{v\in\mathcal{V}\cup\{\perp\},x\in\mathcal{X}}}$, is given by 
\begin{equation}\label{eq:winnin_lossy_MOE}
    \omega(\mathsf{G}_\eta,\mathbf{S}_\eta)=\frac{1}{\abs{\mathcal Z}}\sum_{v,x}\tr{\rho_{RAB}V^x_v\otimes A^x_v\otimes B^x_v}.
\end{equation}
The optimal winning probability is given by the supremum of \eqref{eq:winnin_lossy_MOE} over all possible strategies, i.e.\  
\begin{equation}
\omega(\mathsf{G}_\eta):=\sup_{\mathbf{S}_\eta}\omega(\mathsf{G}_\eta,\mathbf{S}_\eta).
\end{equation}

\begin{figure}[htbp]
    \centering
    \begin{tikzpicture}[node distance=1cm, auto]
    \node (xV) {$x$};
    \node [right=3cm of xV] (emptyx) {};
    \node [right=0.4cm of emptyx] (x1) {$x$};
    \node [right=3cm of x1] (emptyx1) {};
    \node [right=0.4cm of emptyx1] (x2) {$x$};
    
    \node [below=0.9 of xV] (belowxV){};
    \node [below=0.3cm of xV] (belowxVbefore){};
    \node [left=0.0 of belowxV] (leftbelowxV){};
    \node [above=0.1 of belowxV, yshift=-5pt] (aboveleftbelowxV){$R$};
    \node [below=0.4 of aboveleftbelowxV] (aboveleftbelowxV_meas){};
    \node [below=0.3 of aboveleftbelowxV,yshift=5pt] (aboveleftbelowxV_V){$\{V_v^x\}_v$};
    \node [below=0.3cm of xV] (belowxVbefore){};
    \node [below=1.6cm of xV] (Vans1){};
    \node [below=2.2cm of xV] (Vans2){};
    \node [below=2.3cm of xV] (Vans3){$v$};
    \draw [->] (Vans1) -- (Vans2);

    \node [below=of emptyx] (belowemptyx){};
    \node [below=0.9cm of x1] (belowx1){};
    \node [left=0.0 of belowx1] (leftbelowx1){};
    \node [below=0.3 of x1,yshift=-2pt] (belowemptyx1){$A$};
    \node [below=0.3 of belowemptyx1,yshift=4pt] (MeasAlice){$\{A_a^x\}_a$};
    \node [above=0.1 of leftbelowx1] (aboveleftbelowx1){};
    \node [below=0.3cm of x1] (belowx1before){};
    \node [below=1.6cm of x1] (Aans1){};
    \node [below=2.2cm of x1] (Aans2){};
    \node [below=2.3cm of x1] (Aans3){$a$};
    \draw [->] (Aans1) -- (Aans2);

    \node [below=0.9cm of  x2] (belowx2){};
    \node [left=0.5 of belowx2] (leftbelowx2){};
    \node [below=0.3 of x2,yshift=-2pt] (belowemptyx2){$B$};
    \node [below=0.3 of belowemptyx2,yshift=4pt] (MeasBob){$\{B_b^x\}_b$};
    \node [above=0.1 of leftbelowx2] (aboveleftbelowx2){};
    \node [below=0.3cm of x2] (belowx2before){};
    \node [below=1.6cm of x2] (Bans1){};
    \node [below=2.2cm of x2] (Bans2){};
    \node [below=2.3cm of x2,yshift=4pt] (Bans3){$b$};
    \draw [->] (Bans1) -- (Bans2);


     \node [below=0.6cm of belowxV] (AlicebelowxV){};
     \node [below=0.6cm of belowx1] (Alicebelowx1){};
     
     
     \node [below=1.7cm of belowxV] (2belowxV){};
     \node [below=1.7cm of belowx2] (2belowx2){};

     \node[mark size=2.5pt,color=black] at (belowxV) {\pgfuseplotmark{*}};
     \node[mark size=2.5pt,color=black] at (belowx1) {\pgfuseplotmark{*}};
     \node[mark size=2.5pt,color=black] at (belowx2) {\pgfuseplotmark{*}};

     \node [right=1.8cm of belowxV] (rho){};
     \node [above=0.0cm of rho, yshift=-12pt] (rho1){$\rho_{RAB}$};

     \begin{scope}[on background layer]
    \fill[gray!30, opacity=0.45, rounded corners=2pt] 
        ($(belowxV) + (-0.5,0.25)$) rectangle ($(belowx2) + (0.5,-0.25)$);
    \end{scope}
     
     \newcommand\Square[1]{+(-#1,-#1) rectangle +(#1,#1)}
     \draw (belowxV) \Square{20pt} ; 
     \draw (belowx1) \Square{20pt} ;
     \draw (belowx2) \Square{20pt} ;

     \draw [->] (xV) -- (belowxVbefore);
     \draw [->] (x1) -- (belowx1before);
     \draw [->] (x2) -- (belowx2before);


\end{tikzpicture}
\caption{Schematic representation of a lossy monogamy-of-entanglement game. The gray-shaded region represents the tripartite quantum state $\rho_{RAB}$ prepared by Alice and Bob and shared amongst the three parties. The referee, Alice and Bob are denoted by $R$, $A$ and $B$, respectively.  }
\label{fig:lossyMoEgame}
\end{figure}

As shown in~\cite{TomamichelMonogamyGame2013}, any strategy can be purified in the sense that, by enlarging the Hilbert spaces if necessary, one may assume $\rho_{RAB} = \ketbra{\psi}{\psi}_{RAB}$ for some pure state $\ket{\psi} \in \mathcal{H}_R \otimes \mathcal{H}_A \otimes \mathcal{H}_B$, and that the local measurements $\{A_a^x\}_a$ and $\{B_b^x\}_b$ are projective for all $x \in \mathcal{X}$. From now on, we will assume our strategies are of this purified form.

Next, we describe the $\mathsf{G}^{BB84}$ game, originally introduced in ~\cite{TomamichelMonogamyGame2013}, which was used by the authors to show security of \QPVBB~in the No-PE model. We will later use its lossy version to show security of \QPVBBeta.

\begin{ex} \label{ex:BB84MoEGame} The \emph{BB84 monogamy-of-entanglement game} is described as follows. Alice and Bob prepare a quantum state and send a qubit from it to the referee, who chooses uniformly at random to measure the qubit either in the computational or the Hadamard basis. Upon knowing the choice of basis, the task of Alice and Bob is to guess the measurement outcome. Using the above terminology, the game is given by
\begin{equation}
    \mathsf{G}^{BB84}=\left(\eta=1,\mathcal{M}\right),
\end{equation}
where $\mathcal{M}=\{V_0^{x},V_1^{x}\}_{x\in\{0,1\}},$ with 
\begin{equation}\label{eq V_0^x, V_1^x}
    V_0^{x}=H^x\ketbra{0}{0}H^x\text{ and }V_1^{x}=H^x\ketbra{1}{1}H^x,
\end{equation}
where $H$ is the Hadamard transformation. Varying $\eta\in[0,1]$ defines the lossy BB84 MoE game, denoted by $\mathsf{G}^{BB84}_\eta$. 
\end{ex}

The $\mathsf{G}^{BB84}$ game can be associated with an attack to the \QPVBB{} protocol in the sense that having a strategy to break the protocol in the No-PE model implies having a strategy for the MoE game \cite[Section 5]{TomamichelMonogamyGame2013}, and therefore
\begin{equation}
    \pr{\textsc{V}_\textsc{AB}=\textsc{c}}\leq \omega(\mathsf{G}^{BB84}).
\end{equation}
In a similar manner, it follows that having a  strategy $\mathbf{S}^{BB84}_{\eta}$ for $\mathsf{G}^{BB84}_\eta$ implies having a No-PE strategy for \QPVBBeta. The idea is that in step 4\ in the attack described above, the attackers start with a tripartite state shared among the verifier, Alice and Bob and their task is to correctly guess the measurement outcome of the measurement which is performed on the verifier's register.  In the \QPVBB~protocol case, verifier $V_0$ plays the role of the referee with associated Hilbert space $\mathcal{H}_V=\mathbb{C}^2$, with $\mathcal{X}=\{0,1\}$ and $\mathcal{V}=\{0,1\}$. In the purified version of \QPVBBeta{}, $V_0$ performs the measurement $\mathcal{M}=\{V_0^{x}=H^x\ketbra{0}{0}H^x,V_1^{x}=H^x\ketbra{1}{1}H^x\}_{x\in\{0,1\}}$,  and the two collaborative parties, who correspond to the attackers, want to break the protocol by guessing the verifier's outcome.  We refer the reader to step~4 of the attack on \QPVBBeta{}, illustrated in Fig.~\ref{fig:attack_lossyBB84} and note that it corresponds to a lossy MoE game\footnote{Although not represented in the figure, $V_0$ performs the measurement $\{V^x_v\}_v$ in his local register $V$ of the state $\rho_{VAB}$.}, depicted in Fig.~\ref{fig:lossyMoEgame}. The strategy $\mathbf{S}_{\textsc{tfkw}}=\{\ketbra{\psi}{\psi},A_a^x=\delta_{a0},B_a^x=\delta_{a0}\}$, where $\ket{\psi}_V=\cos\frac{\pi}{8}\ket{0}_V+\sin\frac{\pi}{8}\ket{1}_V$, gives the optimal probability of winning the $\mathsf{G}^{BB84}$ game \cite{TomamichelMonogamyGame2013} (see discussion below) and thus the optimal probability of being correct attacking the \QPVBB~protocol is upper bounded by
\begin{equation}
    \pr{\textsc{V}_\textsc{AB}=\textsc{c}}=
    \frac{1}{2}\sum_{a,x}\tr{\ketbra{\psi}{\psi}V^x_a\otimes A^x_a\otimes B^x_a}=\cos^2\frac{\pi}{8}.
\end{equation}
This strategy also gives 
\begin{equation}
     \pr{\textsc{V}_\textsc{AB}=\textsc{w}}=\sin^2\frac{\pi}{8}, \hspace{1cm}  \pr {\textsc{V}_\textsc{AB}=\lightning}=0. 
\end{equation}
Comparing these probabilities with \eqref{eq succesful 1 round attack}, and considering that $\eta=1$, the attackers could successfully attack one round of the \QPVBB~protocol if $p_{err}\geq\sin^2\frac{\pi}{8}\simeq0.15$. In terms of an attack to \QPVBB, the strategy $\mathbf{S}_{\textsc{tfkw}}$ comes from the attack described as follows: Alice intercepts the state sent by $V_0$ and measures it in the \emph{Breidbart} basis $\{\cos\frac{\pi}{8}\ket{0}+\sin\frac{\pi}{8}\ket1, \sin\frac{\pi}{8}\ket{0}-\cos\frac{\pi}{8}\ket1\}$---associated with $0$ and $1$, respectively---i.e.\  a projective measurement onto the state that has maximum overlap with $\ket{0}$ and $\ket+$. Then, Alice broadcasts the outcome $a$, and both attackers answer $a$ to their respective closest verifier.

Notice that if in a attack the attackers actually answer, meaning that they do not respond `$\perp$', we have
\begin{equation}\begin{split}
    \pr{\textsc{V}_\textsc{AB}=\textsc{c}\mid \textsc{V}_\textsc{AB}\neq\perp}+\pr{\textsc{V}_\textsc{AB}=\textsc{w}\mid \textsc{V}_\textsc{AB}\neq\perp}+\pr{\textsc{V}_\textsc{AB}=\lightning\mid \textsc{V}_\textsc{AB}\neq\perp}=1.
\end{split}
\end{equation}

\noindent In fact, since for QPV we impose $\pr{\textsc{V}_\textsc{AB}=\lightning}=0$, the above expression reduces to
\begin{equation}\label{eq pcorrec+pwrong if no no photon=1}
    \pr{\textsc{V}_\textsc{AB}=\textsc{c}\mid \textsc{V}_\textsc{AB}\neq\perp}+\pr{\textsc{V}_\textsc{AB}=\textsc{w}\mid \textsc{V}_\textsc{AB}\neq\perp}=1.
\end{equation}
We define the probability of winning, $p_{win}$ as the maximum probability of being correct conditioned on answering, i.e.\  $p_{win}:=\max\pr{\textsc{V}_\textsc{AB}=\textsc{c}\mid \textsc{V}_\textsc{AB}\neq\perp}$, which has the interpretation of the normalized (over the conclusive answers) optimal probability of answering `\textsc{Correct}'. Showing that $p_{win}$ has a constant gap below one would imply that, over the conclusive rounds, the attackers cannot be correct as many times as they want, and if $p_{win}$ is below the expected value of an honest prover, i.e.\  $\eta(1-p_{err})$, the attackers will not be able to mimic her behavior. On the other hand, for our security approach, we will consider the probability that the attackers can actually play (answer)---not answering `$\perp$'---in the $\mathsf{G}^{BB84}_\eta$ game, i.e.\ given a strategy $\mathbf S_\eta=\{\ket\psi,A^z_a,B^z_b\}_{z,a,b}$, the probability that they answer, $p_{ans}$,
\begin{equation}
\label{eq pans BB84}
p_{ans}:=\pr{\textsc{V}_\textsc{AB}=\textsc{c}}+\pr{\textsc{V}_\textsc{AB}=\textsc{w}}=\frac{1}{2}\sum_{a,x\in \{0,1\}}\bra{\psi}A_a^{x} B_a^{x}\ket{\psi},
\end{equation}
where we used the following simplified notation: when clear from the context, tensor products, identities and $\psi$ will be omitted, e.g.\ $\bra{\psi}V_0^x\otimes A_1^{x}\otimes B_1^{x}\ket{\psi}=\expectedbraket{V_0^xA_1^{x}B_1^{x}}$. Moreover, if attackers want to mimic an honest prover, they have to be consistent with the error $p_{err}$, that is, 
\begin{equation}
\label{eq condition perr by def 1}
    \frac{\expectedbraket{V_{0}^{x}A_{1}^{x}B_{1}^{x}}}{\expectedbraket{V_{0}^{x}(A_{0}^{x}+A_{1}^{x})(B_{0}^{x}+B_{1}^{x})}}\leq p_{err},  \hspace{1cm}\frac{\expectedbraket{V_{1}^{x}A_{0}^{x}B_{0}^{x}}}{\expectedbraket{V_{1}^{x}(A_{0}^{x}+A_{1}^{x})(B_{0}^{x}+B_{1}^{x})}}\leq p_{err},
\end{equation}
where we impose that the error rate for both outputs 0 and 1 is upper bounded by the same amount for all inputs $x$.

Notice that if $p_{ans}=1$ while satisfying \eqref{eq condition perr by def 1}, the attackers can always attack the protocol without being caught. Using \eqref{eq pans BB84}, the security of the protocol can be regarded as the maximum probability that the attackers can respond without being caught, and the protocol will be proven to be secure if the attackers cannot reproduce $p_{ans}\geq \eta$ for a given $p_{err}$, we formalize this idea in the following definition.

\begin{definition}
We define the security region $SR$ of the \QPVBBeta~ protocol as  the set of pairs $(p_{err},p_{ans})\in [0,1]\times [0,1]$ for which no strategy $\mathbf{S}^{\eta}_{MoE}$ (and thus no No-PE strategy) exists that breaks the \QPVBBeta~ protocol with the corresponding error and response rate.

A subset of $SR$ will be denoted by $SSR$. We define the attackable region $AR$ as the complementary set of the $SR$. A subset of $AR$ will be denoted as $SAR$. 
\end{definition}

Therefore, our interest relies on maximizing expression \eqref{eq pans BB84} over all the strategies  $\mathbf{S}^{\eta}_{MoE}$ to break the \QPVBBeta~ protocol. However, unlike the set of probabilities achievable by classical physics, the set of probabilities attainable by quantum mechanics, $\mathcal{Q}$, has uncountably many extremal points, see e.g.\ ~\cite{Bell_non_locality_report}, and therefore it makes the optimization problem a tough task. On the positive side, in \cite{NPA2008}, Navascués, Pironio and Acín (NPA) introduced a recursive way to construct subsets ${\mathcal{Q}_{\ell}\supset\mathcal{Q}_{\ell+1}\supset\mathcal{Q}}$ for all $\ell\in\mathbb{N}$  with the property that each of them can be tested using SDP and are such that $\cap_{\ell\in\mathbb{N}}\mathcal{Q}_{\ell}=\mathcal{Q}_{co}$, where $\mathcal{Q}_{co}\supset \mathcal Q$ is the set of probabilities obtained by Alice and Bob performing commuting measurements on a joint Hilbert space instead of tensor product measurements. For finite-dimensional Hilbert spaces, both sets are equivalent.  

For all $a,b\in\{0,1,\perp\}$ and all $x,x'\in\{0,1\}$, the elements $\expectedbraket{A_a^{x}B_{b}^{x'}}$ will appear in the maximization problem solvable via SDP, and they are bounded by linear constraints given by $\mathcal{Q}_{\ell}$, see Appendix~\ref{appendix non-logal games and NPA hierarchy}. In addition to these constraints, we impose the additional linear constraints derived from \QPVBBeta, i.e.\ since in the protocol the verifiers abort if they receive different messages, from \eqref{eq abort},
\begin{equation}\label{eq restriction error epsilon BB84}\expectedbraket{A_a^{x}B_{b}^{x}}=0 \hspace{5mm}\forall a\neq b\in \{0,1,\perp\}, \forall x \in \{0,1\},\end{equation}
and the prover subject to a measurement error $p_{err}$, see Proposition~\ref{prop ineq perr for QPV BB84}.

\begin{prop} \label{prop ineq perr for QPV BB84} Let $a,b\in\{0,1\}$. For all $x,x'\in\{0,1\}$, the terms $\expectedbraket{A_a^{x}B_{b}^{x'}}$ can be bounded by $p_{err}$ by the following inequality:
\begin{equation}
    \label{eq perr constraint sum ab BB84}
    \sum_{ab}(2-\norm{V_a^x+V_b^{x'}})\expectedbraket{A_a^{x}B_{b}^{x'}}\leq p_{err}\sum_{a}(\expectedbraket{A_a^{x}B_{a}^{x}}+\expectedbraket{A_a^{x'}B_{a}^{x'}}).
\end{equation}
\end{prop}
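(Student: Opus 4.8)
The plan is to reduce the claim to a single operator inequality on the verifier's qubit $\mathcal{H}_V=\mathbb{C}^2$ and then strip off the attackers' registers using positivity together with the abort constraint \eqref{eq restriction error epsilon BB84} and the prover error bound \eqref{eq condition perr by def 1}. First I would rewrite the coefficient. Since each $V_c^y=H^y\ketbra{c}{c}H^y=\ketbra{v_c^y}{v_c^y}$ is rank one, the sum of any two such projectors has eigenvalues $1\pm\abs{\expectedbraket{v_a^x|v_b^{x'}}}$, so $2-\norm{V_a^x+V_b^{x'}}=1-\abs{\expectedbraket{v_a^x|v_b^{x'}}}=:c_{ab}\in[0,1]$, i.e.\ $c_{ab}$ is exactly the smallest eigenvalue of $V_a^x+V_b^{x'}$. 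For the BB84 states these overlaps equal $\delta_{ab}$ when $x=x'$ and $1/\sqrt2$ when $x\neq x'$; crucially they are invariant under flipping both outcomes, $\abs{\expectedbraket{v_{1-a}^x|v_{1-b}^{x'}}}=\abs{\expectedbraket{v_a^x|v_b^{x'}}}$, so $\norm{V_{1-a}^x+V_{1-b}^{x'}}=\norm{V_a^x+V_b^{x'}}$. This yields the key inequality on $\mathcal{H}_V$,
\[
V_{1-a}^x+V_{1-b}^{x'}\;\geq\;\big(2-\norm{V_{1-a}^x+V_{1-b}^{x'}}\big)\mathbb{I}\;=\;c_{ab}\,\mathbb{I}.
\]
The non-obvious point is that one must use the \emph{complementary} projectors $V_{1-a}^x,V_{1-b}^{x'}$ (which encode Alice/Bob being \emph{wrong}), precisely because those are the objects that the error constraint controls.

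Next I would tensor this inequality with the positive operator $A_a^x\otimes B_b^{x'}$ and evaluate in $\ket{\psi}$, giving for each pair $(a,b)$
\[
c_{ab}\expectedbraket{A_a^xB_b^{x'}}\;\leq\;\expectedbraket{V_{1-a}^xA_a^xB_b^{x'}}+\expectedbraket{V_{1-b}^{x'}A_a^xB_b^{x'}}.
\]
Summing over $a,b\in\{0,1\}$, I would be careful to contract the \emph{free} index first: in the first family sum over $b$ to replace $B_b^{x'}$ by $B_0^{x'}+B_1^{x'}\leq\mathbb{I}$, and in the second family sum over $a$ to replace $A_a^x$ by $A_0^x+A_1^x\leq\mathbb{I}$. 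This leaves the marginals $\sum_a\expectedbraket{V_{1-a}^xA_a^x}+\sum_b\expectedbraket{V_{1-b}^{x'}B_b^{x'}}$. I would then collapse these using abort: inserting $\mathbb{I}=\sum_cB_c^x$ into $\expectedbraket{V_{1-a}^xA_a^x}$ and noting that $\expectedbraket{A_a^xB_c^x}=0$ for $c\neq a$ forces every cross term (including the $B_\perp^x$ term) to vanish by positivity, so $\expectedbraket{V_{1-a}^xA_a^x}=\expectedbraket{V_{1-a}^xA_a^xB_a^x}$, and symmetrically for $x'$. Summing over $a$ these are exactly the wrong-answer terms $\expectedbraket{V_0^xA_1^xB_1^x}+\expectedbraket{V_1^xA_0^xB_0^x}$, and the analogue for $x'$.

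Finally I would apply the prover error bound: using abort to simplify its denominator to $\expectedbraket{V_c^xA_0^xB_0^x}+\expectedbraket{V_c^xA_1^xB_1^x}$ and summing the two outcomes gives $\expectedbraket{V_0^xA_1^xB_1^x}+\expectedbraket{V_1^xA_0^xB_0^x}\leq p_{err}\sum_a\expectedbraket{A_a^xB_a^x}$ (and likewise for $x'$). Adding the $x$ and $x'$ contributions reproduces the right-hand side of \eqref{eq perr constraint sum ab BB84}.

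The hard part will be twofold. One obstacle is identifying that the correct verifier-space inequality involves the flipped projectors, so that the bound lands on the wrong-answer terms governed by $p_{err}$. The other, more treacherous obstacle is the order of summation: bounding each individual term $\expectedbraket{V_{1-a}^xA_a^xB_b^{x'}}\leq\expectedbraket{V_{1-a}^xA_a^xB_a^x}$ \emph{before} summing would count each wrong-answer term for both values of the free index and introduce a spurious factor of two, whereas contracting the free index against $\mathbb{I}$ first avoids this. The degenerate case $x=x'$ is harmless, since then $c_{ab}=1-\delta_{ab}$ and the only surviving moments on the left, $\expectedbraket{A_a^xB_b^x}$ with $a\neq b$, already vanish by the abort constraint.
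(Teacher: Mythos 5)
Your proposal is correct and takes essentially the same route as the paper's proof (the appendix proof of Proposition \ref{prop ineq perr for QPV}, of which this proposition is the BB84 special case): your key operator inequality $V_{1-a}^x+V_{1-b}^{x'}\geq\left(2-\norm{V_a^x+V_b^{x'}}\right)\mathbb{I}$ is equivalent, via $V_{1-a}^x+V_{1-b}^{x'}=2\mathbb{I}-(V_a^x+V_b^{x'})$, to the paper's decomposition $2\expectedbraket{A_a^{x}B_{b}^{x'}}=\expectedbraket{(V_a^{x}+V_b^{x'})A_a^{x}B_{b}^{x'}}+\expectedbraket{V_{1-a}^{x}A_a^{x}B_{b}^{x'}}+\expectedbraket{V_{1-b}^{x'}A_a^{x}B_{b}^{x'}}$ followed by the operator-norm bound. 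The remaining steps---summing over $a,b$ with $A_0^x+A_1^x\preceq\mathbb{I}$ and $B_0^{x'}+B_1^{x'}\preceq\mathbb{I}$, collapsing the marginals via the abort constraint \eqref{eq restriction error epsilon BB84}, and applying the combined $p_{err}$ conditions from \eqref{eq condition perr by def 1}---mirror the paper's argument exactly.
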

The proof Proposition~\ref{prop ineq perr for QPV BB84} is a particular case of the proof of Proposition~\ref{prop ineq perr for QPV}. \\
The value of $p_{ans}$ in \eqref{eq pans BB84} can be therefore upper bounded by the SDP problem:
\begin{equation}
\boxed{
\label{eq upperbound p_ans all constraints BB84}
\begin{split}
    &\max \frac{1}{2}\sum_{x,a\in\{0,1\}}\expectedbraket{A_a^{x}B_a^{x}};\\
    &\textrm{subject to: the linear constraints for } \mathbf{S}^{\eta}_{MoE} \in \mathcal{Q}_\ell,\\& \textrm{ \hspace{15mm} and equations \eqref{eq restriction error epsilon BB84} and \eqref{eq perr constraint sum ab BB84}}. 
\end{split}}
\end{equation}

Where, abusing notation, we denoted $\mathbf{S}^{\eta}_{MoE}\in \mathcal{Q}_\ell$ meaning that the probabilities obtained from $\mathbf{S}^{\eta}_{MoE}$ belong to the set $\mathcal{Q}_{\ell}$. Fig.~\ref{Fig SR bb84} shows the solution of the SDP~\eqref{eq upperbound p_ans all constraints BB84} for different values of $p_{err}$ for the first and second level of the NPA hierarchy using the Ncpol2sdpa package \cite{Wittek_2015_Ncpol2Sdpa} in Python.
The values above the solution for any given $p_{err}$ represent points where does not exist an attack such that $p_{ans}\geq\eta$ and therefore correspond to $SSR$, the area represented in light blue. The results plotted in Fig.~\ref{Fig SR bb84} coincide with the tight bound of the winning probability of the MoE game attacking the $\mathrm{QPV_{BB84} }$ protocol, since $p_{ans}$ reaches 1 for $p_{err}=0.1464\simeq 1-\cos^2(\pi/8)$.\\

\begin{prop} \label{prop monotonicity} The function $p_{ans}(p_{err})$ for $p_{err}\in[0,1]$ obtained by the solution of \eqref{eq upperbound p_ans all constraints BB84} is monotonically increasing, i.e.\ if $p_{err}^0\leq p_{err}^1$, then $p_{ans}(p_{err}^0)\leq p_{ans}(p_{err}^1)$. 
\end{prop}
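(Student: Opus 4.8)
\emph{Proof plan.} The key observation I would start from is that the parameter $p_{err}$ enters the SDP~\eqref{eq upperbound p_ans all constraints BB84} in exactly one place, namely through the family of error constraints~\eqref{eq perr constraint sum ab BB84}: neither the objective $\frac12\sum_{x,a}\expectedbraket{A_a^x B_a^x}$, nor the NPA membership constraints $\mathbf S^{\eta}_{MoE}\in\mathcal Q_\ell$, nor the abort constraints~\eqref{eq restriction error epsilon BB84} depend on $p_{err}$. The plan is therefore to show that the feasible region grows monotonically with $p_{err}$. Writing $\mathcal F(p_{err})$ for the set of moment data satisfying all constraints of~\eqref{eq upperbound p_ans all constraints BB84} for a given error rate, I would prove that $p_{err}^0\le p_{err}^1$ implies $\mathcal F(p_{err}^0)\subseteq\mathcal F(p_{err}^1)$. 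Since the objective is independent of $p_{err}$, maximising the same linear functional over a larger feasible set can only preserve or increase the optimal value, yielding $p_{ans}(p_{err}^0)\le p_{ans}(p_{err}^1)$ immediately.

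To establish the inclusion, I would fix a point feasible for $p_{err}^0$. All the $p_{err}$-independent constraints hold verbatim for $p_{err}^1$ as well, so the only thing to verify is the error constraint~\eqref{eq perr constraint sum ab BB84} for each pair $x,x'\in\{0,1\}$. Writing that constraint as $L_{x,x'}\le p_{err}\,R_{x,x'}$, where $L_{x,x'}:=\sum_{ab}(2-\norm{V_a^x+V_b^{x'}})\expectedbraket{A_a^x B_b^{x'}}$ and $R_{x,x'}:=\sum_a(\expectedbraket{A_a^x B_a^x}+\expectedbraket{A_a^{x'}B_a^{x'}})$ are both independent of $p_{err}$, the crucial point is that $R_{x,x'}\ge 0$. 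Granting this, from $L_{x,x'}\le p_{err}^0 R_{x,x'}$ together with $p_{err}^0\le p_{err}^1$ one gets $L_{x,x'}\le p_{err}^0 R_{x,x'}\le p_{err}^1 R_{x,x'}$, so the point also satisfies~\eqref{eq perr constraint sum ab BB84} with $p_{err}^1$; doing this for every $(x,x')$ gives the inclusion.

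The step that requires care — and which I expect to be the only real obstacle — is the nonnegativity $R_{x,x'}\ge 0$, i.e.\ the nonnegativity of each answer term $\expectedbraket{A_a^x B_a^x}$ inside the relaxation. For a genuine quantum strategy this is immediate: by the purification $A_a^x$ and $B_a^x$ are commuting projectors, so $A_a^x B_a^x$ is itself a Hermitian projector and $\expectedbraket{A_a^x B_a^x}=\norm{A_a^x B_a^x\ket\psi}^2\ge 0$. Within the NPA relaxation the same conclusion holds: at level $\ell\ge 2$ the monomial $A_a^x B_a^x$ lies in the index set, and using the idempotency relations $(A_a^x)^2=A_a^x$, $(B_a^x)^2=B_a^x$ together with the commutation $[A_a^x,B_b^{x'}]=0$ one checks that $\expectedbraket{A_a^x B_a^x}$ equals a diagonal entry of the positive-semidefinite moment matrix, hence is nonnegative; more robustly, one may simply include the valid linear constraints $\expectedbraket{A_a^x B_b^{x'}}\ge 0$, coming from the positivity of products of commuting POVM elements, among the defining constraints of $\mathcal Q_\ell$. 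With $R_{x,x'}\ge 0$ secured, the nesting of feasible regions, and therefore the monotonicity of $p_{ans}(p_{err})$, follows exactly as above.
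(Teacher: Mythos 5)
Your proof is correct and takes essentially the same route as the paper's: the paper's entire proof is the one-line observation that the SDP for $p_{err}^1$ is a relaxation of the SDP for $p_{err}^0$ (same objective, larger feasible set), hence its optimum can only be larger. Your extra step verifying that the right-hand side $R_{x,x'}=\sum_{a}(\expectedbraket{A_a^{x}B_{a}^{x}}+\expectedbraket{A_a^{x'}B_{a}^{x'}})$ is nonnegative --- so that increasing $p_{err}$ genuinely loosens the constraint --- is a detail the paper leaves implicit, and you resolve it correctly via the diagonal-moment-matrix argument at level $\ell\geq 2$.
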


\begin{proof}
It follows from the fact that $p_{ans}(p_{err}^1)$ is obtained by an SDP which relaxation of the restrictions of the SDP providing  $p_{ans}(p_{err}^0)$.
\end{proof}

Informally, Proposition~\ref{prop monotonicity} assures that between two numerical solutions for different $p_{err}$ there are no `abrupt jumps', more specifically, in Fig.~\ref{Fig SR bb84}, any solution between two plotted points cannot be grater than the point in the right.

Consider the strategy $\mathbf{S}_{MoE}^{mix}|_{p}$ given by the probabilistic mixture of playing the strategy $\mathbf{S}_{\textsc{tfkw}}$ with probability $p$ and $\mathbf{S}_{guess}$ with probability $1-p$, conditioned on answering. As long as $p_{ans}<1$, for each $p$, this mixture gives a unique pair of $(p_{err},p_{ans})$ (for $p_{ans}=1$, take the minimum $p_{err}$), and we equivalently denote $\mathbf{S}_{MoE}^{mix}|_{p}$ by the corresponding $(p_{err},p_{ans})$ as $\mathbf{S}_{MoE}^{mix}|_{(p_{err},p_{ans})}$. The values of $p_{ans}$ obtained by this strategy, see continuous line in Fig.~\ref{Fig SR bb84}, provide a region where the protocol is attackable, i.e.~a $SAR$.  

Since the SSR obtained from the second level of the NPA hierarchy and the $SAR$ obtained from $\mathbf{S}_{EoM}^{mix}|_p$ are such that $SSR\cup SAR=[0,1]\times[0,1]$, up to infinitesimal precision, it means that they correspond to $SR$ and $AR$, respectively, i.e.\ the solutions of the SDP \eqref{eq upperbound p_ans all constraints BB84} for $\ell=2$ converge to the quantum value and are tight. This means that Fig.~\ref{Fig SR bb84} represents a full characterization of the security of the \QPVBBeta~protocol under photon loss with attackers that do not pre-share entanglement, and the light blue region encodes all the points $(p_{err},\eta)$ where the protocol is secure. The result is summarized as follows:

\begin{result} In the No-PE model, if attackers answer with probability $\eta$ and never respond inconsistent answers, the optimal probability that they answer `\textsc{Correct}' in a round of \QPVBBeta~for $\eta\in[\frac{1}{2},1]$ is given by
\begin{equation}
\max\Pr[\textsc{V}_\textsc{AB}=\textsc{c}]=\cos^2\big(\frac{\pi}{8}\big)\eta+\sin^2\big(\frac{\pi}{8}\big)(1-\eta).
\end{equation}
\end{result}

\begin{figure}[H]
\label{Fig SR bb84}
\centering
\includegraphics[width=105mm]{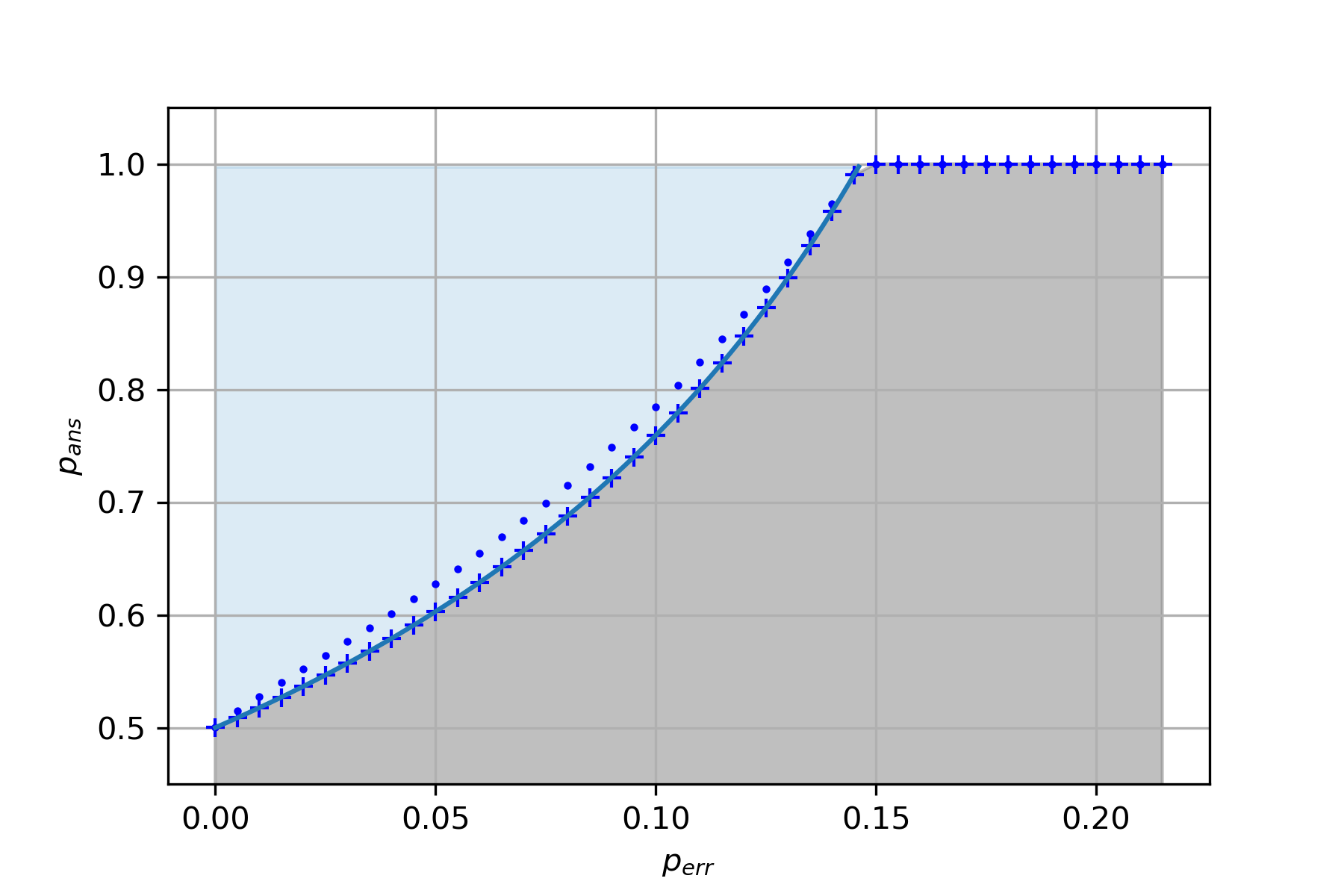}
\caption{Solutions of the first, $\ell=1$, (blue dots) and second level, $\ell=2$, (blue pluses) of the NPA hierarchy for the SDP \eqref{eq upperbound p_ans all constraints BB84}. The light blue and the gray area correspond to $SR$ and $AR$, respectively. The continuous line represents $\mathbf{S}_{MoE}^{mix}|_p$.}
\end{figure}

\subsection{Sequential repetition of \texorpdfstring{\QPVBBeta}{eta QPV BB84}}\label{section:sequential rep BB84-eta}

After $r$ sequential repetitions of \QPVBBeta, the verifiers have to accept or reject the location of the prover. In order to study the security of the sequential repetition, we introduce the following notation and concepts. $\mathbb E$ denotes the expected value and $\textbf{1}_*(\textbf{a})=1$ if $*=a$ and $0$ otherwise is the indicator function.
For a random variable $X$, taking values on a finite set $\mathfrak{X}=\{\texttt{x}_1,...,\texttt{x}_d\}$, a probability distribution $p$ 
is specified by  $p_{\texttt{x}_i}=\Pr[X=\texttt{x}_i], \texttt{x}_i\in \mathfrak{X},$ and $p$ can be represented by a probability vector $\textbf{p}=(p_{\texttt{x}_1},...,p_{\texttt{x}_d})$. The set of all probability distributions $\textbf{p}$ over $\mathfrak X$ is $\Delta_{d-1}=\{\textbf{p}\in\mathbb R^d\mid \sum_{\texttt{x}_i\in \mathfrak{X}}p_{\texttt{x}_i}=1, p_i\geq0\}$, which is known as the probability simplex, and it is a ($d-1$)-dimensional manifold.

Recall that in every round of the protocol, attackers will pick a strategy $\mathbf S^i_\eta$ that can depend on the previous rounds.  Assume that the verifiers did not receive any `\textsc{Abort}' answers, otherwise, they reject the location. Let $\Gamma_r^{att}$ denote the total score that the attackers ($att$) get, defined in \eqref{eq:score_BB84}. In the next theorem, we show that attackers in the No-PE model will fail the test $\mathsf{T}^{r\mathrm{BB84}}_{\varepsilon_{\textrm{h}}}$ with exponentially high probability.

\begin{theorem}\label{thm sequential repetition no-entanglement} Consider the $r$ sequential repetition of \QPVBBeta. Let $\varepsilon_{\textrm{h}}>0$, $\eta$ and $p_{err}$ be such that $\alpha-\delta>0$, with  $\delta=(\cos^4\frac{\pi}{8}\ln(1/\varepsilon_{\textrm{h}})/r)^{1/2}$. Then, any sequential strategy to attack \QPVBBeta~ in the No-PE model fulfills that $\mathbb E [\Gamma_r^{att}]\leq 0$. Moreover, the probability that the attackers are accepted in the $\mathsf{T}^{r\mathrm{BB84}}_{\varepsilon_{\textrm{h}}}$ test is exponentially small: 
\begin{equation}
\pr{\Gamma_r^{att}\geq r(\alpha-\delta)}\leq e^{-r (\alpha-\delta)^2/2}.
\end{equation}
\end{theorem}

Theorem~\ref{thm sequential repetition no-entanglement} shows that there exists a test which is both complete and sound, and thus, the \QPVBBeta{} is secure for the corresponding $\eta$ and $p_{err}$. The existence of this test implies that after $r$ rounds, attackers will be \emph{caught} with exponentially high probability.

The points $(\eta,p_{err})$ such that $\alpha>0$ correspond to the blue region in Fig.~\ref{Fig SR bb84} and also below the black dots in Fig.~\ref{Fig simplex}---the above $\alpha-\delta$ corresponds to a small shift that can be made small increasing the number of sequential repetitions. The proof of Theorem~\ref{thm sequential repetition no-entanglement} is a particular case of the proof of Theorem~\ref{theorem seq repetition entangled}.

\begin{figure}[h]
\centering
\includegraphics[width=85mm]{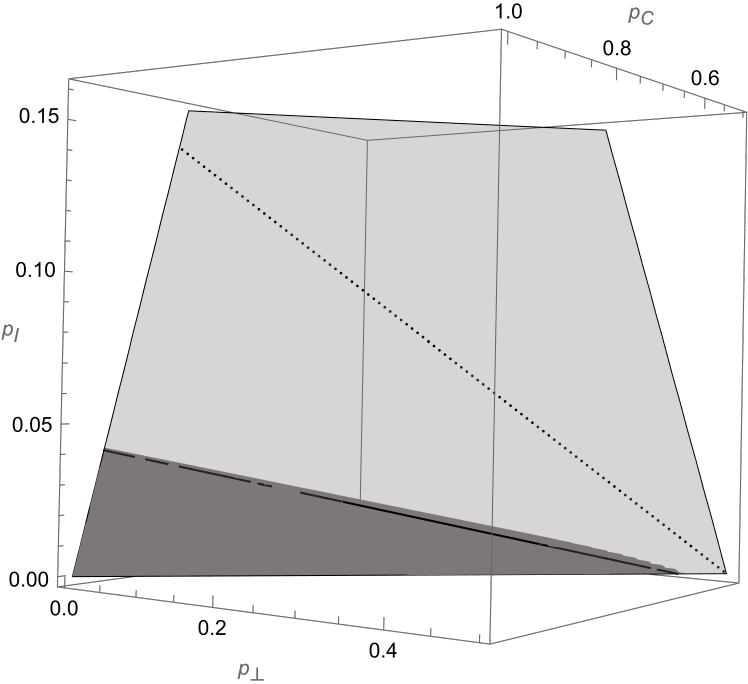}
\caption{Probability simplex $\Delta_2$ for probabilities taking values on $\mathfrak{X}=\{\textsc{c},\perp,\textsc{i}\}$, where \textsc{i} (incorrect) denotes either \textsc{w} or $\lightning$. Black dots correspond to numerical solutions of \eqref{eq upperbound p_ans all constraints BB84} for $\ell=2$. The dark region is the set of probabilities that Theorem \ref{theorem q>=n/2-5 for BB84 eta} excludes and the black straight line is the intersection between $\Delta_2$ and the plane $\gamma_{\textsc{c}}p_{\textsc{c}}-\gamma_{\perp}p_{\perp}-\gamma_{\textsc{i}}p_{\textsc{i}}=0$.}
\label{Fig simplex}
\end{figure}

\section{The \texorpdfstring{\QPVBBetaf}{eta-f-QPV BB84} protocol and its security under entangled attackers}\label{section 2 basis and entanglement}

In the previous section, we have shown security for the \QPVBBeta~protocol. Our argument was restricted to attackers who do not pre-share entanglement, and it is well-known \cite{OriginalQPV_Kent2011} that such a protocol is broken with a single EPR pair. Recent work by Bluhm, Christandl and Speelman~\cite{bluhm2022single}, building on work by Buhrman, Fehr, Schaffner, and Speelman \cite{Buhrman_2013}, has shown that when adding classical information to \QPVBB, the \QPVBBf~ protocol (see below), the protocol remains secure if the attackers hold less than $n/2-5$ qubits, making it secure against entangled attackers that hold an entangled state of smaller dimension.\footnote{Note that this is only a lower bound. The actual best attack known requires an amount of entanglement that is \emph{exponential} in the number of classical bits---it is very much possible that the protocol is better than proven.}
Recall that an attack on a QPV protocol can conceptually be split up into two rounds---the first in which the attackers hold part of the input and their pre-shared entangled state, and the second round in which the attackers have to respond to the closest verifier. 
The key to showing their result is by considering two possible joint states held by the attackers as a result of their first-round actions---one for an input where they have to measure the input qubit in the computational basis and one where they have to measure the qubit in the Hadamard basis. Then, these two states already have to be `far apart' (even though Alice's part does not depend on Bob's input yet, and vice-versa), which by a counting argument makes it possible to obtain a bound on the size of the pre-shared state.
In the spirit of applying an analogous counting argument to show security for the lossy case, we use the results of Section \ref{section 2 basis no entanglement} to prove a lemma (Lemma \ref{lemma D>=eta Delta}) that allows us to apply a similar counting argument.

We define one round of the lossy-function-BB84 protocol as follows:

\begin{definition}\label{def qpv bb84 f}
Let $n\in\mathbb{N}$, and consider a $2n$-bit boolean function $f:\{0,1\}^n \times \{0,1\}^n \to \{0,1\}$. We define one round of the lossy-function-BB84 protocol, denoted by \QPVBBetaf, as follows: 
\begin{enumerate}
    \item $V_0${} and $V_1${} secretly agree on random bit strings $x,y\in\{0,1\}^n$ and a bit $v\in\{0,1\}$. Then, $V_0${} prepares the qubit state $H^{f(x,y)}\ket{v}\in\{\ket0,\ket1,\ket+\,\ket-\}$.
    \item $V_0${} sends the qubit  $H^{f(x,y)}\ket{v}$, and $x$ to \prover{}, whereas  $V_1${} sends $y$ to \prover{}, coordinating their times so that the messages arrive simultaneously to \prover{}. The classical information is required to travel
at the speed of light, whereas the quantum information can be sent arbitrarily slow.
    \item Immediately, \prover{} computes $f(x,y)$, measures the received qubit in the basis $f(x,y)$, and broadcasts her outcome, either 0 or 1, to $V_0${} and $V_1$. If she did not receive the qubit, i.e.\  the photon was lost, she sends $\perp$. Therefore, the possible answers from \prover{} are $v_{\textsf{P}}\in\{0,1,\perp\}$. 
    \item  If 
    \begin{enumerate} \item $V_0${} and $V_1${} receive their respective answers at the time corresponding to the claimed location, and they are equal, i.e.\  both receive the same $v_{\textsf{P}}$, then, if
        \begin{itemize}
            \item $v_{\textsf{P}}=v$, the verifiers record `\textsc{Correct}', denoted by `\textsc{c}', 
            \item $v_{\textsf{P}}=1-v$, the verifiers record `\textsc{Wrong}', denoted by `\textsc{w}', 
            \item $v_{\textsf{P}}=\perp$, the verifiers record `\textsc{No photon}',  denoted by `$\perp$', 
        \end{itemize}
        \item 
        otherwise, they record `\textsc{Abort}',  denoted by `$\lightning$', and abort the protocol rejecting the location.
    \end{enumerate}
\end{enumerate}
\end{definition}

\begin{figure}[htbp]
    \centering
    \begin{tikzpicture}[node distance=3cm, auto]
    \node (A) {$V_0$};
    \node [left=1cm of A] {};
    \node [right=of A] (B) {\prover{}};
    \node [right=of B] (C) {$V_1$};
    \node [right=1cm of C] {};
    \node [below=of A] (D) {};
    \node [below=of B] (E) {}; 
    \node [above=0.5cm of A] (posV00) {};
    \node [left=1cm of posV00] (posV0) {};
    \node [above=0.5cm of C] (posV11) {};
    \node [right=1cm of posV11] (posV1) {};
    \draw [->] (posV0) -- (posV1) node[midway] {position};

    \node [above=-0.1cm of A] (N) {};
    \node [right=0.8cm of N] {$\diagdots[120]{2.5em}{0.1em}$};
    
    \node [right=0.8cm of A] (M) {};
    
    \node [below=of C] (F) {};
    \node [below=of D] (G) {};
    \node [below=of E] (H) {};
    \node [below=of F] (I) {};
    \node [left= 6cm of E] (J) {};
    \node [below= 3cm of J] (K) {};
    \node [above= 3cm of J] (L) {};

    \draw [->, transform canvas={xshift=0pt, yshift=0pt}, quantum] (M) -- (E) node[midway] (x) {} ;
    \draw [->] (A) -- (E);
    \draw [->] (C) -- (E);
    \draw [->] (E) -- (I) node[midway] (q) {$v_{\textsf{P}} \in \{0,1,\perp\}$};
    \draw [->] (E) -- (G);

    \draw [->] (L) -- (K) node[midway] {time};

    \node[left=0.1cm of x, yshift=13pt, xshift=45pt, transform canvas={xshift=+ 2pt, yshift = +2 pt}] {$H^{f(x,y)}\ket{v}$};
    \node[left=1.5cm of x, transform canvas={xshift=+ 2pt, yshift = +2 pt}] {$x\in\{0,1\}^n$};
    \node[right = 3.3cm of x, transform canvas={xshift=+ 2pt, yshift = +2 pt}] {$y \in \{0,1\}^n$};
    \node[left = 3.3cm of q] {$v_{\textsf{P}} \in \{0,1,\perp\}$};
\end{tikzpicture}
\caption{Schematic representation of the \QPVBBetaf~protocol, where straight lines represent classical information and undulated lines represent quantum information.}
\label{fig:protocol-BB84f}
\end{figure}

See Fig.~\ref{fig:protocol-BB84f} a schematic representation of \QPVBBetaf.  The \QPVBBf~protocol  corresponds to \QPVBBetaf~ for $\eta=1$ and $p_{err}=0$. In the same way as for \QPVBBeta, after $r$ rounds, the verifiers expect the answers to satisfy  \eqref{eq succesful protocol}.

Similarly to \QPVBBeta{} in Section~\ref{section:sequential rep BB84-eta}, here, we present a binary test to either \emph{accept} or \emph{reject} the location based on the observed data.  Let $\textbf{a}_i\in\{\textsc{c},\perp,\textsc{I}\}$ denote whether the answer that the verifiers recorded in the round $i$ was `\textsc{Correct}', `\textsc{No photon}',  and `\textsc{Wrong}' (or `\textsc{Abort}')\footnote{The symbol `\textsc{i}' caputres the \emph{incorrect} answers `\textsc{w}' and `$\lightning$'.}, respectively. Consider the payoff function  
\begin{equation}T^{ent}_i(\textbf{a}_i)=\gamma_{\textsc{c}}\textbf{1}_{\textsc{c}}(\textbf{a}_i)-\gamma_{\perp}\textbf{1}_{\perp}(\textbf{a}_i)-\gamma_{\textsc{i}}\textbf{1}_{\textsc{i}}(\textbf{a}_i),    
\end{equation}
where $(\gamma_{\textsc{c}},\gamma_{\perp},\gamma_{\textsc{i}})=\frac{1}{\sqrt{488625947}}(943,1107,22057)$ and the superscript $ent$ is due to that this payoff will be used to securely verify the location even if attackers who pre-share entanglement intend to break \QPVBBetaf. Let 
\begin{equation}
    \Gamma_r^{ent}=\sum_{i=1}^r T_i^{ent}(\textbf{a}_i),
\end{equation}
be the total \emph{score} after $r$ rounds. We next introduce the following acceptance test with confidence level $\varepsilon_{\textrm{h}}$.

\begin{definition} Let $\varepsilon_{\textrm{h}}>0$. For the \QPVBBetaf~protocol executed sequentially $r$ times, we define the acceptance test $\mathsf{T}^{r f-\mathrm{BB84}}_{\varepsilon_{\textrm{h}}}$, also referred to as the decision criterion, as follows:     the verifiers \emph{accept} the prover's location if
\begin{equation}\label{eq:test-BB84-qpvent}
    \Gamma_r^{ent} \geq r (\alpha^{ent}-\delta),
\end{equation}
where $\delta=\sqrt{\frac{4\ln(1/\varepsilon_{\textrm{h}})}{r}}$. 
Otherwise, they \emph{reject}. 
\end{definition}
For an honest prover ($hp$), for every $i$, 
\begin{equation}
    \mathbb E[T_i^{ent,hp}] = \gamma_{\textsc{c}}\eta(1-p_{err})-\gamma_{\perp}(1-\eta)-\gamma_{\textsc{i}}\eta p_{err}:\alpha^{ent}(\eta,p_{err}),
\end{equation}
and therefore, $\mathbb E [\Gamma_r^{ent,hp}]=r\alpha^{ent}$ for simplicity we will assume the dependence on $\eta$ and $p_{err}$ implicit. By Hoeffding's inequality, see Lemma~\ref{lemma:Hoeffding}, with $T_i^{ent}\in[-\frac{22057}{\sqrt{488625947}},\frac{943}{\sqrt{488625947}}]\subset[-1,1]$, an honest prover will be accepted except with small probability at most $\varepsilon_{\textrm{h}}$, and therefore the test $\mathsf{T}^{r f-\mathrm{BB84}}_{\varepsilon_{\textrm{h}}}$ is \emph{complete}. We will show that, any attackers who pre-share a linear amount of qubits in~$n$, after $r$ sequential executions of \QPVBBetaf{}, will fail the test with exponentially high probability, and thus showing \emph{soundness}. Showing both completeness and soundness will prove that \QPVBBetaf{} is \emph{secure}. 

Similarly as $\mathsf{T}^{r\mathrm{BB84}}_{\varepsilon_{\textrm{h}}}$, this test is engineered from the analysis of the correlations attainable by attackers in Section~\ref{section 2 basis and entanglement}, ensuring that they are rejected except with negligible probability; see the proof of Theorem~\ref{theorem seq repetition entangled} for details of its construction.

As introduced in Section~\ref{section 2 basis no entanglement}, for the security analysis, we will consider the purified version of \QPVBBetaf, which is equivalent to it, and the difference relies on, $V_0${} preparing an  EPR pair $\ket{\Phi^+}_{VP}$, keeping a qubit register $V$, and sending the register $P$ to the prover. In a later moment, the $V_0${} performs the measurement  $\{V^{f(x,y)}_{v}=H^{f(x,y)}\ketbra{v}{v}_VH^{f(x,y)}\}_{v\in\{0,1\}}$ in his local register $V_0$. 

Consider a general attack to the \QPVBBetaf~protocol, where Alice and Bob take the same role as in Section~\ref{section 2 basis no entanglement}, but in addition, prior to the execution of the protocol, they can pre-share entanglement. In the most general attack to \QPVBBetaf, Alice and Bob proceed as follows:

\begin{enumerate} \item Alice intercepts the qubit state with register $P$ sent by $V_0${}, and applies an arbitrary quantum operation to it and to a local register that she possesses, possibly entangling them. She keeps part of the resulting state, and sends the rest to Bob. Since the qubit $P$  can be sent arbitrarily slow by $V_0${} (the verifiers only time the classical information), this happens before Alice and Bob can intercept $x$ and $y$. 

\item Alice intercepts $x$ and Bob intercepts $y$. At this stage, Alice, Bob, and $V_0${} share a quantum state $\ket{\varphi}$, make a partition and let $q$ be the number of qubits that Alice and Bob each hold, recall that $V_0${} holds a qubit with register $V$, and thus the three parties share a quantum state $\ket{\varphi}$ of $2q+1$ qubits.  Alice and Bob apply a unitary $U_{A_\text{k}A_\text{c}}^{x}$ and $V_{B_\text{k}B_\text{c}}^{y}$ on their local registers $A_\text{k}A_\text{c}=:A$ and $B_\text{k}B_\text{c}=:B$, where k and c denote the registers that will be kept and communicated, respectively. Due to the Stinespring dilation, we consider unitary operations instead of quantum channels. They end up with the quantum state ${\ket{\psi_{xy}}=\mathbb{I}_{V}\otimes U_{A_\text{k}A_\text{c}}^x\otimes V_{B_\text{k}B_\text{c}}^y\ket\varphi}$.
\item Alice sends register $A_\text{c}$ and $x$ to Bob (and keeps register $A_\text{k}$), and Bob sends register $B_\text{c}$ and $y$ to Alice (and keeps register $B_\text{k})$. 

\item Alice and Bob perform POVMs  $\{ A^{xy}_{a}\}_{a\in\{0,1,\perp\}}$ and $\{ B^{xy}_{b}\}_{b\in\{0,1,\perp\}}$ on their local registers $A_\text{k}B_\text{c}=:A'$ and $B_\text{k}A_\text{c}=:B'$, and answer their outcomes $a$ and $b$ to their closest verifier, respectively. 
\end{enumerate}

\begin{figure}[ht]
    \centering
    \begin{tikzpicture}[node distance=3cm, auto]
   \node (V0) {$V_0$};   
    \node [left=2cm of V0](t0){};
    \node [below=7.3cm of t0](t1){};
    \draw [->] (t0) -- (t1) node[midway] {time};

    \node [right=1cm of V0] (A) {Alice};
    \node [right= 5cm  of A] (B) {Bob};
    \node [right= 2.5cm  of A] (P) {};
    \node [right=1cm of B] (V1) {$V_1$};
    \node [below =0.6cm of V0](below_V0){};
    \node [below =0.5cm of V1](below_V1){};

    \node [below =6.15cm of V0](V_ABans){};

    \node [below =7.1cm of V0, xshift=22pt](V0_ans){};
    \node [below =7.1cm of V1, xshift=-22pt](V1_ans){};
    
    \node [below =0.2cm of P](middle){$\ket\varphi_{VAB}$};
    \node [above =0.25cm of middle](middle0){};

    \node [below =0.1cm of middle](middle0){$\ket{\psi_{xy}}_{VAB}$};
    \node [below =0.5cm of A](A_intercepts_Q){};
    \node [below =0.5cm of B](B_intercepts_Q){};
    \node [below =0.5cm of A](A_intercepts_x){$U^x$};
    \node [below =0.5cm of B](B_intercepts_y){$V^y$};

    \node [above=0.5cm of A] (posV00) {};
    \node [left=2.5cm of posV00] (posV0) {};
    \node [above=0.5cm of B] (posV11) {};
    \node [right=2.5cm of posV11] (posV1) {};
    \draw [->] (posV0) -- (posV1) node[midway] {position};

    \node [below =5cm of A](A_commits){};
    \node [below =5cm of B](B_commits){};
    \node [below =6cm of A](A_answers){$\{A^{xy}_a\}_a$};
    \node [below =0.1cm of A_answers, ](A_POVM){};
    \node [below =6cm of B](B_answers){$\{B^{xy}_b\}_b$};
    \node [below =0.1cm of B_answers](B_POVM){};

    \node [left=0.3cm of A](leftA){};
    \node[below=5.6cm of leftA](cA){};
    \node [left=0.2cm of A](leftA2){};
    \node[below=6.6cm of leftA2](aA){};

    \node [right=0.3cm of B](leftB){};
    \node[below=5.6cm of leftB](cB){};
    \node [right=0.2cm of B](leftB2){};
    \node[below=6.6cm of leftB2](bB){};
    
    \node [below =6.2cm of V0](V0bis){};
    \node [below =6.2cm of V1](V1bis){};
    \node [below =7.2cm of V0](V0bis1){};
    \node [below =7.2cm of V1](V1bis1){};

    \draw [->, transform canvas={xshift=0pt, yshift = 0 pt}, quantum] (A_intercepts_x) -- (B_answers) node[midway] (x) {};

    \draw [->, transform canvas={xshift=0pt, yshift = 0 pt}, quantum] (B_intercepts_y) -- (A_answers) node[midway] (x) {};

    \draw [->, quantum] (A_intercepts_x) -- (A_answers) {};
    \draw [->, quantum] (B_intercepts_y) -- (B_answers) {};
    \draw [->] (A_answers) -- (V0_ans) {};
    \draw [->] (B_answers) -- (V1_ans) {};

    \node [right =2.5cm of A_answers](middle2){};
    \node [below =4.9cm of middle,yshift=-9pt](middle3){$\ket{\psi_{xy}}_{VA'B'}$};

\begin{scope}[on background layer]
    \fill[gray!30, opacity=0.45, rounded corners=2pt] 
        ($(A_intercepts_x) + (-2.5,0.7)$) rectangle ($(B_intercepts_y) + (0.7,+0.2)$);
\end{scope}

\begin{scope}[on background layer]
    \fill[gray!30, opacity=0.45, rounded corners=2pt] 
        ($(A_intercepts_x) + (-2.5,-0.7)$) rectangle ($(B_intercepts_y) + (0.7,-0.2)$);
\end{scope}

\begin{scope}[on background layer]
    \fill[gray!30, opacity=0.45, rounded corners=2pt] 
        ($(A_answers) + (-2.5,0.25)$) rectangle ($(B_answers) + (0.7,-0.25)$);
\end{scope}
    
\end{tikzpicture}
\caption{Schematic representation of a general attack on \QPVBBetaf. Straight lines represent classical information, and undulated lines represent quantum information, including $x$ and $y$. The gray-shaded regions represent the corresponding tripartite quantum states in them.}
\label{fig:attack-parallel_repBB84}
\end{figure}

See Fig.~\ref{fig:attack-parallel_repBB84} for a schematic representation of the general attack to \QPVBBetaf. The tuple $\mathbf S_{\eta}^{f-\text{BB84}}:=\{\ket\varphi,U^x,V^y,\{A^{xy}_a\}_a,\{B^{xy}_b\}_b\}_{x,y}$ will be called a $q$-qubit strategy for \QPVBBetaf. Then, given the strategy $\mathbf S_{\eta}^{f-BB84}$, the  probabilities that the verifiers, after the attackers' actions (for the random variable $\textsc{V}_{\textsc{AB}}$, as denoted above) record \textsc{Correct} (\textsc{c}), \textsc{Wrong} (\textsc{w}), \textsc{No photon} ($\perp$), and \textsc{Abort} ($\lightning$)  are, respectively, given by 
\begin{equation}\label{eq succesful 1 round attack with x and y}
\begin{split}
    q_{\textsc{c}}:=\pr{\textsc{V}_\textsc{AB}=\textsc{c}}&=\frac{1}{2^{2n}}\sum_{a\in \{0,1\},x,y\in\{0,1\}^n}\tr{\ketbra{\psi_{xy}}{\psi_{xy}} V^{f(x,y)}_a\otimes A^{xy}_a\otimes B^{xy}_a},\\
     q_{\textsc{w}}:=\pr{\textsc{V}_\textsc{AB}=\textsc{w}}&=\frac{1}{2^{2n}}\sum_{a\in \{0,1\},x,y\in\{0,1\}^n}\tr{\ketbra{\psi_{xy}}{\psi_{xy}} V^{f(x,y)}_a\otimes A^{xy}_{1-a}\otimes B^{xy}_{1-a}},\\
      q_{\perp}\pr{\textsc{V}_\textsc{AB}=\perp}&=\frac{1}{2^{2n}}\sum_{x,y\in\{0,1\}^n}\tr{\ketbra{\psi_{xy}}{\psi_{xy}} \mathbb{I}_V\otimes A^{xy}_{\perp}\otimes B^{xy}_{\perp}},\\
        q_{\lightning}:=\pr{\textsc{V}_\textsc{AB}=\lightning}&=\frac{1}{2^{2n}}\sum_{a\neq b\in \{0,1,\perp\},x,y\in\{0,1\}^n}\tr{\ketbra{\psi_{xy}}{\psi_{xy}} \mathbb{I}_V\otimes A^{xy}_a\otimes B^{xy}_b}.
\end{split}\end{equation}

\subsection{Security of \texorpdfstring{\QPVBBetaf}{f-eta-BB84 QPV}  }

Our goal is to show that if the number of qubits $q$ that the attackers hold at the beginning of the protocol is linear, then, given that they do not respond `$\perp$', their probability of being correct is strictly less than the corresponding probability of the honest prover. To this end, we define a relaxation of the condition of being correct, and we consider $q$-qubit strategies which have a high chance that the verifiers record \textsc{Correct} at the end of the protocol. More specifically, we will define a set of quantum states that are `good' for a given fixed input, conditioned on actually playing. The first definition, which is an extension of Definition 4.1 in \cite{bluhm2022single}, considers single round attacks that are `good' for $l\leq 2^{2n}$ pairs of $x,y$. The reason to do so is that the attackers could be wrong for pairs that might be asked with exponentially small probability. 
\begin{definition} Let $\varepsilon\geq0$ 
and $l \in \mathbb N$. A  $q$-qubit strategy\[\mathbf S_{\eta}^{f-BB84}=\{\ket\varphi,U^x,V^y,\{A^{xy}_a\}_a,\{B^{xy}_b\}_b\}_{x,y}\] for \QPVBBetaf~is \emph{$(\varepsilon$,$l$)-perfect} if there exists a set $\mathcal L$ with $\abs{\mathcal L}\geq l$, such that  the attackers
\begin{enumerate}
    \item `respond' with probability $\eta$ on these pairs:
\begin{equation}
    \tr{\ketbra{\psi_{xy}}{\psi_{xy}} \mathbb{I}_V\otimes A^{xy}_{\perp}\otimes B^{xy}_{\perp}}=1-\eta \text{ }\forall(x,y)\in\mathcal L,\\
\end{equation}
\item and, up to $\varepsilon$, they are \textsc{Correct} at least with the same probability as an honest prover:
\begin{equation}\label{eq:l-good strategy}
\sum_{a\in \{0,1\}}\tr{\ketbra{\psi_{xy}}{\psi_{xy}} V^{f(x,y)}_a\otimes A^{xy}_a\otimes B^{xy}_a}\geq \eta\big((1-p_{err})-\varepsilon\big) \text{ }\forall(x,y)\in\mathcal L.
\end{equation}
\end{enumerate}
\end{definition}

\begin{definition}\label{def S_i^e}
Let Alice and Bob hold arbitrary registers $A$ and $B$, respectively, and let $\varepsilon \geq 0$. Let $V$ be a qubit register. For $i\in\{0,1\}$, we define the set $\mathcal{S}_i^{\varepsilon}$ as
\begin{equation}\label{eq def S_i^epsilon}
\begin{split}
    \mathcal{S}_i^{\varepsilon}:=\{&\ket{\psi}_{VAB} \mid \exists \textrm{ POVMs } \{A^{i}_a\}_a \textrm{ and } \{B^{i}_b\}_b \textrm{ acting on } A \text{ and }B, \\& \text{ respectively}\textrm{ such that } \eqref{eq condition S_i^epsilon}\textrm{ and }\eqref{eq attacker's response rate eta}\textrm{ are fulfilled}  \},
    \end{split}
\end{equation}
where
\begin{equation}\label{eq condition S_i^epsilon}
    \sum_{a\in \{0,1\}}\tr{\ketbra{\psi}{\psi} V^{i}_a\otimes A^{i}_a\otimes B^{i}_a}\geq \eta\big((1-p_{err})-\varepsilon\big), 
\end{equation}
and 
\begin{equation}\label{eq attacker's response rate eta}
   \tr{\ketbra{\psi}{\psi} \mathbb{I}_V\otimes A^{i}_{\perp}\otimes B^{i}_{\perp}}=1-\eta.
\end{equation}
\end{definition}

Given input $i\in\{0,1\}$ and a state $\ket{\phi}$, fulfilling responding $a=b\neq\perp$ with probability $\eta$ and $\perp$ with probability $1-\eta$ for every input, and never responding $a\neq b$, the maximum probability of being correct for such input is given by 

\begin{equation}
\label{eq prob 0 eta}
\begin{split}
    p_{\phi}^{i,\eta}:=
    \frac{1}{\eta} \max_{\substack{\{A_a^i,B_a^i\}_{a\in\{0,1,\perp\}}\\\textrm{with }\tr{\ketbra{\phi}{\phi} A_{a}^i B_{b}^i}=0, \forall a\neq b\in\{0,1,\perp\}, \\\textrm{and }\tr{\ketbra{\phi}{\phi} A_{\perp}^i B_{\perp}^i}=1-\eta}}    \sum_{a\in\{0,1\}}\tr{\ketbra{\phi}{\phi}V_a^i A_a^i  B_a^i},
    \end{split}
\end{equation}
where $(i)$ indicates `on the specific input $i$', `$\mid \ket{\phi}$' indicates that the probabilities we obtained with the state $\ket{\phi}$,  $V_a^i$ are as in \eqref{eq V_0^x, V_1^x} and $\{A_a^i,B_a^i\}_{a\in\{0,1,\perp\}}$ are POVMs.
As a consequence of the SDP \eqref{eq upperbound p_ans all constraints BB84}, considering that, from \eqref{eq pcorrec+pwrong if no no photon=1} and $\Pr[\textsc{V}_\textsc{AB}=\textsc{Wrong}\mid 
\textsc{V}_\textsc{AB}\neq \textsc{No photon}]\leq \Pr[\textsc{V}_\textsc{AB}=\textsc{Wrong}]\leq p_{err}$, we have that $p_{win}+p_{err}\leq1$, and thus we find that there exists a function $w:[0,1]\rightarrow[1-\cos^2(\frac{\pi}{8}),1]$ such that for all states $\ket{\phi}$, regardless of their dimension,  upper bounds the performance of the attackers:
\begin{equation}\label{eq ineq SDP fct f}
    \frac{1}{2}\left(p_{\phi}^{0,\eta}+p_{\phi}^{1,\eta}\right)\leq w(\eta).
\end{equation}

Moreover, consider the following relaxation of \eqref{eq prob 0 eta}, where the restrictions are such that the attackers respond with different answers with probability $\xi$ and  
and have a response rate in the interval $[(1-\eta)-\xi,(1-\eta)+\xi]$,
\begin{equation} 
\label{eq prob 0 eta with error}
    \Tilde{p}_{\phi}^{i,\eta,\xi}=\frac{1}{\eta}\max_{\substack{\{A_a^i,B_a^i\}_{a\in\{0,1,\perp\}}\\\textrm{with }\tr{\ketbra{\phi}{\phi} A_{a}^i B_{b}^i}\leq \xi, \forall a\neq b\in\{0,1,\perp\}, \\\textrm{and }(1-\eta)-\xi\leq\tr{\ketbra{\phi}{\phi} A_{\perp}^i B_{\perp}^i}\leq(1-\eta)+\xi}}  \sum_{a\in\{0,1\}}\tr{\ketbra{\phi}{\phi}V_a^i A_a^i  B_a^i}.
\end{equation}

On the other hand, let $\xi=0.001$, and consider the relaxation of \eqref{eq upperbound p_ans all constraints BB84} consisting of replacing \eqref{eq restriction error epsilon BB84} by ${\expectedbraket{A^x_aB^x_b}\leq\xi}$ $\forall a\neq b\in\{0,1,\perp\}$, $\forall x\in\{0,1\}$ and \eqref{eq perr constraint sum ab BB84} by $ \sum_{ab}(2-\norm{V_a^x+V_b^{x'}})\expectedbraket{A_a^{x}B_{b}^{x'}}\leq p_{err}(4\xi+\sum_{a}\expectedbraket{A_a^{x}B_{a}^{x}}+\expectedbraket{A_a^{x'}B_{a}^{x'}})+8\xi$, where the latter inequality is obtained analogously to \eqref{eq perr constraint sum ab BB84} by bounding the terms $\expectedbraket{A^x_aB^x_b}\leq\xi$, for all $a\neq b$. This implies that there exists a function $\Tilde{w}^{\xi}:[0,1]\rightarrow(1-\cos^2(\frac{\pi}{8})+\xi,1]$, obtained by the relaxation of the SDP (and allowing extra $\xi$ for the response rate), such that for  all states $\ket{\phi}$, regardless of their dimension, upper bounds the performance of the attackers who are allowed to respond different answers with probability $\xi$ and have a response rate in the interval $[(1-\eta)-\xi,(1-\eta)+\xi]$:
\begin{equation}\label{eq ineq SDP with errors}
    \frac{1}{2}\left(\Tilde{p}_{\phi}^{0,\eta,\xi}+\Tilde{p}_{\phi}^{1,\eta,\xi}\right)\leq \Tilde{w}^{\xi}(\eta),
\end{equation}
and $\Tilde{w}^{\xi}(\eta)$ is such that $w(\eta)\leq \Tilde{w}^{\xi}(\eta)$. This inequality is due to the fact that the latter is obtained by a relaxation of the constraints of the SDP of the former.

Due to the fact that $p_{win}+p_{err}\leq1$, the plot in Fig.~\ref{Fig SR bb84} can be represented in terms of the winning probability $p_{win}$, see Fig.~\ref{Fig pwin bounded bu w and Tilde w}. The plotted points in Fig.~\ref{Fig pwin bounded bu w and Tilde w} represent a numerical approximation of the functions $w(\eta)$ and $\Tilde{w}^{\xi}(\eta)$.  

\begin{figure}[h]
\label{Fig pwin bounded bu w and Tilde w}
\centering
\includegraphics[width=100mm]{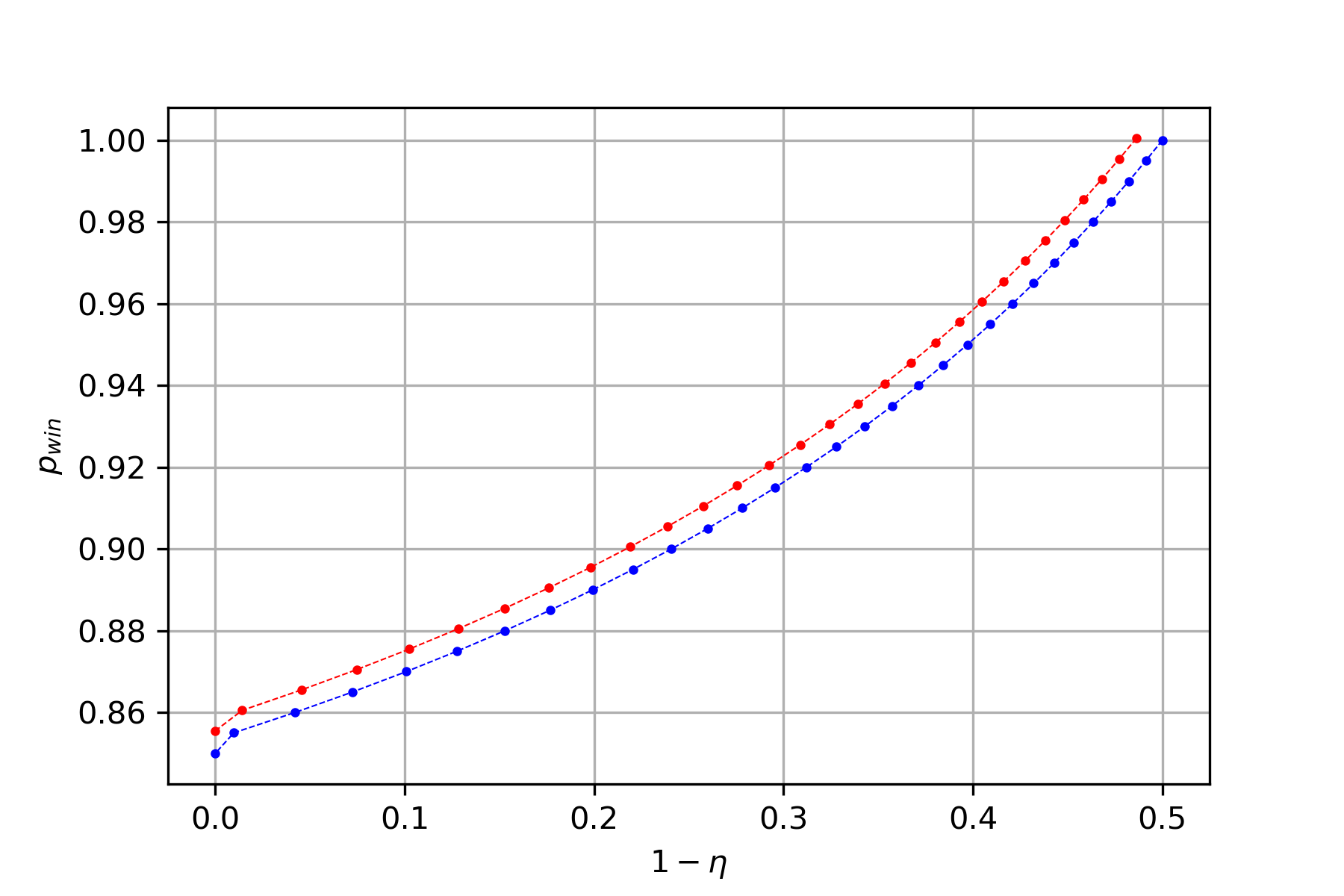}
\caption{Upper bounds of the winning probability given by \eqref{eq upperbound p_ans all constraints BB84} (blue dots), (equivalent representation of the blue pluses in Fig.~\ref{Fig SR bb84}), which corresponds to a numerical representation of the function $w(\eta)$. Red dots correspond to a numerical representation of  the function $\Tilde w^{\xi}(\eta)$, which is obtained by adding $\xi=0.001$ to the relaxation of \eqref{eq upperbound p_ans all constraints BB84} where the attackers are allowed to make errors with probability $\xi$. The continuous interpolation between values  is meant for a better viewing of the plot.}
\end{figure}

Now, we prove that the difference between the probabilities obtained by two quantum states projected into the same space is upper bounded by their trace distance. We then use this result to show that if two quantum states can be used to successfully attack around of the protocol with high probability with the POVMs  $\{A^{xy}_a\} \textrm{ and } \{B^{xy}_b\}$  for input $0$ and $1$ of a $q-$qubit strategy for \QPVBBetaf, respectively, these two states have to differ by at least a certain amount. These results are formalized in the next proposition and lemma.

\begin{prop}
Let  $\ket{\psi}$ and $\ket{\varphi}$ be two quantum states of (the same) arbitrary dimension, and let $\mathcal{D}(\ket{\psi},\ket{\varphi})$ denote their trace distance. Then, for every projector $\Pi$,
\begin{equation}\label{eq tr different states upper bounded by their distance}
    \abs{\tr{(\ketbra{\psi}{\psi}-\ketbra{\varphi}{\varphi})\Pi}}\leq  \mathcal{D}(\ket{\psi},\ket{\varphi}).
\end{equation}
\end{prop}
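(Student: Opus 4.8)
The plan is to exploit that the difference of two rank-one projectors is an extremely simple operator: it is Hermitian, traceless, and of rank at most two, so its entire spectral behaviour is governed by a single number which will turn out to be exactly the trace distance. First I would set $\Delta := \ketbra{\psi}{\psi} - \ketbra{\varphi}{\varphi}$ and observe that $\Delta = \Delta^\dagger$ and $\tr{\Delta} = 1 - 1 = 0$. Since $\Delta$ is supported on the (at most) two-dimensional subspace $\mathrm{span}\{\ket{\psi},\ket{\varphi}\}$, its restriction there is a traceless Hermitian matrix, whose eigenvalues therefore come in a symmetric pair $\{+\lambda,-\lambda\}$ with $\lambda \ge 0$ (all other eigenvalues vanish). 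Writing the spectral decomposition $\Delta = \lambda(\proj{e_+} - \proj{e_-})$ for orthonormal eigenvectors $\ket{e_\pm}$ is the structural heart of the argument.

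Next I would identify $\lambda$ with the trace distance. From the symmetric spectrum we immediately get $\norm{\Delta}_1 = 2\lambda$, hence $\mathcal{D}(\ket{\psi},\ket{\varphi}) = \tfrac12\norm{\Delta}_1 = \lambda$. To make $\lambda$ explicit (and to double-check the normalisation convention) I would compute $\tr{\Delta^2} = \tr{\proj{\psi}} + \tr{\proj{\varphi}} - 2\,\abs{\braket{\psi}{\varphi}}^2 = 2\bigl(1 - \abs{\braket{\psi}{\varphi}}^2\bigr)$, while the spectral form gives $\tr{\Delta^2} = 2\lambda^2$; comparing yields $\lambda = \sqrt{1 - \abs{\braket{\psi}{\varphi}}^2}$, the familiar pure-state trace distance.

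Finally I would establish the inequality itself. For any projector $\Pi$, the spectral form gives $\tr{\Delta\,\Pi} = \lambda\bigl(\bra{e_+}\Pi\ket{e_+} - \bra{e_-}\Pi\ket{e_-}\bigr)$. Since $0 \preceq \Pi \preceq \mathbb{I}$, both expectation values $\bra{e_\pm}\Pi\ket{e_\pm}$ lie in $[0,1]$, so the parenthesised quantity lies in $[-1,1]$, and therefore $\abs{\tr{\Delta\,\Pi}} \le \lambda = \mathcal{D}(\ket{\psi},\ket{\varphi})$, which is the claim. Equivalently, and without invoking purity, one can take the Jordan decomposition $\Delta = \Delta_+ - \Delta_-$ into orthogonal positive and negative parts and use $\tr{\Delta}=0$ to get $\tr{\Delta_+}=\tr{\Delta_-}=\tfrac12\norm{\Delta}_1 = \mathcal{D}$, then bound $0 \le \tr{\Delta_+\Pi}\le\tr{\Delta_+}$ and $0 \le \tr{\Delta_-\Pi}\le\tr{\Delta_-}$; this is just the standard variational characterisation of trace distance specialised to a projector. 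There is no genuine obstacle here, since the result is elementary; the only point demanding care is matching the paper's factor-of-$\tfrac12$ convention for $\mathcal{D}$, so that the eigenvalue $\lambda$ is identified with $\mathcal{D}(\ket{\psi},\ket{\varphi})$ rather than with $\norm{\Delta}_1$.
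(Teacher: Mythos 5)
Your proposal is correct, and in fact it contains the paper's proof verbatim as its closing remark: the paper's argument is exactly your ``equivalently'' route, writing $\ketbra{\psi}{\psi}-\ketbra{\varphi}{\varphi}=Q-S$ with $Q,S\succeq 0$ of orthogonal support and $\tr{Q}=\tr{S}=\mathcal{D}(\ket{\psi},\ket{\varphi})$ (cited from Nielsen--Chuang), then bounding $\tr{(Q-S)\Pi}\leq\tr{Q\Pi}\leq\tr{Q}\norm{\Pi}=\mathcal{D}(\ket{\psi},\ket{\varphi})$. Your primary route is the same decomposition made concrete for pure states: since $\Delta$ is traceless, Hermitian, and of rank at most two, the Jordan parts are just $Q=\lambda\proj{e_+}$ and $S=\lambda\proj{e_-}$, so the two arguments are two faces of one decomposition rather than genuinely different proofs. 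Still, your concretization buys three small things the paper's version does not spell out: the explicit value $\lambda=\sqrt{1-\abs{\braket{\psi}{\varphi}}^2}$; a direct bound on the \emph{absolute value} $\abs{\tr{\Delta\Pi}}\leq\lambda$ (the paper's displayed chain only bounds $\tr{\Delta\Pi}$ from above, leaving the lower bound to an implicit symmetry argument swapping $\ket{\psi}$ and $\ket{\varphi}$); and avoidance of the step $\tr{Q\Pi}\leq\tr{Q}\norm{\Pi}$ with $\norm{\Pi}=1$, which strictly speaking needs the trivial case $\Pi=0$ handled separately. These are cosmetic improvements, not corrections, and your proof is complete as written.
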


\begin{proof} There exist $Q$ and $S$ positive operators with orthogonal support \cite[Chapter 9]{NielsenChuang} such that 
\begin{equation}
    \ketbra{\psi}{\psi}-\ketbra{\varphi}{\varphi}=Q-S, \textrm{ and }  \mathcal{D}(\ket{\psi},\ket{\varphi})=\tr{Q}=\tr{S}.
\end{equation}
Then,
\begin{equation}
\begin{split}
    \tr{(\ketbra{\psi}{\psi}-\ketbra{\varphi}{\varphi}\Pi)}&=\tr{(Q-S)\Pi}= \tr{Q\Pi}-\tr{S\Pi}\leq \tr{Q\Pi}\\&\leq \tr{Q}\|\Pi\|= \mathcal{D}(\ket{\psi},\ket{\varphi}),
    \end{split}
\end{equation}
where we used that $S$ is positive definite and $\|\Pi\|=1$.
\end{proof}

\begin{lemma}\label{lemma D>=eta Delta}
Let $\ket{\psi}$ and $\ket{\varphi}$ be such that $p_{\psi}^{1,\eta}\geq \Tilde{w}^{\xi}(\eta)+\Delta$ and $p_{\varphi}^{0,\eta}\geq \Tilde{w}^{\xi}(\eta)+\Delta$, for some $\Delta>0$, which, due to Definition \ref{def S_i^e}, $\ket{\psi}\in\mathcal S_0^{\varepsilon}$ and $\ket{\varphi}\in\mathcal S_1^{\varepsilon}$, for $\varepsilon=1-(\Tilde{w}^{\xi}(\eta)+\Delta)$. 
Then,
\begin{equation}
    \mathcal{D}(\ket{\psi},\ket{\varphi})\geq \eta \Delta.
\end{equation}
\end{lemma}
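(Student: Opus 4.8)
The plan is to \emph{transplant} the measurements that are optimal for $\ket{\psi}$ onto $\ket{\varphi}$, and to argue that unless the two states are far apart in trace distance, this transplant would let $\ket{\varphi}$ perform well on \emph{both} inputs at once, contradicting the SDP bound \eqref{eq ineq SDP with errors}.

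First I would fix POVMs $\{A_a^1,B_a^1\}_{a\in\{0,1,\perp\}}$ attaining the maximum that defines $p_{\psi}^{1,\eta}$ in \eqref{eq prob 0 eta}. By the purification remark every strategy may be taken projective, so $V_a^1$, $A_a^1$, $B_a^1$ are orthogonal projectors acting on the three separate registers; consequently $\Pi:=\sum_{a\in\{0,1\}}V_a^1\otimes A_a^1\otimes B_a^1$ is itself a projector, the two summands being mutually orthogonal because $V_0^1V_1^1=0$. By construction $\tr{\ketbra{\psi}{\psi}\Pi}=\eta\, p_{\psi}^{1,\eta}$, so applying the bound \eqref{eq tr different states upper bounded by their distance} to $\Pi$ gives
\[
\tr{\ketbra{\varphi}{\varphi}\Pi}\ \geq\ \tr{\ketbra{\psi}{\psi}\Pi}-\mathcal{D}(\ket{\psi},\ket{\varphi})\ =\ \eta\, p_{\psi}^{1,\eta}-\mathcal{D}(\ket{\psi},\ket{\varphi}).
\]
Next I would verify that these transplanted measurements are \emph{feasible} for the relaxed quantity $\Tilde{p}_{\varphi}^{1,\eta,\xi}$ of \eqref{eq prob 0 eta with error}. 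On $\ket{\psi}$ they satisfy the exact constraints $\expectedbraket{A_a^1 B_b^1}=0$ for $a\neq b$ and $\expectedbraket{A_{\perp}^1 B_{\perp}^1}=1-\eta$; applying \eqref{eq tr different states upper bounded by their distance} again to the projectors $A_a^1\otimes B_b^1$ and $A_{\perp}^1\otimes B_{\perp}^1$ shows that on $\ket{\varphi}$ each of these quantities moves by at most $\mathcal{D}(\ket{\psi},\ket{\varphi})$. Hence, as long as $\mathcal{D}(\ket{\psi},\ket{\varphi})\leq\xi$, the transplanted POVMs meet the $\xi$-relaxed abort and response-rate constraints and are admissible, so
\[
\Tilde{p}_{\varphi}^{1,\eta,\xi}\ \geq\ \tfrac{1}{\eta}\tr{\ketbra{\varphi}{\varphi}\Pi}\ \geq\ p_{\psi}^{1,\eta}-\tfrac{1}{\eta}\mathcal{D}(\ket{\psi},\ket{\varphi})\ \geq\ \Tilde{w}^{\xi}(\eta)+\Delta-\tfrac{1}{\eta}\mathcal{D}(\ket{\psi},\ket{\varphi}).
\]

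Finally I would close with the SDP bound. Since the relaxation only enlarges the feasible set, $\Tilde{p}_{\varphi}^{0,\eta,\xi}\geq p_{\varphi}^{0,\eta}\geq \Tilde{w}^{\xi}(\eta)$, and inequality \eqref{eq ineq SDP with errors} then forces $\Tilde{p}_{\varphi}^{1,\eta,\xi}\leq 2\Tilde{w}^{\xi}(\eta)-\Tilde{p}_{\varphi}^{0,\eta,\xi}\leq \Tilde{w}^{\xi}(\eta)$. Comparing this upper bound with the lower bound of the previous step yields $\Tilde{w}^{\xi}(\eta)+\Delta-\tfrac{1}{\eta}\mathcal{D}(\ket{\psi},\ket{\varphi})\leq \Tilde{w}^{\xi}(\eta)$, i.e.\ $\mathcal{D}(\ket{\psi},\ket{\varphi})\geq \eta\Delta$, as claimed.

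The main obstacle is the bookkeeping in the middle step. The bound \eqref{eq tr different states upper bounded by their distance} controls only expectations of \emph{projectors}, so I must check carefully that $\Pi$ and the individual constraint operators $A_a^1\otimes B_b^1$, $A_{\perp}^1\otimes B_{\perp}^1$ are genuinely projectors — which is exactly what purification buys us — and I must route the small constraint violations (each $\leq\mathcal{D}$) through the $\xi$-relaxed program $\Tilde{p}^{\,i,\eta,\xi}$ and its bound $\Tilde{w}^{\xi}$ rather than through the exact $p^{i,\eta}$ and $w$; this is precisely why those relaxations were introduced. The only loose end is the feasibility hypothesis $\mathcal{D}(\ket{\psi},\ket{\varphi})\leq\xi$: in the complementary regime $\mathcal{D}(\ket{\psi},\ket{\varphi})>\xi$ the conclusion holds trivially whenever $\eta\Delta\leq\xi$, which is the small-$\Delta$ regime in which the lemma is subsequently applied in the counting argument.
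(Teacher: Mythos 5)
Your proof follows the same route as the paper's own: transplant the measurements that are optimal for $\ket{\psi}$ onto $\ket{\varphi}$, use \eqref{eq tr different states upper bounded by their distance} to control both the objective and the constraint values, conclude feasibility for the relaxed program \eqref{eq prob 0 eta with error}, and close with the averaged bound \eqref{eq ineq SDP with errors}. Your bookkeeping differs in two compensating ways: you apply \eqref{eq tr different states upper bounded by their distance} once to the orthogonal sum $\Pi=\sum_{a\in\{0,1\}}V_a^1\otimes A_a^1\otimes B_a^1$, losing a single $\mathcal{D}$ where the paper loses $2\xi$ by treating the two terms separately, and you then discard the $+\Delta$ in the hypothesis on $\ket{\varphi}$ where the paper retains it; both routes land on $\mathcal{D}\geq\eta\Delta$. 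Incidentally, the projectivity detour is unnecessary: the paper's proposition, as proven, only uses $0\preceq\Pi$ and $\norm{\Pi}\leq 1$, and these hold for $\Pi$ built from arbitrary POVM elements, since $\sum_{a\in\{0,1\}}V_a^1\otimes A_a^1\otimes B_a^1\preceq\sum_{a\in\{0,1\}}V_a^1\otimes\mathbb{I}\otimes\mathbb{I}=\mathbb{I}$ (this also spares you the issue that Naimark dilation would have to be applied consistently to both states, since the state here is fixed rather than optimized over).

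The genuine problem is the ``loose end'' you flag at the end; it is not loose but load-bearing. Your feasibility transfer requires $\mathcal{D}(\ket{\psi},\ket{\varphi})\leq\xi$, and you cover the complementary case only under the assumption $\eta\Delta\leq\xi$, asserting that this is the regime in which the lemma is later applied. In this paper it is not: the lemma feeds into Lemma \ref{lem:stable_E4}, Lemma \ref{theorem q>=n/2-5 for BB84 eta} and Theorem \ref{theorem attackers worse than prover BB84} with $\Delta=0.013$, $\eta\in(0.509,1]$ and $\xi=0.005$, so $\eta\Delta\geq 0.509\cdot 0.013>0.005=\xi$ throughout. Hence the regime $\xi<\mathcal{D}<\eta\Delta$ is nonempty in every application, and there your argument establishes nothing; what you actually prove is $\mathcal{D}\geq\min(\xi,\eta\Delta)$, which is strictly weaker than the stated claim whenever $\eta\Delta>\xi$. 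The paper's own proof sidesteps the case split by \emph{defining} $\xi:=\mathcal{D}(\ket{\psi},\ket{\varphi})$ at the outset, so that the relaxation parameter tracks the perturbation exactly and the transplanted measurements are automatically feasible; the price is that $\Tilde{w}^{\xi}$ in the hypothesis must then be read at $\xi=\mathcal{D}$ (harmless when $\mathcal{D}\leq\xi$ by monotonicity of $\Tilde{w}^{\xi}$ in $\xi$, but a genuinely different statement otherwise). To close your fixed-$\xi$ version you must either adopt that convention and restate the hypothesis accordingly, or supply a separate argument for the case $\mathcal{D}>\xi$.
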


Notice that the hypothesis of Lemma \ref{lemma D>=eta Delta} imply that $p_{\psi}^{1,\eta}, p_{\varphi}^{0,\eta}\geq \Tilde{w}^{\xi}(\eta)+\Delta>w(\eta)$ and thus these two states perform better in inputs $1$ and $0$, respectively, than any state would perform on average on both inputs. The greater is $\Delta$, the better they can perform. 
\begin{proof} Let $\xi=\mathcal{D}(\ket{\psi},\ket{\varphi})$ and $\psi=\ketbra{\psi}{\psi}$, $\varphi=\ketbra{\varphi}{\varphi}$. Subtracting and adding $\varphi$ to $\psi$ in Equation~\eqref{eq prob 0 eta} for $i=1$,
\begin{equation}\label{eq inequality p ptilde and delta}
\begin{split}
    \eta p_{\psi}^{1,\eta}&=\max_{\substack{\{A_a^1,B_a^1\}_{a\in\{0,1,\perp\}}\\\textrm{with }\tr{\psi A_{a}^1 B_{b}^1}=0, \forall a\neq b\in\{0,1,\perp\}, \\\textrm{and }\tr{\psi A_{\perp}^1 B_{\perp}^1}=1-\eta}}    \sum_{a\in\{0,1\}}\tr{(\psi-\varphi+\varphi)V_a^1 A_a^1  B_a^1}\\&\leq 2\xi+\max_{\substack{\{A_a^1,B_a^1\}_{a\in\{0,1,\perp\}}\\\textrm{with }\tr{\psi A_{a}^1 B_{b}^1}=0, \forall a\neq b\in\{0,1,\perp\}, \\\textrm{and }\tr{\psi A_{\perp}^1 B_{\perp}^1}=1-\eta}}    \sum_{a\in\{0,1\}}\tr{\varphi V_a^1 A_a^1  B_a^1}\\&\leq 2\xi+
    \max_{\substack{\{A_a^1,B_a^1\}_{a\in\{0,1,\perp\}}\\\textrm{with }\tr{\varphi A_{a}^1 B_{b}^1}\leq \xi, \forall a\neq b\in\{0,1,\perp\}, \\\textrm{and }(1-\eta)-\xi\leq\tr{\varphi A_{\perp}^1 B_{\perp}^1}\leq(1-\eta)+\xi}}  \sum_{a\in\{0,1\}}\tr{\varphi V_a^1 A_a^1  B_a^1}=2\xi +\eta \Tilde{p}_{\varphi}^{1,\eta,\xi},
    \end{split}
\end{equation}
where the first bound by $2\xi$ comes from \eqref{eq tr different states upper bounded by their distance} and we used that, because of \eqref{eq tr different states upper bounded by their distance}, the condition $\tr{\psi A_{a}^1 B_{b}^1}=0, \forall a\neq b\in\{0,1,\perp\}$ implies $\tr{\varphi A_{a}^1 B_{b}^1}\leq \xi, \forall a\neq b\in\{0,1,\perp\}$ and condition $\tr{\psi A_{\perp}^1 B_{\perp}^1}=1-\eta$ implies  $(1-\eta)-\xi\leq\tr{\varphi A_{\perp}^1 B_{\perp}^1}\leq(1-\eta)+\xi$.\\
Combining \eqref{eq inequality p ptilde and delta}, the hypothesis $p_{\psi}^{1,\eta}\geq \Tilde{w}^{\xi}(\eta)+\Delta$ and Equation \eqref{eq ineq SDP with errors}, we have
\begin{equation}\label{equation ineq p tilde}
    \Tilde{p}_{\varphi}^{0,\eta,\xi}\leq \Tilde{w}^{\xi}(\eta)-\Delta+\frac{2\xi}{\eta}.
\end{equation}
On the other hand, since $\Tilde{p}_{\varphi}^{0,\eta,\xi}$ is obtained by relaxing the restrictions of $p_{\varphi}^{0,\eta}$, we have that $\Tilde{p}_{\varphi}^{0,\eta,\xi}\geq p_{\varphi}^{0,\eta}$ and, by hypothesis,  $p_{\varphi}^{0,\eta}\geq \Tilde{w}^{\xi}(\eta)+\Delta$. These, together with \eqref{equation ineq p tilde}, lead to $\xi\geq \eta \Delta$.
\end{proof}

Notice that Lemma \ref{lemma D>=eta Delta} implies that Alice and Bob in some sense have to decide what strategy they follow before they communicate. Consequently, if the dimension of the state they share is small enough, a classical description of the first part of their strategy yields a compression of $f$.
The notion of the following definition captures this classical compression.

 \begin{definition}\cite{bluhm2022single}
 \label{def:classical-rounding}
Let $q$, $k$, $n \in \mathbb N$, $\varepsilon > 0$. Then, 
\begin{equation*}
g:\{0,1\}^{3k} \to \{0,1\}    
\end{equation*}
is an \emph{$(\varepsilon, q)$-classical rounding} of size $k$ if for all $f:\{0,1\}^{2n} \to \{0,1\}$, for all states $\ket{\psi}$ on $2q+1$ qubits, for all $l \in \{1, \ldots, 2^{2n}\}$ and for all $(\varepsilon, l)$-perfect $q$-qubit strategies for \QPVBBf, there are functions $f_A:\{0,1\}^n \to \{0,1\}^{k}$, $f_B:\{0,1\}^n \to \{0,1\}^{k}$ and $\lambda \in \{0,1\}^{k}$ such that $g (f_A(x), f_B(y), \lambda) =f(x,y)$ on at least $l$ pairs $(x,y)$.
\end{definition}

\begin{lemma} \label{lemma size delta-net} \cite{book_prob_banach_spaces}
Let $|||*|||$ be any norm on $\mathbb{R}^{n_0}$, for $n_0\in\mathbb{N}$. There is a $\delta$-net $S$ of the unit sphere of $(\mathbb{R}^{n_0},|||*|||)$ of cardinality at most $(1+2/\delta)^{n_0}$.
\end{lemma}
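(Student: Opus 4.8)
The plan is to use the classical greedy packing argument: build a \emph{maximal} $\delta$-separated subset of the unit sphere, observe that maximality automatically forces it to be a $\delta$-net, and then bound its cardinality by a volume comparison in $\mathbb{R}^{n_0}$.

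First I would let $\Sigma := \{v \in \mathbb{R}^{n_0} : |||v||| = 1\}$ denote the unit sphere and let $S = \{v_1, \dots, v_N\}$ be a maximal subset of $\Sigma$ with the separation property that $|||v_i - v_j||| \geq \delta$ for all $i \neq j$; such a maximal set exists because $\Sigma$ is compact (all norms on $\mathbb{R}^{n_0}$ are equivalent, so $\Sigma$ is closed and bounded). I claim $S$ is a $\delta$-net of $\Sigma$. Indeed, for any $w \in \Sigma$, if $w$ were at distance at least $\delta$ from every $v_i$, then $S \cup \{w\}$ would still be $\delta$-separated, contradicting the maximality of $S$; hence some $v_i$ satisfies $|||w - v_i||| < \delta$.

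Next I would bound $N = |S|$ by a packing estimate. Write $B(c, r) := \{v \in \mathbb{R}^{n_0} : |||v - c||| \leq r\}$ for the closed $|||\cdot|||$-ball. Because the points of $S$ are $\delta$-separated, the balls $B(v_i, \delta/2)$ are pairwise disjoint; and since each $v_i$ lies on the unit sphere, the triangle inequality gives $B(v_i, \delta/2) \subseteq B(0, 1 + \delta/2)$ for every $i$. Let $\mathrm{vol}$ denote Lebesgue measure on $\mathbb{R}^{n_0}$ and $V := \mathrm{vol}(B(0,1)) \in (0,\infty)$ the volume of the unit ball. By translation invariance and the dilation scaling $\mathrm{vol}(B(c, r)) = V r^{n_0}$, disjointness together with containment yields
\[
N \cdot V (\delta/2)^{n_0} = \sum_{i=1}^{N} \mathrm{vol}(B(v_i, \delta/2)) \leq \mathrm{vol}(B(0, 1 + \delta/2)) = V (1 + \delta/2)^{n_0}.
\]
Dividing through by $V (\delta/2)^{n_0}$ gives $N \leq \big((1+\delta/2)/(\delta/2)\big)^{n_0} = (1 + 2/\delta)^{n_0}$, which is the claimed bound.

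There is no substantive obstacle here, as this is a textbook volume-packing estimate; the only point deserving a word of care is the scaling identity $\mathrm{vol}(B(0,r)) = V r^{n_0}$ together with the finiteness and positivity of $V$. Both rely on the fact that in finite dimension every norm ball is a bounded convex body with nonempty interior, hence has finite positive Lebesgue volume, and that Lebesgue measure transforms by the $n_0$-th power under dilation. Once this is granted, the disjointness-plus-containment comparison is immediate and the bound follows.
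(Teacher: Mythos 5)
Your proof is correct: the paper itself gives no proof of this lemma (it is quoted directly from the cited reference), and your maximal-separated-set plus volume-comparison argument is precisely the standard proof found there, yielding exactly the bound $(1+2/\delta)^{n_0}$. The only point worth a parenthetical remark is that two closed balls $B(v_i,\delta/2)$, $B(v_j,\delta/2)$ with $|||v_i-v_j|||=\delta$ can touch, so strictly one should argue with their (disjoint) interiors; since the boundary of a convex body has Lebesgue measure zero, this does not affect the volume count.
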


\begin{lemma}  Let $\Delta> 0$, and let $0\leq\varepsilon\leq\varepsilon_0$, where $\varepsilon_0$ is such that $\ket{\psi_i}\in\mathcal{S}_i^{\varepsilon}$ for $i\in\{0,1\}$ implies $\mathcal{D}(\ket{\psi_0},\ket{\psi_1})\geq\eta\Delta$. Then there is an $(\varepsilon,q)$-classical rounding of size $k=\log(\lceil\frac{4}{2^{\frac{2}{3}} (\eta \Delta +2)^{\frac{1}{3}}-2}\rceil)2^{2q+2}$.  
\end{lemma}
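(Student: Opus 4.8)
The plan is to follow the compression strategy of \cite{bluhm2022single}: given an arbitrary $(\varepsilon,l)$-perfect $q$-qubit strategy for \QPVBBf\ (with shared state $\ket{\psi}$ and first-round unitaries $U_A^x$, $U_B^y$), I would convert the geometric separation guaranteed by the hypothesis into a short classical description of $f$ that is correct on $l$ inputs. The central observation is that for each of the $l$ ``good'' pairs $(x,y)$ the post-first-round state $\ket{\psi_{xy}} = (U_A^x \otimes \mathbb I)(\mathbb I \otimes U_B^y)\ket{\psi}$ lies in $\mathcal S_{f(x,y)}^{\varepsilon}$, while by the standing assumption $\varepsilon \leq \varepsilon_0$ the sets $\mathcal S_0^\varepsilon$ and $\mathcal S_1^\varepsilon$ are separated in trace distance by at least $\eta\Delta$. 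Hence any procedure that reconstructs $\ket{\psi_{xy}}$ up to trace distance strictly less than $\eta\Delta/2$ can correctly decide $f(x,y)$ by reporting which of the two (fixed, strategy-independent) sets the reconstruction is closer to; this decision rule is exactly what the universal function $g$ will implement.

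To make the description short I would fix, via Lemma~\ref{lemma size delta-net}, a $\delta$-net $S$ of the unit sphere of the real $2^{2q+2}$-dimensional space underlying the $2q+1$-qubit pure states, so that $|S| \leq (1+2/\delta)^{2^{2q+2}}$ and a single net point is named by $k = \lceil \log|S|\rceil \leq 2^{2q+2}\log\lceil 1+2/\delta\rceil$ bits. I would then set $\lambda$ to index the net point $\ket{\tilde\psi}$ nearest $\ket{\psi}$, let $f_A(x)$ index the net point nearest $(U_A^x \otimes \mathbb I)\ket{\tilde\psi}$, and let $f_B(y)$ index the net point nearest $(\mathbb I \otimes U_B^y)\ket{\tilde\psi}$; crucially, once $\lambda$ is fixed each of $f_A,f_B$ depends only on its own argument. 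The function $g$ receives these three length-$k$ strings, uses $\ket{\tilde\psi}$ together with the other two net points to reconstruct approximations $\tilde U_A^x$, $\tilde U_B^y$ of the two unitaries on the supports of the corresponding reduced states of $\ket{\tilde\psi}$, applies $\tilde U_A^x \otimes \tilde U_B^y$ to $\ket{\tilde\psi}$ to obtain $\hat\psi_{xy}$, and outputs the index of the nearer of $\mathcal S_0^\varepsilon,\mathcal S_1^\varepsilon$. Because $U_A^x$ and $U_B^y$ act on disjoint tensor factors and $\ket{\tilde\psi}$ is supported within $\mathrm{supp}\,\rho_A \otimes \mathrm{supp}\,\rho_B$, this reconstruction would be exact if the three net points were exact.

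The main work, and the step I expect to be the real obstacle, is the error analysis bounding $\mathcal D(\hat\psi_{xy}, \ket{\psi_{xy}})$ in terms of $\delta$. I would run a telescoping hybrid argument, replacing $\ket{\psi}$ by $\ket{\tilde\psi}$ and each exact unitary action by its net-approximated version one at a time, using that unitaries preserve distances and that each net point is within $\delta$ of its target; the three successive substitutions compound multiplicatively and, after converting between the Euclidean norm used for the net and the trace distance used for the separation, yield a bound of the form $(1+\delta)^3 - 1$. Imposing $(1+\delta)^3 - 1 \leq \eta\Delta/2$ fixes $\delta = (1+\eta\Delta/2)^{1/3} - 1$, and substituting into $k \leq 2^{2q+2}\log\lceil 1+2/\delta\rceil$ gives precisely $k = \log\big(\lceil \tfrac{4}{2^{2/3}(\eta\Delta+2)^{1/3}-2}\rceil\big)\,2^{2q+2}$. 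The two delicate points I would watch are: first, that the reconstructed unitaries are only pinned down on the supports of the reduced states, so one must argue their uncontrolled action never reaches $\ket{\tilde\psi}$; and second, that the norm-conversion constants are tracked consistently, since it is exactly the genuinely cubic (rather than merely additive) compounding that produces the stated cube-root expression for the net parameter.
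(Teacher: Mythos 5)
Your overall architecture (a $\delta$-net, short indices $f_A(x),f_B(y),\lambda$, and a universal $g$ that reconstructs the post-first-round state and outputs the index of the nearer of $\mathcal S_0^{\varepsilon},\mathcal S_1^{\varepsilon}$) matches the paper's, but your choice of \emph{what} to net breaks the error analysis, and this is a genuine gap. The paper (Lemma \ref{lemma classical rounding size}, which the proof of this lemma invokes) uses \emph{three} nets: one over pure states in Euclidean norm for $\ket{\psi}$, and two over the \emph{unitaries} $U_A^x,U_B^y$ in \emph{operator} norm, so that $f_A(x)$ and $f_B(y)$ name operator-norm-$\delta$ approximations $U_A,U_B$ of the unitaries themselves; the bound $3\delta+3\delta^2+\delta^3=(1+\delta)^3-1$ of \eqref{eq distance in the net} is exactly the expansion of $(U_A+E_A)\otimes(U_B+E_B)(\ket{\varphi}+\ket{e})-U_A\otimes U_B\ket{\varphi}$ with $\norm{E_A}_{\infty},\norm{E_B}_{\infty},\norm{\ket{e}}_2\le\delta$, and the multiplicative compounding is available only because the unitary errors are controlled in operator norm. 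You instead net only states and let $f_A(x)$ name a $\delta$-approximation $\hat\alpha$ of the \emph{image} $(U_A^x\otimes\mathbb{I})\ket{\tilde\psi}$, asking $g$ to reconstruct $\tilde U_A^x$ from $\hat\alpha$ and $\ket{\tilde\psi}$. A $\delta$-accurate image does \emph{not} give a $\delta$-accurate unitary, even on the support of the reduced state: writing $\ket{\tilde\psi}$ as a matrix $M$ (so $(U_A^x\otimes U_B^y)\ket{\tilde\psi}\leftrightarrow U_A^xM(U_B^y)^T$), your reconstruction is $\hat A M^{+}\hat B$ with $\hat A=U_A^xM+E_A$, $\hat B=M(U_B^y)^T+E_B$, $\norm{E_A}_2,\norm{E_B}_2\le\delta$, and the cross term $E_AM^{+}E_B$ can have norm $\delta^2/\sigma_{\min}(M)$, where $\sigma_{\min}(M)$ is the smallest nonzero Schmidt coefficient of $\ket{\tilde\psi}$ --- a quantity nothing in the hypothesis controls and which the attackers can make arbitrarily small. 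So the $(1+\delta)^3-1$ bound you invoke does not hold for your scheme, and your first ``delicate point'' is misdiagnosed: the danger is not the unitary's action off the support, but the instability of the reconstruction on directions of small Schmidt weight inside it.

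Your scheme is in fact repairable, but not to the stated constant, which is why the discrepancy matters. Since $\norm{(X\otimes U)\ket{v}}_2=\norm{(X\otimes\mathbb{I})\ket{v}}_2$ for any operator $X$ and unitary $U$, if $g$ picks \emph{any} pair of unitaries $(V_A,V_B)$ consistent with the net data (i.e.\ $\norm{(V_A\otimes\mathbb{I})\ket{\tilde\psi}-\hat\alpha}_2\le\delta$ and $\norm{(\mathbb{I}\otimes V_B)\ket{\tilde\psi}-\hat\beta}_2\le\delta$; the true pair qualifies, so one exists), then telescoping gives
\begin{equation*}
\bigl\| (V_A\otimes V_B)\ket{\tilde\psi}-(U_A^x\otimes U_B^y)\ket{\psi}\bigr\|_2\le \delta+2\delta+2\delta=5\delta ,
\end{equation*}
an \emph{additive}, not multiplicative, bound. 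Combined with Lemma \ref{lemma unit vectors trace distance norm} this forces $5\delta<\eta\Delta/2$, i.e.\ $\delta<\eta\Delta/10$, and Lemma \ref{lemma size delta-net} then yields $k=\log(\lceil 1+20/(\eta\Delta)\rceil)2^{2q+2}$, strictly larger than the claimed $k=\log(\lceil\frac{4}{2^{2/3}(\eta\Delta+2)^{1/3}-2}\rceil)2^{2q+2}$ (whose $\delta=(1+\eta\Delta/2)^{1/3}-1\approx\eta\Delta/6$ for small $\eta\Delta$). Your arithmetic reproduces the paper's formula only because you borrowed an error bound that your netting scheme does not support. To obtain the lemma as stated, do what the paper does: spend two of the three $k$-bit strings on operator-norm $\delta$-nets over the unitaries $U_A^x$ and $U_B^y$ themselves, so that the cubic compounding bound applies directly.
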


\begin{proof} Sketch (see Lemma \ref{lemma classical rounding size} for a detailed proof of the generalized version). Consider a $\delta$-net in Euclidean norm for the set of pure state on $2q+1$ qubits, where the net has cardinality at most $2^k$. Following the proof of Lemma \ref{lemma classical rounding size} analogously, we have that $\delta$ is such that
$
    3\delta+3\delta^2+\delta^3<\eta\Delta/2,
$
which holds for $\delta<(2+\eta\Delta)^{\frac{1}{3}}/2^{\frac{1}{3}}-1$. By Lemma \ref{lemma size delta-net}, we obtain the size $k=\log(\lceil\frac{4}{2^{\frac{2}{3}} (\eta \Delta +2)^{\frac{1}{3}}-2}\rceil)2^{2q+2}$. The remaining part of the proof is analogous to the proof of Lemma \ref{lemma classical rounding size}.
\end{proof}

\begin{lemma}
\label{lem:stable_E4}
Let $\Delta=0.013$, $\eta\in(0.53,1]$, $\varepsilon \in [0,1]$ $n$, $k$, $q \in \mathbb N$, $n \geq 10$. Moreover, fix an $(\varepsilon, q)$-classical rounding $g$ of size $k$ with  
   $k=\log(\lceil\frac{4}{2^{\frac{2}{3}} (\eta \Delta +2)^{\frac{1}{3}}-2}\rceil)2^{2q+2}$. Let $
q \leq \frac{1}{2}n-5$. Then, a uniformly random $f: \{0,1\}^{2n} \to \{0,1\}$ fulfills the following with probability at least $1 - 2^{-2^{n}}$:
 For any $f_A:\{0,1\}^n \to \{0,1\}^{k}$, $f_B:\{0,1\}^n \to \{0,1\}^{k}$, $\lambda \in \{0,1\}^{k}$, the equality $g(f_A(x), f_B(y), \lambda) = f(x,y)$ holds on less than $3/4$ of all pairs $(x,y)$.
\end{lemma}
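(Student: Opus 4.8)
The plan is to use the probabilistic method: show that for the fixed rounding $g$, no single triple $(f_A, f_B, \lambda)$ can agree with a uniformly random $f$ on more than a $3/4$ fraction of pairs, and then union-bound over all triples. First I would fix $g$ together with an arbitrary triple $(f_A, f_B, \lambda)$ and set $h(x,y) := g(f_A(x), f_B(y), \lambda)$; once the triple is fixed this is a deterministic $\{0,1\}$-valued function on the $N := 2^{2n}$ pairs $(x,y)$. Since $f$ is drawn uniformly, the values $f(x,y)$ are i.i.d.\ uniform bits, independent of $h$, so the indicators $\mathbf 1[f(x,y) = h(x,y)]$ are i.i.d.\ $\mathrm{Bernoulli}(1/2)$ across all $N$ pairs. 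Hoeffding's inequality then gives, for this fixed triple,
\begin{equation*}
\Pr\!\left[\#\{(x,y): f(x,y)=h(x,y)\} \geq \tfrac34 N\right] \leq e^{-2N(1/4)^2} = e^{-N/8}.
\end{equation*}

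Next I would count the triples and union-bound. There are $2^{k2^n}$ choices for each of $f_A, f_B$ and $2^k$ choices for $\lambda$, hence $2^{2k2^n + k}$ triples in total (note that $g$ is fixed, so it does not enter the count). A union bound yields
\begin{equation*}
\Pr[\text{some triple agrees on} \geq \tfrac34 N] \leq 2^{2k2^n + k}\, e^{-N/8} = 2^{\,2k2^n + k - N/(8\ln 2)}.
\end{equation*}
To reach the target failure probability $2^{-2^n}$ it therefore suffices to verify
\begin{equation*}
\frac{N}{8\ln 2} \geq 2k\,2^n + k + 2^n, \qquad N = 2^{2n}.
\end{equation*}

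The crux, and the step I expect to be most delicate, is the numerical verification of this last inequality in the specific regime $\Delta = 0.013$, $\eta \in (0.509,1]$, $q \leq \tfrac12 n - 5$ and $n \geq 10$. Here I would first estimate $k$ from its closed form: the worst case is $\eta \to 0.509^+$, which minimizes the denominator $2^{2/3}(\eta\Delta+2)^{1/3} - 2$ and hence maximizes the ceiling, giving $\lceil 4/(2^{2/3}(\eta\Delta+2)^{1/3}-2)\rceil \approx 1802$ and thus $k = \log(\cdots)\,2^{2q+2} \approx 43\cdot 2^{2q}$. Substituting $2^{2q} \leq 2^{n-10}$ from $q \leq \tfrac12 n - 5$ gives $k \lesssim 0.043\cdot 2^n$, so that $2k\,2^n \lesssim 0.085\cdot 2^{2n}$, which sits comfortably below $N/(8\ln 2) = 2^{2n}/(8\ln 2) \approx 0.180\cdot 2^{2n}$; the resulting slack of roughly $0.095\cdot 2^{2n}$ dominates the lower-order terms $k + 2^n$ for all $n \geq 10$ (indeed already for much smaller $n$). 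This closes the bound. The main thing to watch is the bookkeeping: the base of the $\log$ in the formula for $k$ must be taken as $2$ so that the union-bound entropy and the Hoeffding exponent are expressed on the same scale, and the constant $1/(8\ln 2)$ must be kept honest when converting $e^{-N/8}$ into a power of two.
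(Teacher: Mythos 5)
Your proposal is correct and follows essentially the same route as the paper: a union bound over the $2^{(2^{n+1}+1)k}$ triples $(f_A,f_B,\lambda)$, combined with a tail bound on the probability that a fixed function agrees with a uniformly random $f$ on a $3/4$ fraction of pairs, and the same worst-case identification $\eta\to 0.509^+$ in the parameter check. The only difference is cosmetic: the paper bounds the binomial tail by the Hamming-ball volume $2^{2^{2n}h(1/4)}$ (exponent $-(1-h(1/4))2^{2n}\approx -0.189\cdot 2^{2n}$), whereas you use Hoeffding's inequality ($e^{-2^{2n}/8}$, exponent $\approx -0.180\cdot 2^{2n}$), which is marginally weaker but still leaves ample slack for the stated parameters.
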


\begin{proof}  Sketch (see below Lemma \ref{lem:stable_E4} for a detailed proof of the generalized version). We want to estimate the probability that for a randomly chosen $f$, we can find $f_A$ and $f_B$ such that the corresponding function $g$ is such that $\Pr_{x,y}[f(x,y)= g(f_A(x),f_B(y),\lambda)]\geq3/4$. In a similar manner as in \eqref{eq bounding probability of g}, we have that
\begin{equation}
     \Pr[f:\exists f_A,f_B,\lambda \textrm{ s.t. } \Pr_{x,y}[f(x,y) = g(f_A(x),f_B(y),\lambda)]]\leq
2^{ (2^{n+1}+1)k} 2^{2^{2n}h(1/4)} 2^{-2^{2n}},
\end{equation}
where $h$ denotes the binary entropy function. If $q\leq n/2-5$ and $k=\log(\lceil\frac{4}{2^{\frac{2}{3}} (\eta \Delta +2)^{\frac{1}{3}}-2}\rceil)2^{2q+2}$, with $\Delta=0.013$, for $\eta\in(0.53,1]$, the above expression is strictly upper bounded by $2^{-2^n}$.
\end{proof}

Lemma \ref{lem:stable_E4} shows that if the dimension of the initial state that Alice and Bob hold is small enough, any $(\varepsilon,3/4\cdot2^{2n})$-perfect $q$-qubit strategy needs a number of qubits which is linear in $n$. This leads to our first main theorem:

\begin{theorem}\label{theorem q>=n/2-5 for BB84 eta}
Consider the most general attack to a round of the \QPVBBetaf~protocol for a transmission rate ${\eta\in(0.53,1]}$. Let $\Delta=0.013$. If the attackers respond with probability $\eta$ and control at most $q$ qubits at the beginning of the protocol, and $q$ is such that
\begin{equation}
    q\leq \frac{n}{2}-5,
\end{equation}
then, for any $q$-qubit strategy for \QPVBBetaf, the probability that the attackers answer \textsc{Correct} is at most
\begin{equation}\label{eq bound attackers theorem BB84}
    \pr{\textsc{V}_\textsc{AB}=\textsc{c}} \leq\eta\big(1-\frac{1}{4}[1-(\Tilde{w}^{\epsilon}(\eta)+\Delta)]\big).
\end{equation}
\end{theorem}

The proof of Theorem~\ref{theorem q>=n/2-5 for BB84 eta} is a particular case of the proof of Theorem~\ref{theorem q>=n/2-5 for m-BB84 eta}.  See Fig.~\ref{figure bounds theorem bb84} for a representation of the bound \eqref{eq bound attackers theorem BB84} and see Fig.~\ref{Fig simplex} for the set of probabilities in $\Delta_2$ excluded by this theorem. Notice that if  $p_{err}<1-\frac{1}{4}[1-(\Tilde{w}^{\epsilon}(\eta)+\Delta)]$, then the probability that the attackers' answer \textsc{Correct} is stricly below the corresponding probability for an honest prover. 

\begin{figure}[h]
\centering
\includegraphics[width=105mm]{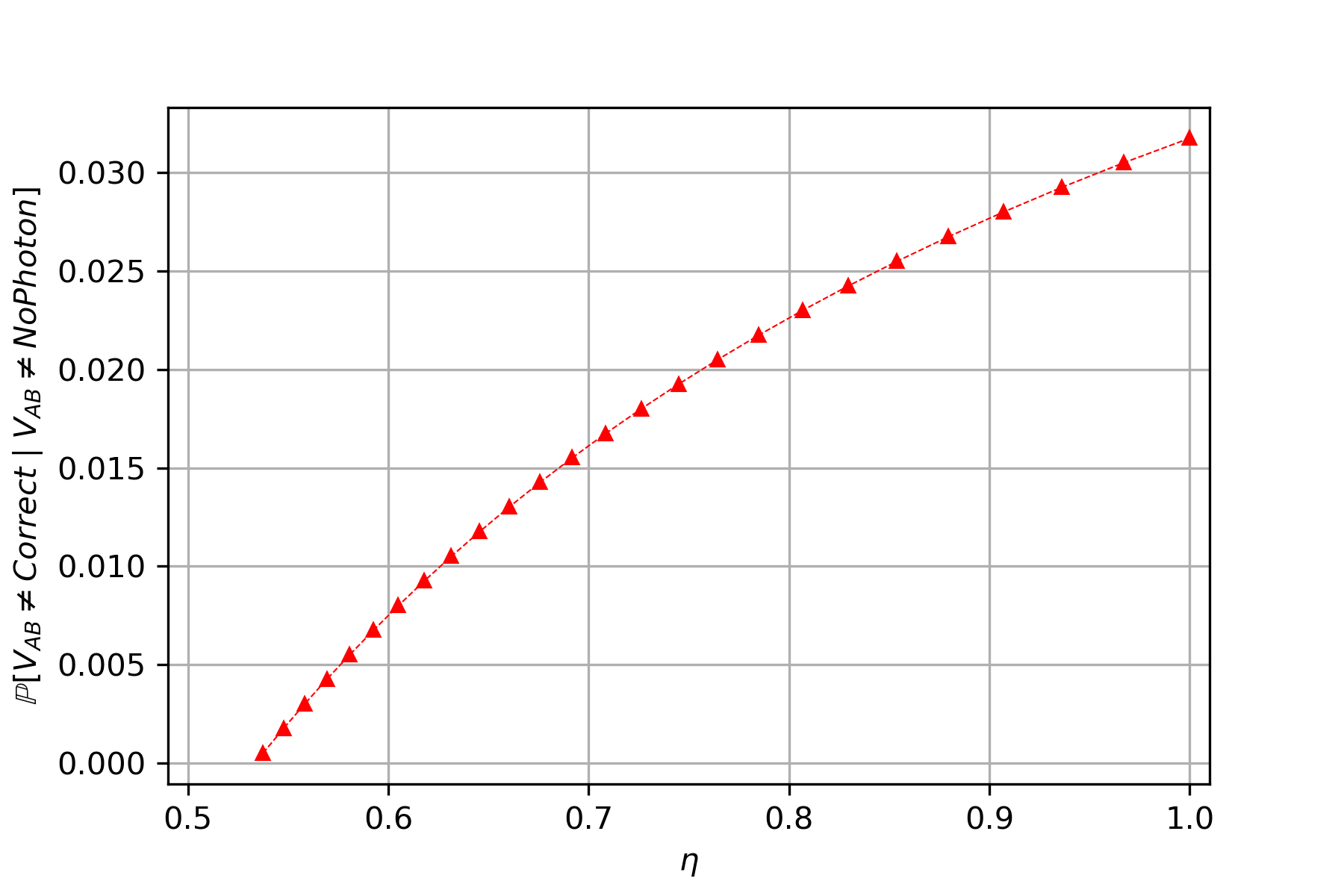}
\caption{Lower bounds (red triangles pointing up) given by Theorem \ref{theorem q>=n/2-5 for BB84 eta} on the probabilities that the attackers are not correct given that they answer.}
\label{figure bounds theorem bb84}
\end{figure}

\subsection{Sequential repetition of \texorpdfstring{\QPVBBetaf}{f,eta-BB84 QPV}}\label{sec:sequential rep f-lossy-BB84}

We will next show that, under the conditions of Theorem~\ref{theorem q>=n/2-5 for BB84 eta}, attackers will reproduce a score $\Gamma^{ent, att}_r$ that will fail the $\mathsf{T}^{r f-\mathrm{BB84}}_{\varepsilon_{\textrm{h}}}$ with exponentially high probability.  Let $\mathcal{A}\subseteq \Delta_2$ be the set of probabilities $\textbf{q}=(q_{\textsc{c}},q_{\perp},q_{\textsc{i}})$ that the attackers can reproduce, where $q_{\textsc{i}}:=q_{\textsc{w}}+q_{\lightning}$. Notice that since for $\textbf{q}_1,\textbf{q}_2\in\mathcal{A}$, the attackers can play any convex combination of them, the set $\mathcal{A}$ is convex. Let $\Delta=0.013$ and, for simplicity, denote $\mathfrak{b}_{\eta}:=\frac{1}{4}\big(1-(\Tilde{w}^{\epsilon}(\eta)+\Delta)$. Let $\mathcal{R}\subset\Delta_2$ be the region of the probability simplex defined as $\mathcal{R}:=\cup_{\eta\in(0.53,1]}\mathcal{R}_{\eta}\subset \Delta_2$,  where the sets $\mathcal{R}_{\eta}$ are defined as
\begin{equation}
\begin{split}
      \mathcal{R}_{\eta}=\big\{ (p_\textsc{c},p_{\perp},p_{\textsc{i}})\in \Delta_2 \mid 
    p_\textsc{c}&> \max\{1-p_{\perp}+\frac{\mathfrak{b}_{\eta}(1-2p_{\perp})}{1-2(\eta-\epsilon)},\frac{\eta-\epsilon-\cos^2(\frac{\pi}{8})-\mathfrak{b}_{\eta}}{1-(\eta-\epsilon)}p_{\perp}+\cos^2(\frac{\pi}{8})\}
    \\ \wedge p_{\textsc{i}}&< \min\{\frac{\mathfrak{b}_{\eta}(1-2p_{\perp})}{2(\eta-\epsilon)-1},
    \frac{\mathfrak{b}_{\eta}-\sin^2(\frac{\pi}{8})}{1-(\eta-\epsilon)}p_{\perp}+\sin^2(\frac{\pi}{8})\}
    \big \}.
\end{split}
\end{equation}

\begin{lemma}\label{lemma probability simplex} Under the conditions of  Theorem~\ref{theorem q>=n/2-5 for BB84 eta}, 
any $\textbf{q}$ that the attackers can reproduce is such that $\textbf{q}\notin\mathcal{R}$, i.e.\  $\mathcal{R}\cap\mathcal{A}=\emptyset$, and thus
\begin{equation}
    \mathcal{A}\subsetneq \mathcal{R}^c.
\end{equation}
\end{lemma}

\begin{proof}
Let $\textbf{q}=(q_{\textsc{c}},q_{\perp},q_{\textsc{i}})$ be the probability distribution corresponding to an attack from Alice and Bob and define $\mathcal{A}$ as the set of all probabilities  If they are allowed to answer  with probability $\eta\pm\epsilon$, then
\begin{equation}
    \textbf{q}=(q_{\textsc{c}},1-(\eta +\mu),q_{\textsc{i}}), \mu\in[-\epsilon,\epsilon].
\end{equation}
Moreover, by Theorem~\ref{theorem q>=n/2-5 for BB84 eta},
if they control at most $q\leq \frac{n}{2}-5$ qubits at the beginning of the protocol, 
\begin{equation}\label{eq qI >= g(eta)}
    q_{\textsc{i}}\geq \mathfrak{b}_{\eta}, \forall\mu\in[-\epsilon,\epsilon].
\end{equation}
We define the following region 
\begin{equation}
    \mathcal{R}^0_{\eta}=\{(p_\textsc{c},p_{\perp},p_{\textsc{i}})\in \Delta_2 \mid p_{\perp}\in[1-(\eta-\epsilon),1-(\eta+\epsilon)] \wedge p_{\textsc{i}}< \mathfrak{b}_{\eta}\},
\end{equation}
therefore, by \eqref{eq qI >= g(eta)}, $ \textbf{q}\notin\mathcal{R}^0_{\eta}$, and thus
\begin{equation}\label{eq R0 cap A = emptyset}
    \mathcal{R}^0_{\eta}\cap\mathcal{A}=\emptyset.
\end{equation}
Notice that the strategies  $\mathbf S_{\textsc{tfkw}}$ and  $\mathbf S_{guess}$, reproduce  $ \textbf{q}_{\textsc{tfkw}}=(\cos^2(\frac{\pi}{8}),0,1-\cos^2(\frac{\pi}{8}))$ and $ \textbf{q}_{guess}=(\frac{1}{2},\frac{1}{2},0)$ and, respectively, and therefore $\textbf{q}_{\textsc{tfkw}},\textbf{q}_{guess}\in\mathcal{A}$. The attackers can also do the strategies consisting on always responding $\perp$ or always being incorrect (e.g.\ answering inconsistent answers $a\neq b$), which reproduce the probabilities $\textbf{q}=(0,1,0)$ and $\textbf{q}=(0,0,1)$, respectively. Therefore, the convex hull of  these four points is a convex subset of $\mathcal{A}$. 
Consider the straight line $s_1$ defined by the two points $\textbf{q}_{guess}$ and $(\eta-\epsilon-\mathfrak{b}_{\eta},1-(\eta-\epsilon),\mathfrak{b}_{\eta})\in\partial \mathcal{R}^0_{\eta}$, where $\partial$ denotes the boundary, which is given by 
\begin{equation}
    s1: \frac{x-\frac{1}{2}}{\eta-\epsilon-\frac{1}{2}-\mathfrak{b}_{\eta}}=\frac{y-\frac{1}{2}}{\frac{1}{2}-(\eta-\epsilon)}=\frac{z}{\mathfrak{b}_{\eta}}.
\end{equation}
Consider the set $\mathcal{R}^1_{\eta}$, defined as the points in $\Delta_2$ that do not belong to $\mathcal{R}^0_{\eta}$ that are 'below' the straight line $s_1$ and are such that $p_{\perp}\leq 1-(\eta+\epsilon)$, formally defined as
\begin{equation}
\begin{split}
      \mathcal{R}^1_{\eta}=\big\{ (p_\textsc{c},p_{\perp},p_{\textsc{i}})\in \Delta_2 \mid 
   p_\textsc{c}>1-p_{\perp}+\frac{\mathfrak{b}_{\eta}(1-2p_{\perp})}{1-2(\eta-\epsilon)}
    \wedge p_{\textsc{i}}< \frac{\mathfrak{b}_{\eta}(1-2p_{\perp})}{2(\eta-\epsilon)-1}
    \big \}\setminus \mathcal{R}^0_{\eta}.
\end{split}
\end{equation}
We will show by contradiction that $\mathcal{R}^1_{\eta}\cap\mathcal{A}=\emptyset$. Assume that exists  $\textbf{p}_{R}\in\mathcal{R}^1_{\eta}\cap\mathcal{A}$. Since $\mathcal{A}$ is a convex set and $\textbf{q}_{guess}\in\mathcal{A}$, $t \textbf{p}_{R}+(1-t)\textbf{q}_{guess}\in\mathcal{A}$ for all $t\in[0,1]$. By construction of $\mathcal{R}^1_{\eta}$, $\{t \textbf{p}_{R}+(1-t)\textbf{q}_{guess}\mid t\in[0,1]\}\cap\mathcal{R}^0_{\eta}\neq\emptyset$, i.e.\  the straight line connecting $\textbf{p}_{R}$ and $\textbf{q}_{guess}$ intersects $\mathcal{R}^0_{\eta}$. Then, $\exists t_0\in[0,1]$ such that $\textbf{p}_{t_0}:=t_0 \textbf{p}_{R}+(1-t_0)\textbf{q}_{guess}\in\mathcal{R}^0_{\eta}\cap\mathcal{A}$. However, by \eqref{eq R0 cap A = emptyset}, $\mathcal{R}^0_{\eta}\cap\mathcal{A}=\emptyset$. Therefore, $\nexists\textbf{p}_{R}\in\mathcal{R}^1_{\eta}\cap\mathcal{A}$.\\
Similarly, consider the set $\mathcal{R}^2_{\eta}$ consisting of the points in $\Delta_2$ that are such that $p_{\perp}\geq1-(\eta+\epsilon)$ and that are 'below' the straight line given by the points $\textbf{q}_{\textsc{tfkw}}$ and $(\eta-\epsilon-\mathfrak{b}_{\eta},1-(\eta-\epsilon),\mathfrak{b}_{\eta})$, formally defined as
\begin{equation*}
\begin{split}
      \mathcal{R}^2_{\eta}=\big\{(p_\textsc{c},p_{\perp},p_{\textsc{i}})\in \Delta_2 \mid 
    p_\textsc{c}>\frac{\eta-\epsilon-\cos^2(\frac{\pi}{8})-\mathfrak{b}_{\eta}}{1-(\eta-\epsilon)}p_{\perp}+\cos^2(\frac{\pi}{8})\}
   \wedge p_{\textsc{i}}<
    \frac{\mathfrak{b}_{\eta}-\sin^2(\frac{\pi}{8})}{1-(\eta-\epsilon)}p_{\perp}+\sin^2(\frac{\pi}{8})
    \big\}
\end{split}
\end{equation*}
By an analogous convexity argument with the point $\textbf{q}_{\textsc{tfkw}}$, we have that $\mathcal{R}^2_{\eta}\cap\mathcal{A}=\emptyset$.\\ Notice that $\mathcal{R}_{\eta}=\mathcal{R}^0_{\eta}\cup\mathcal{R}^1_{\eta}\cup\mathcal{R}^2_{\eta}$, and therefore, 
$  \mathcal{R}_{\eta}\cap\mathcal{A}=\emptyset$ $\forall \eta\in(0.53,1]$. As a consequence, $ \mathcal{R}\cap\mathcal{A}=\big(\cup_{\eta\in(0.53,1]}\mathcal{R}_{\eta}\big)\cap\mathcal{A}=\emptyset$ and therefore, $\mathcal{A}\subset\mathcal{R}^c$.  
\end{proof}

\begin{theorem}\label{theorem seq repetition entangled} Let $\varepsilon_{\textrm{h}}>0$, and $\eta$ and $p_{err}$ be such that $\alpha^{ent}>0$. Then, any sequential strategy that attackers who pre-share $q\leq n/2-5$ qubits to break \QPVBBetaf~ is such that $\mathbb E [\Gamma_r^{ent,att}]\leq 0$ and, moreover, the probability that they are accepted in the $\mathsf{T}^{r f-\mathrm{BB84}}_{\varepsilon_{\textrm{h}}}$ test is exponentially small:
\begin{equation}\label{eq thm seq repetition entanglement}
    \mathbb P[\Gamma_r^{ent,att}\geq r(\alpha^{ent}-\delta)]\leq e^{-r (\alpha^{ent}-\delta)^2/2}.
\end{equation}
\end{theorem}

Theorem~\ref{theorem seq repetition entangled} shows that there is a way (statistical test) to distinguish an honest prover from attackers with exponentially high probability \, i.e.\  that after $r$ rounds the attackers will be \emph{caught}. The values for which $\alpha^{ent}>0$ correspond to the points below the straight line in Fig.~\ref{Fig simplex}---the above $\alpha^{ent}-\delta>0$ corresponds to a shift, where $\delta$ can be made small by increasing the number of repetitions.

\begin{proof}
    Let $\textbf{p}_1=(959/1000,0,41/1000)$ and $\textbf{p}_2=(27/50,23/50,0)$. The line segment  given by $\textbf{p}_1$ and $\textbf{p}_2$ is a subset of $\mathcal{R}$ and it can be described as the intersection of $\Delta_2$ and the plane $\gamma_{\textsc{c}}p_{\textsc{c}}-\gamma_{\perp}p_{\perp}-\gamma_{\textsc{i}}p_{\textsc{i}}=0$, with the normal vector $\boldsymbol{\gamma}=(\gamma_{\textsc{c}},\gamma_{\perp},\gamma_{\textsc{i}})$, where $\gamma_{\textsc{c}}=\frac{943}{\sqrt{488625947}}\simeq 0.04266$, $\gamma_{\perp}=\frac{1107}{\sqrt{488625947}}\simeq 0.050079$ and $\gamma_{\textsc{c}}=\frac{22057}{\sqrt{488625947}}\simeq 0.99783$. Consider the follow non-empty partition of the simplex $\Delta_2$: 
    \begin{equation}
        \Delta_2^{+}:=\Delta_2\cap\{\textbf{p}\in\Delta_2 \mid \textbf{p} \cdot \boldsymbol{\gamma}>0\}, \hspace{3mm}\textrm{ and }\hspace{3mm} \Delta_2^{-}:=\Delta_2\cap\{\textbf{p}\in\Delta_2 \mid \textbf{p} \cdot \boldsymbol{\gamma}\leq 0\},
    \end{equation}
which is such that $\Delta_2=\Delta_2^{+}\mathbin{\dot{\bigcup}}\Delta_2^{-}$. In particular, $\Delta_2^{+}\subset\mathcal{R}$ and $\mathcal{A}\subset\Delta_2^{-}$. Then we have that for all $\textbf{q}\in\mathcal{A}$,
   \begin{equation}
       q_{\textsc{c}}\gamma_{\textsc{c}}-q_{\perp}\gamma_{\perp}-q_{\textsc{i}}\gamma_{\textsc{i}}\leq0. 
    \end{equation}
The above amount corresponds to the expected value of $T^{ent}$ for a strategy $\textbf{q}\in\mathcal{A}$, and thus we have that for all rounds $i$, the expected value of the attackers ($att$) is such that
\begin{equation}
    \mathbb E_{att} [T^{ent}_i]\leq 0.
\end{equation}
Define $\Gamma^{ent,att}_0=0$. The process $\Gamma=(\Gamma^{ent,att}_r:r\geq 0)$ is a supermartingale relative to the filtration $\mathcal{F}_r$, where $\mathcal{F}_r= \sigma(T_1^{ent},...,T_{r}^{ent})$, and $\sigma$ denotes the $\sigma$-algebra. In fact, 
\begin{equation}
    \mathbb E[\Gamma_r^{att}\mid \mathcal{F}_{r-1}]= \mathbb E[T_{r}^{ent}\mid  \mathcal{F}_{r-1}]+\mathbb E[\Gamma_{r-1}^{att}\mid  \mathcal{F}_{r-1}]\leq \Gamma_{r-1}^{att},
\end{equation}
which is the definition of a supermartingale. The first equality is due to the linearity of the conditional expectation and the inequality is due to the fact that $\mathbb E[T_{r}^{ent}\mid  \mathcal{F}_{r-1}]$ is independent of $\mathcal{F}_{r-1}$ and thus $\mathbb E[T_{r}^{ent}\mid  \mathcal{F}_{r-1}]=\mathbb E_{att} [T_{r}^{ent}]\leq 0$ and $\Gamma_{r-1}^{att}$ is $\mathcal{F}_{r-1}$-measurable.

An immediate application of Azuma's inequality \cite{Azuma1967}
leads to \eqref{eq thm seq repetition entanglement}.\\
The proof of Theorem~\ref{thm sequential repetition no-entanglement} 
is analogous to the above case considering the segment line given by the two points $\textbf{q}_{\textsc{tkfw}}=(\cos^2\frac{\pi}{8},0,\sin^2\frac{\pi}{8},0)$ and $\textbf{q}_{guess}=(\frac{1}{2},\frac{1}{2},0,0)$ and the corresponding partition is such that $\Delta_2^-=\mathcal{A}$ and $\Delta_2^+=\Delta_2\setminus\mathcal{A}$. 
    
\end{proof}

\section{The \texorpdfstring{\QPVBBetam}{QPV-m-eta} protocol}\label{section m basis no entanglement}

In Section \ref{section 2 basis no entanglement} we showed security of  \QPVBBeta~ protocol for unentangled attackers. Nevertheless, the protocol was shown to be secure only for transmission rate $\eta>\frac{1}{2}$, which is still very hard for current technology to achieve.
For this reason, we propose a protocol which generalizes \QPVBBeta~ to more basis settings, for which we can apply similar techniques to prove security in the lossy case. In this section, we generalize the results of Section \ref{section 2 basis and entanglement}, showing security for non-entangled attackers and reaching arbitrary big photon loss.

Independently and around the same time, Buhrman, Schaffner, Speelman, Zbinden \cite[Chapter 5]{FlorianThesis} and Qi and Siopsis \cite{OtherExtentionBB84Qi_Siopsis2015} introduced extensions of the $\mathrm{QPV_{BB84} }$ protocol.
Both are based on allowing the verifiers to choose among more than two different qubit bases, which for the \QPVBB~ protocol corresponded to the computational and the Hadamard basis. The protocol in \cite{FlorianThesis} allows $V_0$ choosing among $m$, for an arbitrary $m\geq 2$, different orthonormal bases in the meridian $\varphi=0$ of the Bloch sphere depending on the angle $\theta\in[0,\pi)$, where these are uniformly distributed, i.e.\ $\theta\in\{\frac{0}{m}\pi,...,\frac{m-1}{m}\pi\}$, and the bases are $\{\ket{0_{\theta}},\ket{1_{\theta}}\}$, where 
\begin{equation}
\label{eq extention BB84 just theta}
\begin{split}
    &\ket{0_{\theta}}:=\cos\frac{\theta}{2}\ket{0}+\sin\frac{\theta}{2}\ket{1},\hspace{1cm}\ket{1_{\theta}}:=\sin\frac{\theta}{2}\ket{0}-\cos\frac{\theta}{2}\ket{1}.
\end{split}
\end{equation}
Recall that the $\mathrm{QPV_{BB84} }$ protocol is recovered taking $m=2$, where $\theta=0$ and $\theta=\frac{\pi}{2}$ correspond to the computational and Hadamard basis, respectively. On the other hand, the extension in \cite{OtherExtentionBB84Qi_Siopsis2015} allows the verifiers to choose among $m$ encoding bases over the whole Bloch sphere, however such an extension only works for $m$ large enough and not all large integers are allowed. 

Here we present a similar extension allowing to choose $m$ random bases over the Bloch sphere for all $m\geq 2$, which works regardless whether $m$ is large or small, and we prove that this translates to better security in case terms of the loss-tolerance of the quantum information in an experimental implementation.
We add the $\varphi$ parameter corresponding to the azimuth angle to the states in \eqref{eq extention BB84 just theta} as a phase $e^{i\varphi}$ in front of $\ket{1}$, in a similar way as in \cite{OtherExtentionBB84Qi_Siopsis2015}.
We do however use a slightly different procedure than \cite{OtherExtentionBB84Qi_Siopsis2015} to compute the precise angles, to make the basis choice more uniform (see below).

\subsection{Discrete uniform choice of basis over the Bloch sphere}\label{section discrete basis chosing}
In order to avoid accumulation of points in the sphere around the poles due to the unit sphere area element $d\Omega=\sin\theta d\theta d\varphi$, a continuous uniform distribution of points can be made by taking  \cite{weissteinDiscretitzationSphere} 
\begin{equation}
    \begin{split}
    &\theta= \cos^{-1}(2u-1), \hspace{1cm} \varphi= 2 \pi v,
    \end{split}
\end{equation}
where $u$ and $v$ are uniformly distributed over the interval $(0,1)$. 
Notice that allowing $\varphi\in[0,2\pi)$ would imply to have duplicate bases (i.e.\ the same basis vectors in different order), thus, $\varphi$ will be restricted to take values in the range $[0,\pi)$. Moreover, in the discrete case we are interested also in the north pole of the sphere ($\theta=0$), corresponding to the computational basis, and therefore in order to include it, discretizing the sphere with $m_{\theta}$ different $\theta$ and with $m_{\varphi}$ different $\varphi$, the 0 must be included in the range of $u$, i.e.\  $u\in\{\frac{0}{m_{\theta}},...,\frac{m_{\theta}-1}{m_{\theta}}\}$. Similarly, in order to have the Hadamard basis (and the bases in between them in the meridian $\varphi=0$), $v\in\{\frac{0}{m_{\varphi}},...,\frac{m_{\varphi}-1}{m_{\varphi}}\}$. Let $\Tilde{u}:=m_{\theta}u$ and $\Tilde{v}:=m_{\varphi}v$, which determine the $m_{\theta}m_{\varphi}$ points of the discretization.  
Let $x:=\Tilde{u}\Tilde{v}\equiv m_{\varphi}\Tilde{u}+\Tilde{v}$, therefore, given $x\in[m_{\theta}m_{\varphi}]$, one can recover $\Tilde{u}=\floor{x/m_{\varphi}}$ and $\Tilde{v}=x \bmod m_{\varphi}$, where $\floor{*}$ stands for the floor function. Notice that this discrete parametrization has $m_{\varphi}$ degenerate points for $\Tilde{u}=m_{\theta}-1$, corresponding to $(\theta=0,\varphi)$, which can be easily removed by $x$ taking values in the range $\{0,...,m-1\}=:[m]$, where $m:=m_{\varphi}(m_{\theta}-1)+1$. Therefore, we can discretize the bases in the Bloch sphere depending on $x$ so that $\forall x \in [m]$,

\begin{equation}
\begin{split}
    \theta(x)=\arccos{(\frac{2}{m_{\theta}}(\floor{\frac{x}{m_{\varphi}}}+1)-1)},\hspace{1cm}\varphi(x)=\pi \frac{x\bmod m_{\varphi}}{m_{\varphi}}.
\end{split}
\end{equation}
Then the protocol is extended allowing the verifiers to choose among the bases $\{\ket{0_{x}},\ket{1_{x}}\}$ for $x \in [m]$, where
\begin{equation}
\label{eq extention BB84 Bloch sphere}
\begin{split}
    &\ket{0_{x}}:=\cos\frac{\theta(x)}{2}\ket{0}+e^{i\varphi(x)}\sin\frac{\theta(x)}{2}\ket{1},\hspace{1cm}
    \ket{1_{x}}:=\sin\frac{\theta(x)}{2}\ket{0}-e^{i\varphi(x)}\cos\frac{\theta(x)}{2}\ket{1}.
\end{split}
\end{equation}
Note that for any $m$, we can discretize the bases in as many ways as divisors $m-1$ has in the following way: one chooses the number of $\theta$ and $\varphi$ as $(m_{\theta},m_{\varphi})=(d_m+1,\frac{m-1}{d_m})$, for each $d_m$ divisor of $m-1$. See Fig. \ref{Fig discretization of basis} for the representation of the $\ket{0_{x}}$ for all $x\in[m]$ in the Bloch sphere for different choices of $m_{\theta}$ and $m_{\varphi}$. The $\ket{1_{x}}$ corresponds to the diametrically opposite point, and therefore representing the kets $\ket{0_{x}}$ determines the $m$ orthonormal bases, e.g.\ the computational basis is associated with the north pole. As examples, the choice  $(m,m_{\theta},m_{\varphi})=(2,2,1)$ corresponds to the computational and Hadamard bases, and the choice $(m,m_{\theta},m_{\varphi})=(3,2,2)$ to  the computational, Hadamard bases and the basis formed by the eigenvectors of the Pauli $Y$ matrix. 

 \begin{figure}[h]
\label{Fig discretization of basis}
\centering
\subfigure[]{\includegraphics[width=50mm]{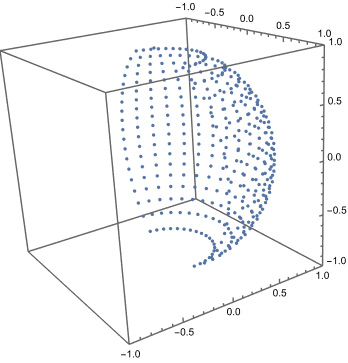}}
\subfigure[]{\includegraphics[width=50mm]{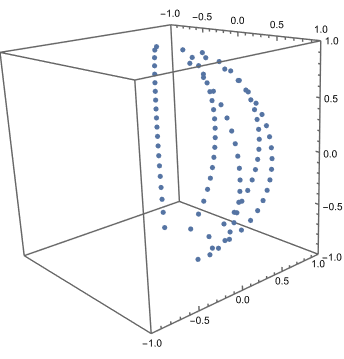}}
\caption{Discretization of $m$ bases in the Bloch sphere, where the points on the surface of the sphere represent the ket $\ket{0_x}$ of the orthonormal basis $\{\ket{0_x},\ket{1_x}\}$ for $x\in[m]$.  (a) $m_{\theta}=m_{\varphi}=20$, (b) $m_{\theta}=10,m_{\varphi}=5$. }
\end{figure}

Based on Section \ref{section discrete basis chosing}, we introduce an extension of the \QPVBBeta~ protocol, which we denote by \QPVBBetam, where $m_{\theta\varphi}$ is the sort notation of $(m,m_{\theta}, m_{\varphi})$.

\begin{definition}\label{def k bases protocol}
Let $\eta$ denote the transmission rate of the qubits sent from $V_0$ to \prover{}, and let $m_{\theta\varphi}$ be as described above. We define one round of the \QPVBBetam~protocol as follows:
\begin{enumerate}
    \item $V_0$ and $V_1$ secretly agree on $v\in\{0,1\}$ and $x\in[k]$. Then, $V_0$ prepares the qubit state $\ket{v_x}$ as in \eqref{eq extention BB84 Bloch sphere}. 
    \item $V_0$ sends the qubit $\ket{v_x}$ to \prover{}, and $V_1$ sends $x$ to \prover{} coordinating their times so that they arrive to \prover{} at the same time.
    \item Immediately, \prover{} measures the received qubit in the basis $\{\ket{0_x},\ket{1_x}\}$ and broadcasts her outcome  to both verifiers. If, due to experimental losses, the qubit did not arrive, \prover{} broadcasts `\textsc{No photon}' with the symbol~$\perp$. Denote the response by $v_{\textsf{P}}$.
    \item  If 
    \begin{enumerate} \item $V_0$ and $V_1$ receive their respective answers at the time corresponding with the claimed location, and they are equal, i.e.\  both receive the same $v_{\textsf{P}}$, then, if
        \begin{itemize}
            \item $v_{\textsf{P}}=v$, the verifiers record `\textsc{Correct}', denoted by `\textsc{c}', 
            \item $v_{\textsf{P}}=1-v$, the verifiers record `\textsc{Wrong}', denoted by `\textsc{w}', 
            \item $v_{\textsf{P}}=\perp$, the verifiers record `\textsc{No photon}',  denoted by `$\perp$', 
        \end{itemize}
        \item         otherwise, they record `\textsc{Abort}',  denoted by `$\lightning$', and abort the protocol rejecting the location.
    \end{enumerate}
\end{enumerate}
\end{definition}
At the end of the protocol, when sequentially run $r$ times, the verifiers check that the answers satisfy \eqref{eq succesful protocol}. Notice that \QPVBBeta~is recovered for $m=2$, with the unique choice of $m_{\theta}=2$ and $m_{\varphi}=1$. Analogous to \QPVBBeta, for each choice of $m_{\theta\varphi}$, from the analysis of the correlations shown below in Section~\ref{section m basis no entanglement}, one can construct a decision test to make the binary decision to either \emph{accept} or \emph{reject}, see proof of Theorem~\ref{theorem seq repetition entangled} for how to construct it. 

\subsection{Security of the  \texorpdfstring{\QPVBBetam}{QPV-m-eta} in the No-PE model}

Analogous to \QPVBBeta{} in Section~\ref{section 2 basis no entanglement}, for the security analysis, we will consider the purified version of \QPVBBetam{}, where, in Definition~\ref{def k bases protocol}, in step~1, $V_0$ instead of preparing the qubit $\ket{0_x}$, prepares the EPR pair $\ket{\Phi^+}$, sends one qubit register to $P$, and keeps the other register. In a later moment, $V_0$ measures his local register in the basis $\{\ket{0_x},\ket{1_x}\}$. 
In an analogy to the \QPVBBeta~protocol, an attack on the \QPVBBetam~protocol can be associated with a MoE (monogamy-of-entanglement) game in the following way. Let $V$ be the register of the qubit of the verifier,  with associated Hilbert space $\mathcal{H}_V=\mathbb{C}^2$, with $\mathcal{X}=[m]$ and $\mathcal{V}=\{0,1\}$. The verifiers perform the collection of measurements
\begin{equation}
    \{V_0^{x},V_1^{x}\}_{x\in[m]},
\end{equation}
where $V_0^{x}=\ketbra{0_x}{0_x}$ and $V_1^{x}=\ketbra{1_{x}}{1_{x}}$. The two collaborating parties in the MoE game correspond to the attackers who want to break the protocol with their guess.
Then, the attackers Alice and Bob, with associated Hilbert spaces $\mathcal{H}_A$ and $\mathcal{H}_B$, respectively, have to win against the verifiers $V$ both giving the same outcome to $V$ or declare photon loss. 
Thus, having a strategy to attack the protocol implies having an strategy for a MoE game. An extension of a strategy for a lossy MoE game, see Section \ref{section 2 basis no entanglement}, naturally generalizes as $\mathbf{S}^{\eta}_{MoE}=\{\ket{\psi},A_{a}^{x},B_{a}^{x}\}_{a\in\{0,1,\perp\},x\in[m]}$.

In \cite{TomamichelMonogamyGame2013} the following upper bound to win a general MoE game is given:
\begin{equation}
    \label{eq upperbound Tomamichel}
    p_{win}\leq \frac{1}{|\mathcal{X}|}+\frac{|\mathcal{X}|-1}{|\mathcal{X}|}\sqrt{\max_{x\neq x'\in\mathcal{X}}\max_{a,a'\in\mathcal{A}}\norm{\sqrt{V_a^x}\sqrt{V_{a'}^{x'}}}^2}.
\end{equation}
The security analysis of the  \QPVBBetam~protocol will be based, in the same way as the \QPVBBeta\\ protocol, on maximizing the probability that the attackers `play' without being caught:
\begin{equation}
\label{eq pans}
    p_{ans}=\frac{1}{m}\sum_{x\in[m],a\in \{0,1\}}\bra{\psi}A_a^{x} B_a^{x}\ket{\psi}. 
\end{equation}
As in the \QPVBBeta~ protocol, the constraints will be the linear constraints implied by $\mathbf{S}^{\eta}_{MoE}\in\mathcal{Q}_{\ell}$, the analogous to \eqref{eq restriction error epsilon BB84}, i.e.,
\begin{equation}\label{eq restriction error epsilon}\expectedbraket{A_a^{x}B_{b}^{x}}=0 \hspace{5mm}\forall a\neq b\in \{0,1,\perp\}, \forall x \in [m],\end{equation}
and the inequalities given in Proposition \ref{prop ineq perr for QPV} bounded by $p_{err}$.

\begin{prop} \label{prop ineq perr for QPV} Let $a,b\in\{0,1\}$, $\alpha_i^a=\braket{i_{x'}}{a_{x}}$ and $\beta_i^b=\braket{i_{x}}{b_{x'}}$ for $i\in\{0,1\}$. The terms $\expectedbraket{A_a^{x}B_{b}^{x'}}$ can be bounded by $p_{err}$ by the two inequalities below:
\begin{equation}
    \label{eq perr constraint sum ab}
    \sum_{ab}(2-\norm{V_a^x+V_b^{x'}})\expectedbraket{A_a^{x}B_{b}^{x'}}\leq p_{err}\sum_{a}(\expectedbraket{A_a^{x}B_{a}^{x}}+\expectedbraket{A_a^{x'}B_{a}^{x'}}),
\end{equation}
\begin{equation}
\label{eq constraint perr 2 simplified}
    \begin{split}
        &\sum_{a,b}\big(4-\norm{(1+\abs{\beta_a^b }^2)V_a^{x}+
        (1+\abs{\alpha_b^a }^2)V_b^{x'}+\beta_0^b\beta_1^{b*}\ketbra{0_{x}}{1_{x}}+\beta_0^{b*}\beta_1^{b}\ketbra{1_{x}}{0_{x}}+
        \alpha_0^a\alpha_1^{a*}\ketbra{0_{x'}}{1_{x'}}\\&+\alpha_0^{a*}\alpha_1^{a}\ketbra{1_{x'}}{0_{x'}}}\big)\expectedbraket{A_a^{x}B_{b}^{x'}}
        \leq
        p_{err}\big((2+\max_{i,j}\abs{\beta_i^j}^2)\sum_a\expectedbraket{A_a^xB_a^x}+(2+\max_{i,j}\abs{\alpha_i^j}^2)\sum_a\expectedbraket{A_a^{x'}B_a^{x'}}\big),
    \end{split}
\end{equation}
\end{prop}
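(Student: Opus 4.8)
The plan is to derive both inequalities from a single \emph{operator-relaxation} template applied to the verifier's register. The common observation is that for any Hermitian operator $M$ on $\mathcal{H}_V$ and any constant $c$ one has $(c-\norm{M})\mathbb{I}\preceq c\,\mathbb{I}-M$ (since $\norm{M}\mathbb{I}-M\succeq0$); tensoring this scalar-versus-operator inequality with the positive operator $A_a^x\otimes B_b^{x'}\succeq0$ and taking the expectation in $\ket{\psi}$ gives $(c-\norm{M})\expectedbraket{A_a^xB_b^{x'}}\leq\expectedbraket{(c\,\mathbb{I}-M)A_a^xB_b^{x'}}$. I would then sum over $a,b\in\{0,1\}$ and simplify the right-hand side using POVM completeness and the abort constraint \eqref{eq restriction error epsilon}, aiming to leave exactly a combination of ``wrong-answer'' terms, which the defining error condition (generalizing \eqref{eq condition perr by def 1}) bounds by $p_{err}$.

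First I would record two preliminary facts. Writing $\bar A^x:=A_0^x+A_1^x$ and $\bar B^{x'}:=B_0^{x'}+B_1^{x'}$, the relation $V_0^x+V_1^x=\mathbb{I}$ yields the identity $\bar A^x-(V_0^xA_0^x+V_1^xA_1^x)=V_0^xA_1^x+V_1^xA_0^x=:W_A^x$, the ``Alice-is-wrong'' operator, and symmetrically for $W_B^{x'}$. Moreover, using \eqref{eq restriction error epsilon} with $0\preceq V_c^x\preceq\mathbb{I}$ and positivity of products of commuting projectors, every spurious cross term such as $\expectedbraket{V_0^xA_1^xB_0^x}$ or $\expectedbraket{V_0^xA_1^xB_\perp^x}$ vanishes, so that $\expectedbraket{W_A^x}=\expectedbraket{V_0^xA_1^xB_1^x}+\expectedbraket{V_1^xA_0^xB_0^x}$ is precisely the symmetric wrong-answer probability on input $x$; summing the two ratio constraints that define $p_{err}$ and clearing denominators then gives $\expectedbraket{W_A^x}\leq p_{err}\sum_a\expectedbraket{A_a^xB_a^x}$, and likewise $\expectedbraket{W_B^{x'}}\leq p_{err}\sum_a\expectedbraket{A_a^{x'}B_a^{x'}}$.

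For \eqref{eq perr constraint sum ab} I would take the seed $M=V_a^x+V_b^{x'}$ and $c=2$. After the template step the right-hand side collapses, via the identity above, to $\expectedbraket{W_A^x\bar B^{x'}}+\expectedbraket{\bar A^xW_B^{x'}}$; since $W_A^x\succeq0$ and $\bar B^{x'}\preceq\mathbb{I}$ (and symmetrically), these are at most $\expectedbraket{W_A^x}$ and $\expectedbraket{W_B^{x'}}$, and the two wrong-answer bounds conclude. For \eqref{eq constraint perr 2 simplified} I would refine the seed by re-expressing each projector in the other party's basis: with $\beta_i^b=\braket{i_x}{b_{x'}}$ and $\alpha_i^a=\braket{i_{x'}}{a_x}$, the off-diagonal ketbra terms in the stated operator are exactly the off-diagonal parts of $V_b^{x'}$ in the $x$-basis and of $V_a^x$ in the $x'$-basis, so the operator in the norm equals $2V_a^x+2V_b^{x'}-\abs{\beta_{1-a}^b}^2V_{1-a}^x-\abs{\alpha_{1-b}^a}^2V_{1-b}^{x'}$, now with $c=4$. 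Running the same template, the leading part again gives $2\expectedbraket{W_A^x\bar B^{x'}}+2\expectedbraket{\bar A^xW_B^{x'}}$, while the two negative pieces contribute $\sum_{ab}\abs{\beta_{1-a}^b}^2\expectedbraket{V_{1-a}^xA_a^xB_b^{x'}}$ and its $\alpha$-analogue, which I would bound by $\max_{i,j}\abs{\beta_i^j}^2\,\expectedbraket{W_A^x\bar B^{x'}}$ and $\max_{i,j}\abs{\alpha_i^j}^2\,\expectedbraket{\bar A^xW_B^{x'}}$; collecting coefficients reproduces the $(2+\max\abs{\beta}^2)$ and $(2+\max\abs{\alpha}^2)$ weights on the right-hand side.

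I expect the main obstacle to be the bookkeeping of the second inequality rather than any conceptual difficulty: verifying that the explicit ketbra expression for the operator in the norm collapses to the compact form $2V_a^x+2V_b^{x'}-\abs{\beta_{1-a}^b}^2V_{1-a}^x-\abs{\alpha_{1-b}^a}^2V_{1-b}^{x'}$, and justifying that the coefficients $\abs{\beta_{1-a}^b}^2$ may be replaced by their maximum before summation -- this is legitimate because each summand $\expectedbraket{V_{1-a}^xA_a^xB_b^{x'}}$ is nonnegative, so the replacement is term-by-term. The BB84 statement (Proposition \ref{prop ineq perr for QPV BB84}) is then the special case $m=2$, where all overlaps equal $1/\sqrt2$ and only \eqref{eq perr constraint sum ab} is needed.
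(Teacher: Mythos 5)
Your proof is correct and follows essentially the same route as the paper's: your wrong-answer operators $W_A^x$, $W_B^{x'}$ and their $p_{err}$-bounds are exactly the paper's preliminary inequalities (obtained from the ratio constraints \eqref{eq condition perr by def 1} together with the abort constraint \eqref{eq restriction error epsilon}), your ``template'' is the paper's operator-norm bound, and the identity $2\mathbb{I}-V_a^x-V_b^{x'}=V_{1-a}^x+V_{1-b}^{x'}$ is precisely the paper's resolution-of-identity splitting \eqref{equation split projectors}. The only difference is organizational: your compact rewriting of the operator inside the norm in \eqref{eq constraint perr 2 simplified} as $2V_a^x+2V_b^{x'}-\abs{\beta_{1-a}^b}^2V_{1-a}^x-\abs{\alpha_{1-b}^a}^2V_{1-b}^{x'}$ (which is indeed correct) streamlines the paper's longer term-by-term expansion via \eqref{eq decompose projectors}, but the underlying argument is identical.
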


The proof of Proposition \ref{prop ineq perr for QPV}, see Appendix \ref{Appendix proof ineq perr QPV}, relies on combining both expressions in \eqref{eq condition perr by def 1},
using $A_0^x+A_1^x\preceq \mathbb{I}$ and $B_0^{x'}+B_1^{x'}\preceq \mathbb{I}$ and bounding terms by the norm of the sums of the projectors $V_a^x$.\\

Therefore, using the above constraints, $p_{ans}$ can be upper bounded by the SDP problem:

\begin{equation}
\boxed{
\label{eq upperbound p_ans all constraints}
\begin{split}
    &\max \frac{1}{m}\sum_{x}(\expectedbraket{A_0^{x}B_0^{x}}+\expectedbraket{A_1^{x}B_1^{x}});\\
    &\textrm{subject to: the linear constraints for } \mathbf{S}^{\eta}_{MoE} \in \mathcal{Q}_\ell,\\& \textrm{ \hspace{15mm} and equations \eqref{eq restriction error epsilon}, \eqref{eq perr constraint sum ab} and \eqref{eq constraint perr 2 simplified}}. 
\end{split}}
\end{equation}
Fig.~\ref{Fig secure regions 322} shows a $SSR$ for the  $\mathrm{QPV^{\eta}_{(3,2,2)}}$ protocol obtained from the solutions of the SDP \eqref{eq upperbound p_ans all constraints} using the Ncpol2sdpa package \cite{Wittek_2015_Ncpol2Sdpa} in Python, see \cite{Code} for the code. Notice that Fig.~\ref{Fig secure regions 322} shows that the security region for this protocol is greater than the $SR$ of \QPVBBeta, meaning that it is more secure. However, analytical bounds on the best attack (even no loss) for these MoE games are so far only known for the BB84 game, and therefore we can not show tightness of our results beyond the \QPVBBeta~ protocol---a gap between our best upper bounds and lower bounds remains.
Numerical results from \eqref{eq upperbound p_ans all constraints} show that for different arbitrary $m$, $p_{ans}$ for $p_{err}=0$ is upper bounded by $\frac{1}{m}$, which is attainable by the strategy of Alice randomly guessing $x$, measuring in this basis, broadcasting the outcome and answering if she was correct and otherwise claiming no photon.\\ 
\begin{figure}[h]
\centering
\includegraphics[width=95mm]{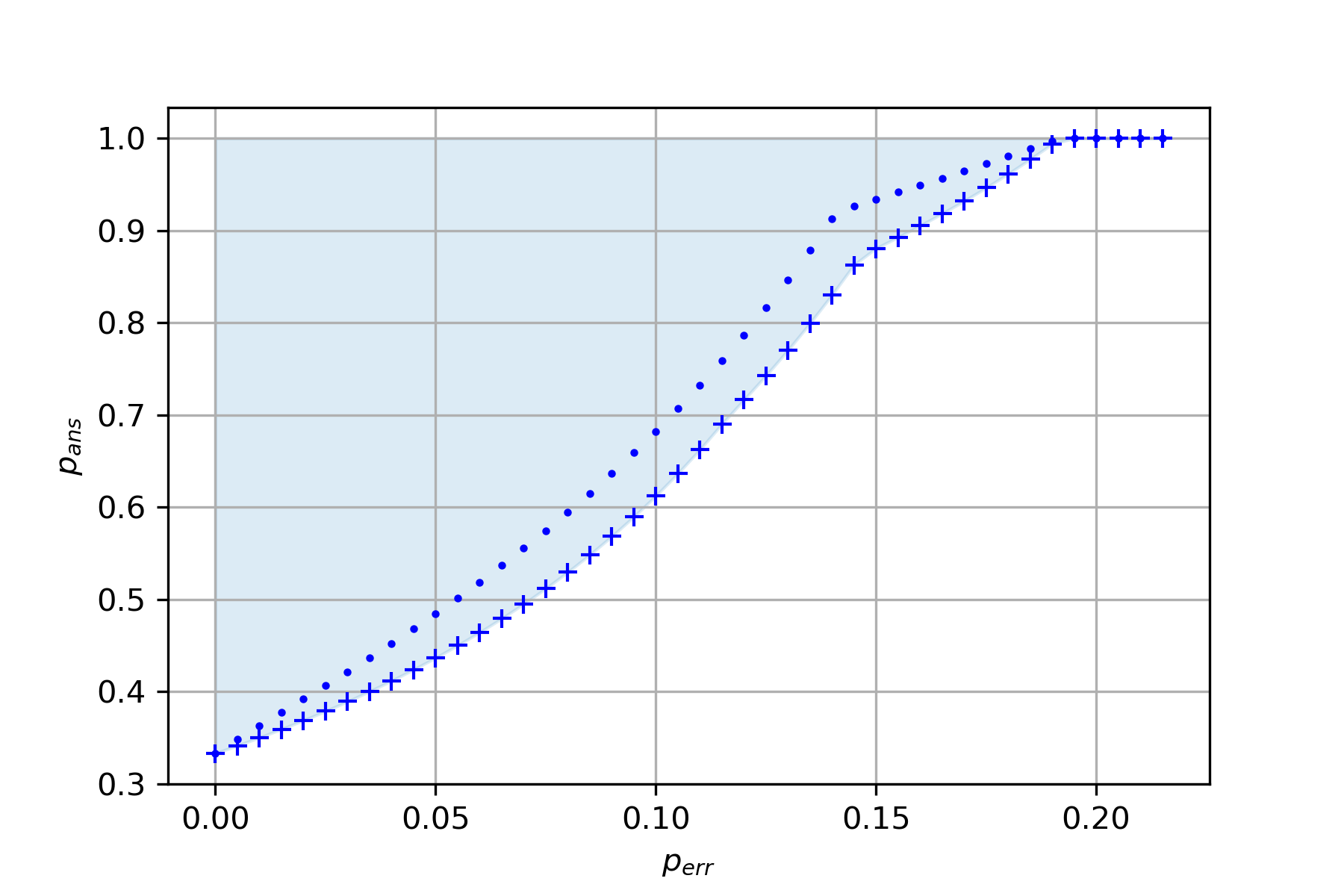}
\caption{Solutions of the first (blue dots) and second level (blue pluses) of the NPA hierarchy for the SDP \eqref{eq upperbound p_ans all constraints} highlighting a $SSR\subseteq SR$ of the $\mathrm{QPV^{\eta}_{(3,2,2)}}$ protocol in light blue. }
\label{Fig secure regions 322}
\end{figure}
As stated above, finding the smallest $p_{err}$ such that $p_{ans}=1$ can be used to upper bound the winning probability $p_{win}$. Fig.~\ref{Fig upper bound Tomamichel and ours} shows the values upper bounding $p_{win}$ with the SDP~\eqref{eq upperbound p_ans all constraints}, showing security of the protocol for different $(m,m_{\theta}, m_{\varphi})$, compared with the upper bound obtained by \eqref{eq upperbound Tomamichel} \cite{TomamichelMonogamyGame2013}, when the attackers always `play', where significant differences between both methods can be appreciated. The security of the sequential repetition of \QPVBBetam~is obtained as in the proof of Theorem~\ref{theorem seq repetition entangled} adjusting the parameters accordingly for the particular choice of $m_{\theta\varphi}$. 

\begin{figure}[h]
\centering
\includegraphics[width=100mm]{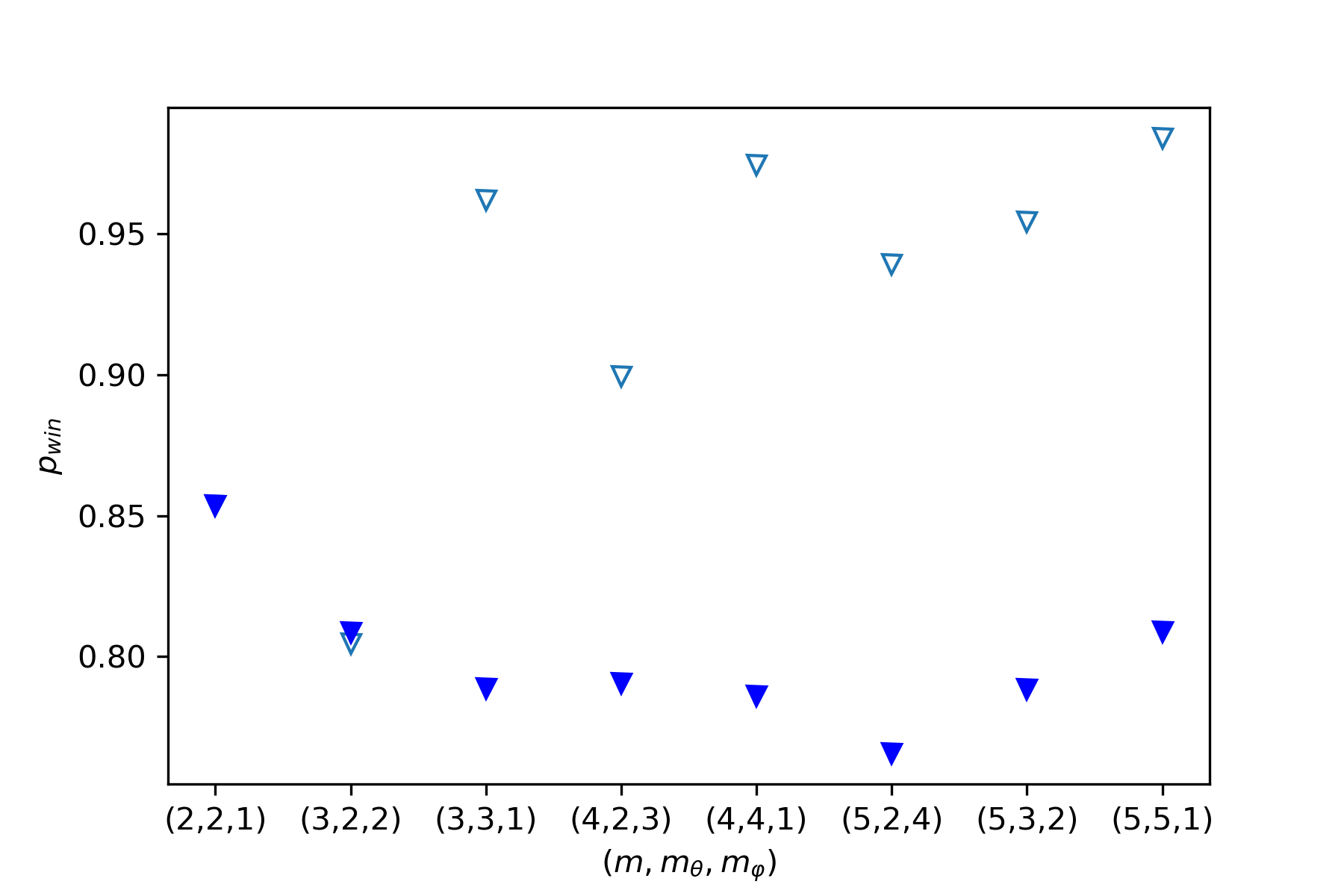}
\caption{Upper bounds of $p_{win}$ using \eqref{eq upperbound Tomamichel} (empty triangles) and the solution of SDP \eqref{eq upperbound p_ans all constraints} (solid triangles) for the \QPVBBetam~protocol for different values of $(m,m_{\theta}, m_{\varphi})$.}
\label{Fig upper bound Tomamichel and ours}
\end{figure}

\newpage\section{The \texorpdfstring{\QPVBBetafm}{QPV-BB84-eta,f} protocol and its security against entangled attackers}\label{section m basis and entanglement}

In Section \ref{section 2 basis and entanglement} we have shown that  \QPVBBetaf~is secure against entangled attackers if they hold a bounded number of qubits, but it is currently non-implementable experimentally due to the fact that it is not secure for $\eta\leq 1/2$. On the other hand, in Section \ref{section m basis no entanglement} we have shown that extending the \QPVBBeta~ protocol to $m$ bases allows for more resistance to photon loss. 
In this section, we use the results from Section~\ref{section m basis no entanglement} to prove Lemma~\ref{lemma average distace eta Delta/2}. This lemma will form the key tool to re-apply the analysis in \cite{bluhm2022single} to our case, and as a consequence we will show that the \QPVBBetafm~protocol is secure against entangled attackers (and more loss-tolerant than \QPVBBetaf). 
The results in this section are proven for arbitrary $m$, however they are based on solving the SDP described in \eqref{eq upperbound p_ans all constraints}, which needs $m$ to be fixed.
Here, we obtain numerical results for the two particular cases $m_{\theta\varphi}=(3,2,2)$ and $m_{\theta\varphi}=(5,3,2)$, but to obtain the results for any $m_{\theta\varphi}$ that would potentially be applied experimentally, one just needs to solve \eqref{eq upperbound p_ans all constraints} and the corresponding relaxation, see below.
Moreover, here we solve the semidefinite programs for a complete range of $p_{err}$, thereby obtaining an exhaustive characterization for the fixed $m_{\theta\varphi}$, but for an experimental implementation it would just be needed to solve the SDPs for the ranges that the experimental set-up requires. 

For some $n\in\mathbb{N}$, consider a $2n$-bit function $f:\{0,1\}^n \times \{0,1\}^n \to [m]$. \emph{One round of the $m_{\theta\varphi}$-basis lossy-function protocol}, denoted by \QPVBBetafm, is described as in Definition \ref{def qpv bb84 f} changing the range of $f$. The corresponding general attack is described as in the attack to \QPVBBetaf, extended by changing the range of the function~$f$, and similarly for the \emph{$q$-qubit strategy}.

With the same reasoning as in Section \ref{section 2 basis and entanglement}, from \eqref{eq upperbound p_ans all constraints} and its corresponding relaxation given by $\expectedbraket{A^x_aB^x_b}\leq\xi$, for all $a\neq b$, we have that there exists functions $w_{m_{\theta\varphi}}(\eta)$ and $\Tilde w_{m_{\theta\varphi}}^{\xi}(\eta)$ such that 
\begin{equation}\label{eq ineq SDP  m basis}
    \frac{1}{m}\sum_{i\in[m]}p_{\phi}^{i,\eta}\leq w_{m_{\theta\varphi}}(\eta),
\end{equation}
and
\begin{equation}\label{eq ineq SDP with errors m basis}
    \frac{1}{m}\sum_{i\in[m]}\Tilde{p}_{\phi}^{i,\eta,\xi}\leq \Tilde{w}^{\xi}_{m_{\theta\varphi}}(\eta),
\end{equation}
that are such that $w_{m_{\theta\varphi}}(\eta)\leq\Tilde{w}_{m_{\theta\varphi}}^{\xi}(\eta)$. The probabilities $p_{\phi}^{i,\eta}$ and $\Tilde{p}_{\phi}^{i,\eta,\xi}$ are as in \eqref{eq prob 0 eta}
and \eqref{eq prob 0 eta with error}, respectively, with $i\in[m]$. See Fig.~\ref{figure upper bound w Tilde w for m=3 and m=5} for a numerical approximation of the functions $w_{m_{\theta\varphi}}(\eta)$ and $\Tilde{w}_{m_{\theta\varphi}}^{\xi}(\eta)$ for different $m_{\theta\varphi}$, see \cite{Code} for the code.

Now, we show a lemma that formalizes the idea that if a set of $m$ quantum states can be used to be correct with high probability in an attack of the protocol, then, their average distance is lower bounded by a certain amount.
This has the interpretation that these states cannot all be simultaneously arbitrarily close. 
This means that exists an $\varepsilon_0$ such that for all $\varepsilon\leq\varepsilon_0$, $\bigcap_{i\in[m]}\mathcal{S}_i^{\varepsilon}=\emptyset$,  where $\mathcal{S}_i^{\varepsilon}$ is defined as in Definition~\ref{def S_i^e} with $i\in[m]$.

 \begin{figure}[h]
\centering
\subfigure[]{\includegraphics[width=74mm]{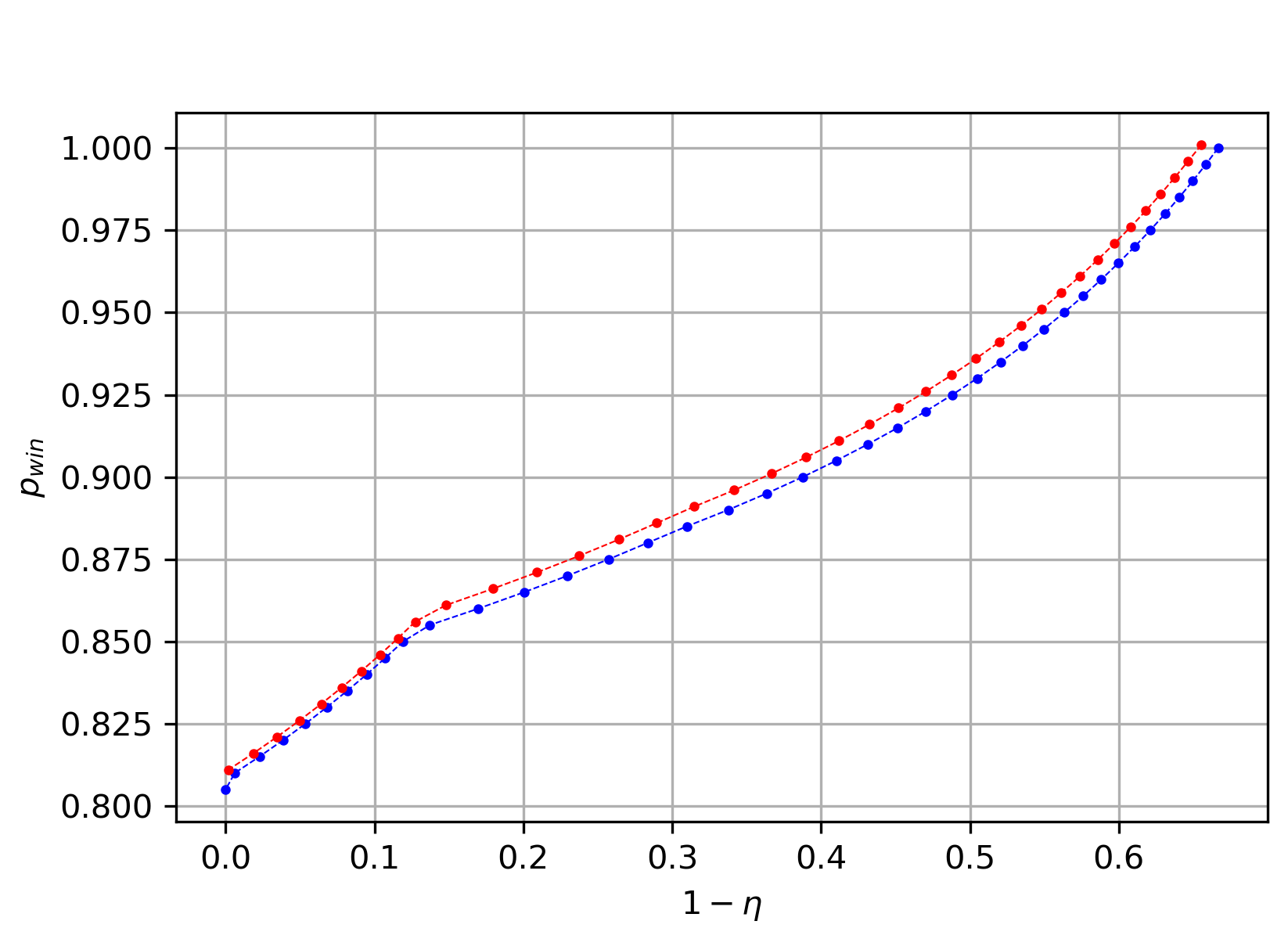}}
\subfigure[]{\includegraphics[width=74mm]{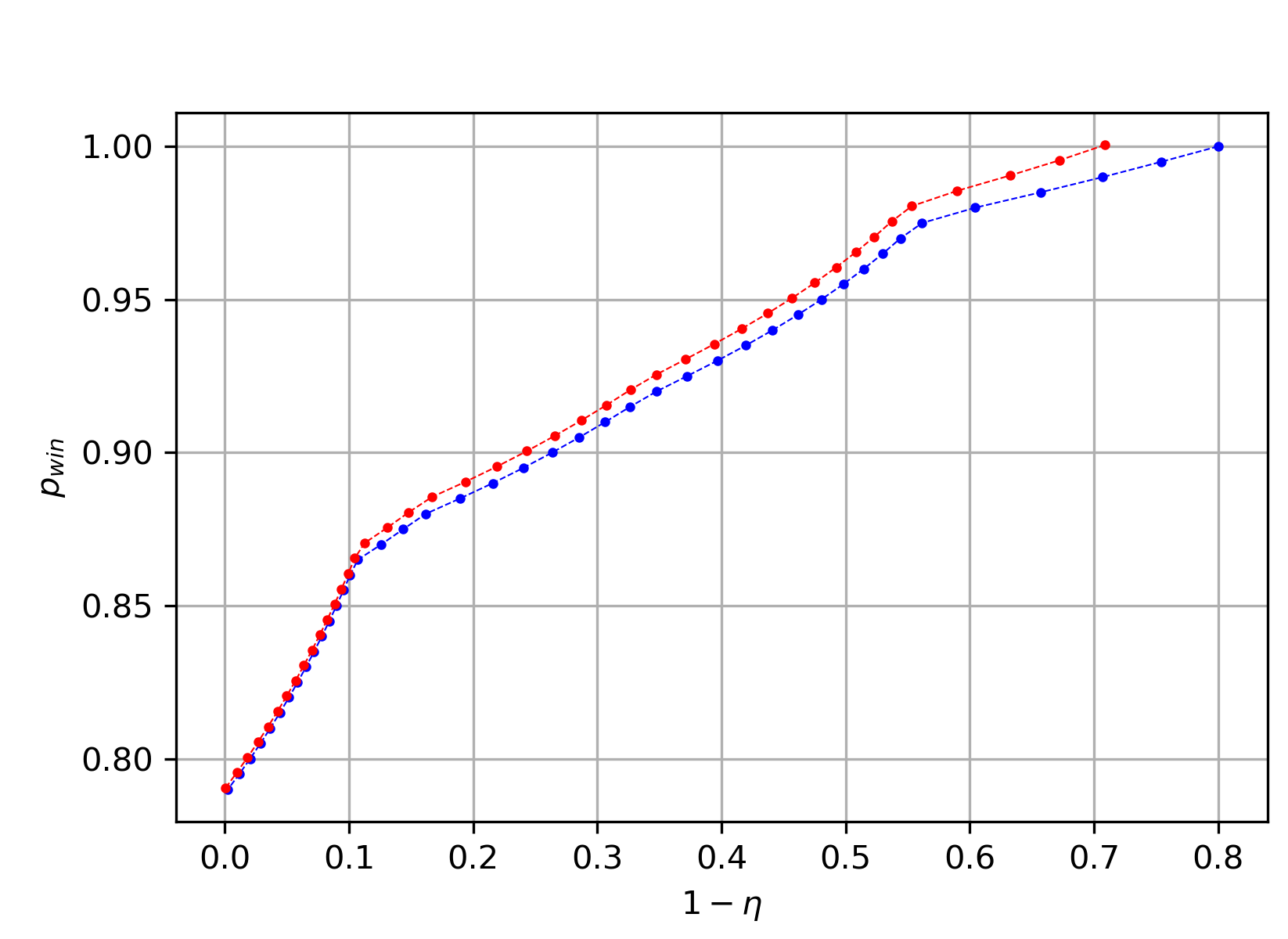}}
\caption{Upper bounds of the winning probability given by \eqref{eq upperbound p_ans all constraints} (blue dots), corresponding to a numerical representation of the function $w_{m_{\theta\varphi}}(\eta)$. Red dots correspond to a numerical representation of  the function $\Tilde w^{\xi}_{m_{\theta\varphi}}(\eta)$, which is obtained by adding $\xi=0.001$ to the relaxation of \eqref{eq upperbound p_ans all constraints} where the attackers are allowed to make errors with probability $\xi$, for (a) $m_{\theta\varphi}=(3,2,2)$, and (b) $m_{\theta\varphi}=(5,3,2)$. The blue dots in (a) are obtained by the level 2 of the NPA hierarchy, and the rest of the values, via the level `1+AB'. The continuous interpolation between values  is meant for a better viewing of the plot.}
\label{figure upper bound w Tilde w for m=3 and m=5}
\end{figure}

\begin{lemma}\label{lemma average distace eta Delta/2}
Let $\ket{\psi_i}$ be such that $p^{i,\eta}_{\psi_i}\geq \Tilde{w}^{\xi}_{m_{\theta\varphi}}(\eta)+\Delta$, for all $i\in [m]$, for some $\Delta>0$, which implies that \ $\ket{\psi_i}\in\mathcal{S}_i^{\varepsilon}$, for $\varepsilon=1-(\Tilde{w}^{\xi}(\eta)+\Delta)$. 
Then, for all $j\in [m]$
\begin{equation}\label{eq in lemma at on average they are far}
    \underset{i\in[m]}{\mathbb{E}}[\mathcal{D}(\ket{\psi_j},\ket{\psi_i})]\geq \frac{\eta \Delta}{2}.
\end{equation}
\end{lemma}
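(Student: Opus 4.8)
The plan is to lift the proof of Lemma~\ref{lemma D>=eta Delta} (the $m=2$ case) by replacing the single pairwise comparison with an average over all $m$ inputs, and by invoking the averaged SDP bound \eqref{eq ineq SDP with errors m basis} in place of \eqref{eq ineq SDP with errors}. A useful sanity check that fixes the constant: for $m=2$ and $j$ fixed, the left-hand side of \eqref{eq in lemma at on average they are far} is $\tfrac12\mathcal{D}(\ket{\psi_0},\ket{\psi_1})$, so the target bound $\eta\Delta/2$ is exactly the $m=2$ statement $\mathcal{D}\geq\eta\Delta$ of Lemma~\ref{lemma D>=eta Delta}. This tells me the factor of $2$ loss (compared to the pairwise $\eta\Delta$) must come precisely from the $2d_i$ objective-substitution error, and that no additional slack should appear.

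First I fix $j\in[m]$ and set $d_i:=\mathcal{D}(\ket{\psi_j},\ket{\psi_i})$, so that $d_j=0$. For each $i$ I start from the definition \eqref{eq prob 0 eta} of $p^{i,\eta}_{\psi_i}$ and substitute $\ketbra{\psi_j}{\psi_j}$ for $\ketbra{\psi_i}{\psi_i}$, exactly as in \eqref{eq inequality p ptilde and delta}. Applying the Proposition \eqref{eq tr different states upper bounded by their distance} to each projector $V^i_a\otimes A^i_a\otimes B^i_a$ with $a\in\{0,1\}$ bounds the change of the objective by $2d_i$; applying it to the constraint operators turns the feasibility conditions $\expectedbraket{A^i_aB^i_b}=0$ and $\expectedbraket{A^i_\perp B^i_\perp}=1-\eta$ (evaluated on $\psi_i$) into the relaxed conditions that define $\Tilde{p}^{i,\eta,\xi}_{\psi_j}$ in \eqref{eq prob 0 eta with error}, provided we take the relaxation parameter $\xi\geq\max_i d_i$. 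This yields, for every $i\in[m]$,
\[
\eta\, p^{i,\eta}_{\psi_i}\;\leq\; 2d_i+\eta\,\Tilde{p}^{i,\eta,\xi}_{\psi_j}.
\]

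Next I insert the hypothesis $p^{i,\eta}_{\psi_i}\geq \Tilde{w}^{\xi}_{m_{\theta\varphi}}(\eta)+\Delta$ on the left, sum over all $i\in[m]$, and divide by $m$, writing $D:=\tfrac1m\sum_{i\in[m]}d_i$ for the quantity to be bounded:
\[
\eta\big(\Tilde{w}^{\xi}_{m_{\theta\varphi}}(\eta)+\Delta\big)\;\leq\; 2D+\frac{\eta}{m}\sum_{i\in[m]}\Tilde{p}^{i,\eta,\xi}_{\psi_j}.
\]
The averaged SDP bound \eqref{eq ineq SDP with errors m basis}, applied to the single state $\ket{\psi_j}$, gives $\tfrac1m\sum_{i\in[m]}\Tilde{p}^{i,\eta,\xi}_{\psi_j}\leq \Tilde{w}^{\xi}_{m_{\theta\varphi}}(\eta)$. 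Substituting this cancels the $\eta\,\Tilde{w}^{\xi}_{m_{\theta\varphi}}(\eta)$ terms and leaves $\eta\Delta\leq 2D$, i.e.\ $\mathbb{E}_{i\in[m]}[\mathcal{D}(\ket{\psi_j},\ket{\psi_i})]\geq \eta\Delta/2$, which is \eqref{eq in lemma at on average they are far}.

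The step I expect to be the main obstacle is the bookkeeping of the relaxation parameter against the individual distances: the substitution for index $i$ most naturally produces $\Tilde{p}^{i,\eta,d_i}_{\psi_j}$ with a different error level $d_i$ for each term, whereas the averaged bound \eqref{eq ineq SDP with errors m basis} is stated for one common $\xi$. I resolve this exactly as in the $m=2$ proof, where $\xi$ was set equal to the relevant distance: here I choose a single $\xi\geq\max_i d_i$ and note that enlarging the error level only relaxes the feasibility set, so the substitution directly delivers $\Tilde{p}^{i,\eta,\xi}_{\psi_j}$ and the hypothesis is read at that same level $\xi$. Once this single-parameter reconciliation is in place, the remainder is a term-by-term lift of the two-basis computation and the averaging closes the argument; the detailed (generalized) version is carried out in the proof referenced for Lemma~\ref{lemma D>=eta Delta}.
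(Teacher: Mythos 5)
Your proposal is correct and follows essentially the same route as the paper's proof: the paper likewise applies the substitution inequality \eqref{eq inequality p ptilde and delta} with pairwise distances $\xi_{ij}=\mathcal{D}(\ket{\psi_j},\ket{\psi_i})$, passes to a common relaxation level $\xi=\max_{ij}\xi_{ij}$ (your $\xi\geq\max_i d_i$ plays the identical role), fixes $j$, sums over $i$, and closes with the hypothesis together with the averaged SDP bound \eqref{eq ineq SDP with errors m basis}. The only differences are cosmetic bookkeeping (the paper takes the maximum over all pairs rather than over $i$ for fixed $j$, and works with $\frac{2}{\eta}\xi_{ij}$ instead of your equivalent $2d_i$ form), so there is nothing substantive to reconcile.
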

\begin{proof}
Let $\xi_{ij}=\mathcal{D}(\ket{\psi_j},\ket{\psi_i})$ and let $\xi=\max_{ij}\xi_{ij}$. From an analogous application of equation~\eqref{eq inequality p ptilde and delta}, we have that for all $i,j\in[m]$,
\begin{equation}
    p^{i,\eta}_{\psi_i}\leq \frac{2}{\eta}\xi_{ij}+\Tilde{p}^{i,\eta,\xi_{ij}}_{\psi_j}\leq \frac{2}{\eta}\xi_{ij}+\Tilde{p}^{i,\eta,\xi}_{\psi_j},
\end{equation}
where in the second inequality we used that replacing $\xi_{ij}$ by $\xi$ is a relaxation of the restrictions of the maximization \eqref{eq prob 0 eta with error}. Fixing $j$ and summing over $i$,
\begin{equation}\label{eq for m basis p and ptilde sum over i}
    \sum_{i\in[m]}  p^{i,\eta}_{\psi_i}\leq \frac{2}{\eta}\sum_{i\in[m]}\xi_{ij}+\sum_{i\in[m]}\Tilde{p}^{i,\eta,\xi}_{\psi_j}.
\end{equation}
By hypothesis, each term in the left-hand side is lower bounded by $\Tilde{w}^{\xi}_{m_{\theta\varphi}}(\eta)+\Delta$. This, together with \eqref{eq ineq SDP with errors m basis}, lead to \eqref{eq in lemma at on average they are far}. 
\end{proof}

Since Lemma \ref{lemma average distace eta Delta/2} implies that if a set of $m$ states `performs well' on their respective inputs, their average distance with respect to an arbitrary state is at least a certain amount, meaning that there are at least two states that differ by such an amount, it has as consequence that Alice and Bob in some sense have to decide (at least one) strategy not to follow before they communicate. Consequently, if the dimension of the state they share is small enough, a classical description of the first part of their strategy yields a compression of $f$. The notion of the following definition captures this classical compression.

\begin{definition} Let $q,k,n\in\mathbb{N}$, $\varepsilon>0$. Then,
\begin{equation}
    g_{m_{\theta\varphi}}: \{0,1\}^{3k}\rightarrow 2^{[m]}	\setminus [m]
\end{equation}
is an $(\varepsilon,q)$-classical rounding restriction of size $k$ is for all $f:\{0,1\}^{2n}$, for all states $\ket{\phi}$ on $2q+1$ qubits, for all $l\in\{1,...,2^{2n}\}$ and for all $(\varepsilon,l)$-perfect $q$-qubit strategies for $QPV^{f}_{m_{\theta \varphi}}$, there are functions $f_A:\{0,1\}^n\rightarrow\{0,1\}^k$, $f_B:\{0,1\}^n\rightarrow\{0,1\}^k$ and $\lambda\in\{0,1\}^k$ such that $f(x,y)\in g(f_A(x),f_B(y),\lambda)$.
\end{definition}

\begin{lemma}\cite{bluhm2022single}\label{lemma unit vectors trace distance norm}
Let $\ket{x},\ket{y}\in\mathbb{C}^d$, for $d\in\mathbb N$, be two unit vectors. Then, $\mathcal{D}(\ket{x},\ket{y})\leq \| \ket{x}-\ket{y}\|_2$.
\end{lemma}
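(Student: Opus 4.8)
The plan is to reduce both sides of the inequality to explicit functions of the single overlap $\braket{x}{y}$ and then compare them by an elementary algebraic manipulation. First I would invoke the standard closed form for the trace distance between two pure states (see \cite[Chapter 9]{NielsenChuang}), namely
\begin{equation}
    \mathcal{D}(\ket{x},\ket{y})=\sqrt{1-\abs{\braket{x}{y}}^2}.
\end{equation}
This is exactly the quantity $\tr{Q}$ appearing in the proof of the preceding proposition: the Hermitian operator $\ketbra{x}{x}-\ketbra{y}{y}$ has rank at most two and acts nontrivially only on $\mathrm{span}\{\ket{x},\ket{y}\}$, where its two eigenvalues are $\pm\sqrt{1-\abs{\braket{x}{y}}^2}$, so that the positive part $Q$ has trace $\sqrt{1-\abs{\braket{x}{y}}^2}$.

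Next I would expand the Euclidean distance using that both vectors are normalized,
\begin{equation}
    \norm{\ket{x}-\ket{y}}_2^2=\braket{x}{x}+\braket{y}{y}-\braket{x}{y}-\braket{y}{x}=2-2\,\mathrm{Re}\braket{x}{y}.
\end{equation}
Since both $\mathcal{D}(\ket{x},\ket{y})$ and $\norm{\ket{x}-\ket{y}}_2$ are nonnegative, the desired bound is equivalent, after squaring, to
\begin{equation}
    1-\abs{\braket{x}{y}}^2\leq 2-2\,\mathrm{Re}\braket{x}{y}.
\end{equation}

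Finally I would write $s:=\braket{x}{y}=a+ib$ and observe that the difference of right- and left-hand sides equals $\abs{s}^2-2\,\mathrm{Re}(s)+1=(a-1)^2+b^2\geq 0$, which settles the claim, with equality precisely when $\braket{x}{y}=1$, i.e.\ $\ket{x}=\ket{y}$. The only substantive ingredient here is the pure-state trace-distance formula; once that is recalled the comparison is just a completion of the square, so I do not anticipate any genuine obstacle in carrying out this proof.
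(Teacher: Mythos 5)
Your proof is correct, and every step checks out: the pure-state trace-distance formula $\mathcal{D}(\ket{x},\ket{y})=\sqrt{1-\abs{\braket{x}{y}}^2}$ is the standard one, your eigenvalue justification is sound (on $\mathrm{span}\{\ket{x},\ket{y}\}$ the operator $\ketbra{x}{x}-\ketbra{y}{y}$ is traceless with determinant $-(1-\abs{\braket{x}{y}}^2)$, giving eigenvalues $\pm\sqrt{1-\abs{\braket{x}{y}}^2}$, so the positive part $Q$ has the claimed trace, consistent with the paper's earlier proposition on trace distance), squaring is legitimate since both sides are nonnegative, and the completion of the square $\abs{s}^2-2\,\mathrm{Re}(s)+1=(a-1)^2+b^2\geq 0$ settles the comparison, with the correct equality condition. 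Note, however, that there is no in-paper proof to compare against: the paper states this lemma purely as an imported fact from \cite{bluhm2022single}, so your argument supplies a self-contained proof where the paper gives none. Your route is essentially the canonical one; an equivalent and commonly seen variant (which avoids introducing real and imaginary parts explicitly) is the chain $1-\abs{s}^2=(1-\abs{s})(1+\abs{s})\leq 2(1-\abs{s})\leq 2(1-\mathrm{Re}\,s)$ for $s=\braket{x}{y}$, which buys nothing substantive over your completion of the square -- both reduce the lemma to the same elementary scalar inequality.
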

\begin{lemma}\label{lemma classical rounding size}
Let $\Delta> 0$, and let $0\leq\varepsilon\leq \varepsilon_0$, where $\varepsilon_0$ is such that $\ket{\psi_i}\in\mathcal{S}_i^{\varepsilon}$, implies $\underset{i\in[m]}{\mathbb{E}}[\mathcal{D}(\ket{\psi_j},\ket{\psi_i})]\geq \frac{\eta \Delta}{2}$. Then there exists an $(\varepsilon,q)$-classical rounding restriction of size $k=\log(\lceil\frac{4}{2^{\frac{1}{3}} (\eta \Delta +4)^{\frac{1}{3}}-2}\rceil)2^{2q+2}$. 

\end{lemma}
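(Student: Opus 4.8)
The plan is to reproduce, in the $m$-basis setting, the compression argument of \cite{bluhm2022single}: from any sufficiently good $q$-qubit strategy I will extract a short classical description of the attackers' first-round actions, and exhibit one fixed function $g_{m_{\theta\varphi}}$ that, fed these descriptions, outputs a proper subset of $[m]$ still containing $f(x,y)$. The whole construction is powered by Lemma~\ref{lemma average distace eta Delta/2}, which guarantees that states performing well on distinct inputs cannot all cluster together (their average trace distance is at least $\eta\Delta/2$); a net that is fine relative to this separation will therefore be able to tell inputs apart while tolerating the rounding error.

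First I would fix the net. Pure states on $2q+1$ qubits live on the unit sphere of $\mathbb{C}^{2^{2q+1}}$, which as a real space has dimension $n_0=2^{2q+2}$, so by Lemma~\ref{lemma size delta-net} there is a $\delta$-net $S$ in Euclidean norm with $|S|\leq(1+2/\delta)^{2^{2q+2}}$. Labelling its points by $k$-bit strings requires $2^k\geq|S|$, i.e.\ $k\geq 2^{2q+2}\log(1+2/\delta)$. Choosing $\delta$ just below the threshold identified in the last paragraph and bounding $|S|$ via Lemma~\ref{lemma size delta-net} yields, after routine algebra entirely parallel to the \QPVBBetaf~case, the stated size $k=\log(\lceil\tfrac{4}{2^{1/3}(\eta\Delta+4)^{1/3}-2}\rceil)2^{2q+2}$.

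Next I would build the rounding. Given a strategy with shared state $\ket{\psi}$ and local rotations $U_A^x$, $U_B^y$, I let $\lambda$ encode a net approximation of $\ket{\psi}$, $f_A(x)$ a net approximation of Alice's local rotation, and $f_B(y)$ that of Bob's; the function $g_{m_{\theta\varphi}}$ then reassembles from these three labels an approximate joint state $\ket{\hat{\psi}_{xy}}$ and outputs $\{\,i\in[m]:\ket{\hat{\psi}_{xy}}\in\mathcal{S}_i^{\varepsilon}\,\}$, with $\mathcal{S}_i^{\varepsilon}$ as in Definition~\ref{def S_i^e}. Because the description consists of three independently rounded pieces, their errors compound to a bound of the form $(1+\delta)^3-1=3\delta+3\delta^2+\delta^3$, which via Lemma~\ref{lemma unit vectors trace distance norm} I convert into a trace-distance bound on $\mathcal{D}(\ket{\hat{\psi}_{xy}},(U_A^x\otimes U_B^y)\ket{\psi})$. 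Two facts then have to be checked: that $f(x,y)$ lies in the output set, and that the output set is \emph{proper}. The first holds because the $(\varepsilon,l)$-perfect strategy places the true state in $\mathcal{S}_{f(x,y)}^{\varepsilon}$, so its nearby rounded version still certifies that input; the second holds because Lemma~\ref{lemma average distace eta Delta/2} forbids a single state from lying in every $\mathcal{S}_i^{\varepsilon}$ at once, so at least one $i$ is excluded and $g_{m_{\theta\varphi}}$ indeed maps into $2^{[m]}\setminus[m]$.

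The step I expect to be the main obstacle is calibrating $\delta$ against the two-sided requirement just described, since Lemma~\ref{lemma average distace eta Delta/2} only provides an \emph{average} separation of $\eta\Delta/2$ rather than a pairwise one. To guarantee that rounding neither destroys membership of the true input nor spuriously adds all inputs, I need the compounded error strictly below half of this bound, namely $3\delta+3\delta^2+\delta^3<\eta\Delta/4$, which holds precisely for $\delta<(4+\eta\Delta)^{1/3}/2^{2/3}-1$. Pinning down how the three local approximations combine into the clean cubic $(1+\delta)^3-1$, and carefully separating Alice's $x$-dependent contribution from Bob's $y$-dependent one so that the reassembly is well defined despite the non-signalling structure, is the delicate part; once that is in place the threshold computation and the passage from $\log(1+2/\delta)$ to the stated $k$ are mechanical and mirror the \QPVBBetaf~derivation.
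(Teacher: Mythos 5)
Your proposal is correct and follows essentially the same route as the paper's proof: the same $\delta$-nets (states in Euclidean norm, local unitaries in operator norm) labelled by $k$-bit strings via Lemma \ref{lemma size delta-net}, the same membership-based definition of $g_{m_{\theta\varphi}}$, the same compounded error bound $3\delta+3\delta^2+\delta^3<\eta\Delta/4$ via Lemma \ref{lemma unit vectors trace distance norm}, and the same threshold $\delta<(4+\eta\Delta)^{1/3}/2^{2/3}-1$ yielding the stated $k$. If anything, you are slightly more explicit than the paper on two points it leaves implicit, namely that Lemma \ref{lemma average distace eta Delta/2} only gives an average (not pairwise) separation, and that the output of $g_{m_{\theta\varphi}}$ is a \emph{proper} subset of $[m]$ because no single state can lie in every $\mathcal{S}_i^{\varepsilon}$.
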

\begin{proof}
We follow the same techniques as in the proof of Lemma~3.12 in \cite{bluhm2022single}. Let $\delta=\frac{\sqrt[3]{\eta \Delta +4}}{2^{2/3}}-1-\epsilon$, where $\epsilon>0$ is infinitesimally small, and consider $\delta$-nets $\mathcal{N}_S$, $\mathcal{N}_A$ and $\mathcal{N}_B$, where the first is for the set of pure states on $2q+1$ qubits in Euclidean norm and the other nets are for the set of unitaries in dimension $2^q$ in operator norm. They are such that $\abs{\mathcal{N}_S}$, $\abs{\mathcal{N}_A}$, $\abs{\mathcal{N}_B}\leq 2^k$. Let $\ket{\varphi}\in\mathcal{N}_S$, $U_A\in\mathcal{N}_A$, and $U_B\in\mathcal{N}_B$ be the elements with indices $x'\in\{0,1\}^k$, $y'\in\{0,1\}^k$ and $\lambda\in\{0,1\}^k$, respectively. We define $g$ as $g(x,y,\lambda)=\{j \mid U\otimes V\ket{\varphi}\in\mathcal{S}_j^{\varepsilon}\}$. We are going to show that $g$ is an $(\varepsilon,q)$-classical rounding restriction.\\
Let $i$, $j$ be such that $\mathcal{D}(\ket{\psi_i},\ket{\psi_j})\geq \eta\Delta/2$. Let $\ket{\psi}$, $\{U_A^x,U_B^y\}_{xy}$ be from a $q$-qubit strategy for \QPVBBetafm, and choose $\lambda$, $f_A(x)$ and $f_B(y)$ to be the closest elements to $\ket{\psi}$, $U_A^x$ and $U_B^y$, respectively, in their corresponding $\delta$-nets in the Euclidean and operator norm, respectively, (if not unique, make an arbitrary choice) and let $\ket{\varphi},U_A,U_B$ be their corresponding elements. Assume $U_A^x\otimes U_B^y\ket{\psi}\in\mathcal{S}_i^{\varepsilon}$. Then,
\begin{equation}\label{eq distance in the net}
\begin{split}
    \mathcal{D}(&U_A^x\otimes U_B^y\ket{\psi},U_A\otimes U_B\ket{\varphi})\leq \| U_A^x\otimes U_B^y\ket{\psi}-U_A\otimes U_B\ket{\varphi}\|_2\\&\leq \|(U_A+U_A^x-U_A)\otimes(U_B+U_B^y-U_B)(\ket{\varphi}+\ket{\psi}-\ket{\varphi})-U_A\otimes U_B\ket{\varphi}\|_2
    \\&\leq3\delta+3\delta^2 +\delta^3< \frac{\eta \Delta/2}{2},
\end{split}
\end{equation}
where in the first inequality, we have used Lemma \ref{lemma unit vectors trace distance norm}, in the second, we have used the triangle inequality and the inequality $\|X\otimes Y \ket{x}\|_2\leq \|X\|_{\infty}\|Y\|_{\infty}\| \ket{x}\|_2$, together with $\|U_A^x-U_A\|_{\infty},\\{\|U_B^y-U_B\|_{\infty}},\|\ket{\psi}-\ket{\varphi}\|\leq \delta$, and, finally, in the last inequality we used that $\delta<\frac{\sqrt[3]{\eta \Delta +4}}{2^{2/3}}-1$.
Thus, $U_A\otimes U_B\ket{\varphi}$ is closer to $\mathcal{S}_i^{\varepsilon}$ than to $\mathcal{S}_j^{\varepsilon}$.\\
Consider an $(\varepsilon, l)$-perfect strategy for \QPVBBetafm~ and let $(x,y)$ be such that the attackers are caught with probability at most $\varepsilon$ and such that $f(x,y)=i$. In particular, we have that $U_A^x\otimes U_B^y\ket{\psi}\in\mathcal{S}_i^{\varepsilon}$, and because of \eqref{eq distance in the net}, $f(x,y)\in g(f_A(x),f_B(y),\lambda)$. Since there are at least $l$ pairs $(x,y)$ fulfilling it, $f(x,y)\in g(f_A(x),f_B(y),\lambda)$ holds on at least $l$ pairs $(x,y)$ and therefore $g$ is an $(\varepsilon,q)$-classical rounding restriction. The size of $k$ follows from Lemma \ref{lemma size delta-net}.

\end{proof}

Given $m_{\theta\varphi}$, we denote by $\eta_{m_{\theta\varphi}}$ the maximum $\eta$ such that $\Tilde w_{m_{\theta\varphi}}^{\xi}(\eta)+\Delta\leq 1$, e.g.\ for $m_{\theta\varphi}=(2,2,1)$, i.e.\ \QPVBBeta, that corresponds to 0.53. From \eqref{eq ineq SDP with errors m basis} and picking $\Delta=0.009$,  $\eta_{m_{\theta\varphi}}=0.36$ for $m_{\theta\varphi}=(3,2,2)$ and $\eta_{m_{\theta\varphi}}=0.34$ for $m_{\theta\varphi}=(5,3,2)$ (by picking a smaller $\Delta$, the latter gets closer to 0.2). 

\begin{lemma}
\label{lemma holds on less than beta}
Let $\Delta=0.009$, $\eta\in(\eta_{m_{\theta\varphi}},1]$, $\varepsilon \in [0,1]$ $n$, $k$, $q \in \mathbb N$, $n \geq 10$. Moreover, fix an $(\varepsilon, q)$-classical rounding $g$ of size $k$ with $k=\log(\lceil\frac{4}{2^{\frac{1}{3}} (\eta \Delta +4)^{\frac{1}{3}}-2}\rceil)2^{2q+2}$. Let $
q \leq \frac{1}{2}n-5$. Then, a uniformly random $f: \{0,1\}^{2n} \to \{0,1\}$ fulfills the following with probability at least $1 - 2^{-2^{n}}$:
 For any $f_A:\{0,1\}^n \to \{0,1\}^{k}$, $f_B:\{0,1\}^n \to \{0,1\}^{k}$, $\lambda \in \{0,1\}^{k}$,  $ f(x,y)\in g(f_A(x), f_B(y), \lambda) $ holds on less than $1-\beta_{m_{\theta\varphi}}$ of all pairs $(x,y)$, for certain $\beta_{m_{\theta\varphi}}>0$ ($\beta_{m_{\theta\varphi}}=0.15$ for $m_{\theta\varphi}=(3,2,2)$ and $\beta_{m_{\theta\varphi}}=0.13$ for $m_{\theta\varphi}=(5,3,2)$ ).
\end{lemma}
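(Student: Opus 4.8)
The plan is to adapt the incompressibility argument of Lemma~\ref{lem:stable_E4} to the $m$-basis setting, the only structural change being that the exact-agreement target $g(f_A(x),f_B(y),\lambda)=f(x,y)$ is replaced by the membership target $f(x,y)\in g(f_A(x),f_B(y),\lambda)$, since the rounding restriction $g_{m_{\theta\varphi}}$ now outputs a \emph{set} of basis labels (a proper subset of $[m]$) rather than a single bit. The statement is ultimately a counting bound, so I would fix the size parameter $k$ and take a union bound over all triples $(f_A,f_B,\lambda)$. There are $2^{k(2^{n+1}+1)}$ of them, because $f_A,f_B\colon\{0,1\}^n\to\{0,1\}^k$ each contribute $2^{k2^n}$ choices and $\lambda\in\{0,1\}^k$ contributes $2^k$; then for each fixed triple I would estimate, over a uniformly random $f$, the probability that $f(x,y)\in g(f_A(x),f_B(y),\lambda)$ holds on at least a $(1-\beta_{m_{\theta\varphi}})$ fraction of the $2^{2n}$ pairs.

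The core estimate is an entropy/Chernoff tail bound, exactly generalizing the factor $2^{2^{2n}h(1/4)}2^{-2^{2n}}$ appearing in the sketch of Lemma~\ref{lem:stable_E4} and in \eqref{eq bounding probability of g}. For a fixed triple the sets $g_{xy}:=g(f_A(x),f_B(y),\lambda)$ are determined, and since $f$ is uniform the events $\{f(x,y)\in g_{xy}\}$ are independent with success probability $|g_{xy}|/m$. I would bound this per-pair match probability by a constant $p^{\ast}<1-\beta_{m_{\theta\varphi}}$, couple to a binomial, and obtain $\mathbb{P}_f[\#\text{matches}\ge (1-\beta_{m_{\theta\varphi}})2^{2n}]\le 2^{-2^{2n}D(1-\beta_{m_{\theta\varphi}}\,\|\,p^{\ast})}$, where $D(\cdot\,\|\,\cdot)$ is the binary relative entropy (this specializes to $1-h(1/4)$ when $p^{\ast}=1/2$, recovering the BB84 case).

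Controlling $p^{\ast}$ is where the $m$-basis geometry enters, and it is the crux of the argument. By the construction in the proof of Lemma~\ref{lemma classical rounding size}, $g$ of a triple equals $\{j\colon\ket{\chi}\in\mathcal{S}_j^{\varepsilon}\}$ for the net point $\ket{\chi}$ attached to those indices, so $|g_{xy}|$ counts the inputs $j$ for which $\ket{\chi}$ lies in the `good' region $\mathcal{S}_j^{\varepsilon}$ of Definition~\ref{def S_i^e}. I would bound this by combining the membership condition $p_{\chi}^{j,\eta}\ge \Tilde{w}^{\xi}_{m_{\theta\varphi}}(\eta)+\Delta$ with the SDP upper bound \eqref{eq ineq SDP m basis}, $\frac{1}{m}\sum_{j\in[m]}p_{\chi}^{j,\eta}\le w_{m_{\theta\varphi}}(\eta)$, which forces $|g_{xy}|\,(\Tilde{w}^{\xi}_{m_{\theta\varphi}}(\eta)+\Delta)\le m\,w_{m_{\theta\varphi}}(\eta)$ and hence $p^{\ast}=|g_{xy}|/m\le w_{m_{\theta\varphi}}(\eta)/(\Tilde{w}^{\xi}_{m_{\theta\varphi}}(\eta)+\Delta)$. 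In the worst case $g$ still always omits at least one label, giving $p^{\ast}\le (m-1)/m$; I would then read off the numerically computed values of $w_{m_{\theta\varphi}}$ and $\Tilde{w}^{\xi}_{m_{\theta\varphi}}$ from Fig.~\ref{figure upper bound w Tilde w for m=3 and m=5} to check that $p^{\ast}$ is pushed strictly below $1-\beta_{m_{\theta\varphi}}$ for the claimed $\beta_{m_{\theta\varphi}}$.

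Finally I would assemble the two pieces: the total failure probability is at most $2^{k(2^{n+1}+1)}2^{-2^{2n}D(1-\beta_{m_{\theta\varphi}}\,\|\,p^{\ast})}$, and substituting $k=\log(\lceil\tfrac{4}{2^{1/3}(\eta\Delta+4)^{1/3}-2}\rceil)2^{2q+2}$ with $q\le\frac12 n-5$, so that $2^{2q+2}\le 2^{n-8}$, makes the net-counting exponent scale like $(C/128)\,2^{2n}$ for the constant $C=\log\lceil\cdots\rceil$. Requiring this to be dominated by $2^{2n}D(1-\beta_{m_{\theta\varphi}}\,\|\,p^{\ast})$, with enough slack to absorb the additive $2^{n}$ coming from the target bound $2^{-2^{n}}$, gives the result once $n\ge 10$. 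The main obstacle is precisely the calibration of $\beta_{m_{\theta\varphi}}$: it must be taken as large as possible subject to the strict inequality $D(1-\beta_{m_{\theta\varphi}}\,\|\,p^{\ast})>C/128$, and because the margin tightens as $m$ grows (the winning-probability bound $w_{m_{\theta\varphi}}$ drops but so does the admissible gap $1-\beta_{m_{\theta\varphi}}-p^{\ast}$), verifying this inequality from the SDP data for each fixed $m_{\theta\varphi}$ is the delicate, numerically-driven step, and it is what forces $\beta_{m_{\theta\varphi}}$ to come out smaller for $(5,3,2)$ than for $(3,2,2)$.
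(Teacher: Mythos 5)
Your skeleton is the same as the paper's proof: a union bound over the $2^{(2^{n+1}+1)k}$ triples $(f_A,f_B,\lambda)$, followed by a tail estimate for a uniformly random $f$ against a fixed triple, using only the structural fact that every output of $g$ is a proper subset of $[m]$. The difference is the tail estimate itself: the paper counts Hamming balls around the at most $(m-1)^{2^{2n}}$ functions consistent with $g$ everywhere, which costs $(1+\beta)\log(m-1)+h(\beta)-\log m$ per pair, whereas your binomial/KL bound costs $-D(1-\beta\,\|\,p^{\ast})$; the two exponents differ by $2\beta\log(m-1)$, so for $m\geq 3$ yours is strictly sharper. This is not cosmetic. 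For $m_{\theta\varphi}=(3,2,2)$, taking $p^{\ast}=2/3$ gives $D(0.85\,\|\,2/3)\approx 0.125$, which does exceed the union-bound cost of roughly $C/128\approx 0.089$--$0.101$ per pair (here $C=\log\lceil 4/(2^{1/3}(\eta\Delta+4)^{1/3}-2)\rceil\approx 11.4$--$12.9$ over $\eta\in(\eta_{m_{\theta\varphi}},1]$, and $k(2^{n+1}+1)\leq C\,2^{2n}/128$ up to lower-order terms), so your route genuinely certifies $\beta_{(3,2,2)}=0.15$; the paper's looser count gives the positive per-pair exponent $1.15+h(0.15)-\log 3\approx +0.17$, i.e.\ a divergent bound.

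The genuine gap is the step you defer to numerics, and it is fatal for $m_{\theta\varphi}=(5,3,2)$. With the only bound available, $\abs{g_{xy}}\leq m-1=4$, i.e.\ $p^{\ast}=4/5$, one gets $D(0.87\,\|\,0.8)\approx 0.0245\ll 0.089\leq C/128$, so the union bound diverges; optimizing over $\beta$, your method certifies only $\beta\approx 0.06$, not the claimed $0.13$. Moreover, the SDP refinement you present as the crux cannot close this: $\abs{g_{xy}}\,(\Tilde{w}^{\xi}_{m_{\theta\varphi}}(\eta)+\Delta)\leq m\,w_{m_{\theta\varphi}}(\eta)$ together with $w_{m_{\theta\varphi}}\leq\Tilde{w}^{\xi}_{m_{\theta\varphi}}$ yields $\abs{g_{xy}}\leq m/(1+\Delta/\Tilde{w}^{\xi}_{m_{\theta\varphi}})$, and with $\Delta=0.009$ this lies in $[4.78,4.96]$ for every conceivable value $\Tilde{w}^{\xi}_{m_{\theta\varphi}}\in[0.2,1]$; after rounding to an integer it reproduces $\abs{g_{xy}}\leq 4$ and nothing more. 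To push $D$ above $C/128$ you would need $\abs{g_{xy}}\leq 3$ (i.e.\ $p^{\ast}\leq 0.7$), and the all-inputs average $\frac{1}{m}\sum_{i\in[m]}p_{\phi}^{i,\eta}\leq w_{m_{\theta\varphi}}(\eta)$ is structurally too weak for that -- one would need SDP bounds on the best average winning probability over four-element subsets of the inputs, which neither you nor the paper computes. For calibration: the paper's own proof fails at the same place -- its final displayed exponent, corrected to $(1+\beta)\log(m-1)+h(\beta)-\log m$, is positive for both claimed parameter pairs -- so the stated constants $\beta_{(3,2,2)}=0.15$ and $\beta_{(5,3,2)}=0.13$ should not be treated as ground truth that your numerics must recover; as things stand, only the $(3,2,2)$ value is actually provable, and only by your sharper tail bound.
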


\begin{proof}
 For simplicity, denote $\beta_{m_{\theta\varphi}}$ by $\beta$. We want to estimate the probability that for a randomly chosen $f$, we can find $f_A$ and $f_B$ such that the corresponding function $g$ is such that $\Pr_{x,y}[f(x,y)\in g_{m_{\theta \varphi}(f_A(x),f_B(y),\lambda)}]\geq(1-\beta)$.
\begin{equation}\label{eq bounding probability of g}
\begin{split}
  \Pr&[f:\exists f_A,f_B,\lambda \textrm{ s.t. } \Pr[x,y: f(x,y)\in g_{m_{\theta \varphi}}(f_A(x),f_B(y),\lambda)]\geq(1-\beta)]\\&=\frac{\abs{\{f:\exists f_A,f_B,\lambda \textrm{s.t. }\Pr[x,y: f(x,y)\in g_{m_{\theta \varphi}}(f_A(x),f_B(y),\lambda)]\geq(1-\beta) \}}}{\abs{\{f:\{0,1\}^{2n}\rightarrow[m]\}}}\\& \leq 
  \frac{\abs{\{f:\exists f_A,f_B,\lambda \textrm{ s.t. } \forall x,y,  f(x,y)\in g_{m_{\theta\varphi}}(f_A(x),f_B(y),\lambda) \}}}{m^{2^{2n}}}\sum_{i=0}^{\beta2^{2n}}\binom{2^{2n}}{i}(m-1)^i \\&\leq\frac{1}{m^{2^{2n}}}2^{(2^{n+1}+1)k}(m-1)^{2^{2n}}(m-1)^{\beta2^{2n}}2^{h(\beta)2^{2n}}\\&
  =2^{(h(\beta)-\log m+(1+\beta)\log m-1)2^{2n}+(2^{n+1}+1)k}.
\end{split}
\end{equation}
Where in the first equality we used that $f$ is chosen uniformly at random, in the second step we estimate the numerator by considering a ball in Hamming distance around every function $g$ that cab be expressed suitable by $f_A$, $f_B$, $\lambda$,  and in the third step we bounded $(m-1)^i$ in the sum by $(m-1)^{\beta2^{2n}}$ and we used the inequality $\sum_{l=0}^{\lambda n}\binom{n}{n}\leq 2^{nh(\lambda)}$ for $n\in\mathbb{N}$ and $\lambda\in(0,1/2)$ \cite{book_binary_entropy_inequaity}. For $m_{\theta\varphi}=(3,2,2)$ and $m_{\theta\varphi}=(5,3,2)$, $\Delta=0.009$, $k=\log(\lceil\frac{4}{2^{\frac{1}{3}} (\eta \Delta +4)^{\frac{1}{3}}-2}\rceil)2^{2q+2}$ for $\eta\in(\eta_{m_{\theta\varphi}},1]$ and $q\leq n/2-5$, \eqref{eq bounding probability of g} is strictly upper bounded by  $2^{-2^n}$.
\end{proof}

Lemma \ref{lemma holds on less than beta} has the same interpretation as Lemma \ref{lemma holds on less than beta}. This leads to our second main theorem, which provides a lower bound of the probability that attackers pre-sharing entanglement are caught in a round of the \QPVBBetafm~ protocol.

\begin{theorem}\label{theorem q>=n/2-5 for m-BB84 eta}
Consider the most general attack a round of the \QPVBBetafm~protocol for a transmission rate $\eta\in(\eta_{m_{\theta\varphi}},1]$ and prover's error rate $p_{err}$. Let $\Delta=0.009$. If the attackers respond with probability $\eta$ and  control at most $q$ qubits at the beginning of the protocol, and $q$ is such that
\begin{equation}
    q\leq \frac{n}{2}-5,
\end{equation}
then,
\begin{equation}
    \pr{\textsc{V}_\textsc{AB}=\textsc{c}}\leq\eta(1-\beta_{m_{\theta\varphi}}[1-(\Tilde{w}^{\epsilon}(\eta)+\Delta)]).
\end{equation}
\end{theorem}

\begin{proof}
Let $0\leq\varepsilon\leq\varepsilon_0=1-(\Tilde{w}^{\xi}_{m_{\theta\varphi}}(\eta)+\Delta)$. By Lemma~\ref{lemma classical rounding size} there exists $g_{m_{\theta\varphi}}$ $(\varepsilon,q)$-classical rounding of size $k=\log(\lceil\frac{4}{2^{\frac{1}{3}} (\eta \Delta +4)^{\frac{1}{3}}-2}\rceil)2^{2q+2}$. Fix $f:\{0,1\}^{2n}\to[m]$ such that $f(x,y)\in g(f_A(x),f_B(y),\lambda)$ holds on less than $1-\beta_{m_{\theta\varphi}}$ of all pairs $(x,y)$, for all $f_A,f_B$ and $\lambda$ as defined previously. By Lemma~\ref{lemma holds on less than beta}, a uniformly random $f$ will have this property with probability at least $1-2^{-2^n}$.\\
On the other hand, assume that there is a $(\varepsilon,(1-\beta_{m_{\theta\varphi}})\cdot 2^{2n})$-perfect $q$-qubit strategy for \QPVBBetafm. Then, the corresponding $f_A,f_B,\lambda$ satisfy $f(x,y)\in g_{m_{\theta\varphi}}(f_A(x),f_B(y),\lambda)$ on at least $(1-\beta_{m_{\theta\varphi}})\cdot2^{2n}$ pairs $(x,y)$. This is a contradiction of the choice of $f$. Therefore, with probability at least $1-2^{-2^n}$ the function $f$ is such that there are no $(\varepsilon,(1-\beta_{m_{\theta\varphi}})\cdot2^{2n})$-perfect $q$-qubit strategies for \QPVBBetafm. 
Hence, for every strategy that the attackers can implement, on at least $\beta_{m_{\theta\varphi}}$ of the possible strings $(x,y)$, they will not be correct with probability at least $\varepsilon$. 
\end{proof}

We see then that if $p_{err}<\beta_{m_{\theta\varphi}}[1-(\Tilde{w}^{\epsilon}_{m_{\theta\varphi}}(\eta)+\Delta)]$, attackers who pre-share entanglement cannot answer `\textsc{Correct}' with sufficiently high probability even in the presence of photon loss (more loss-tolerant than the \QPVBBetaf~protocol). 
The security of the sequential repetition of \QPVBBetafm~is obtained as in the proof of Theorem~\ref{theorem seq repetition entangled} adjusting the parameters accordingly for the particular choice of $m_{\theta\varphi}$. 

\section{Application to QKD}
In \cite{TomamichelMonogamyGame2013} security of one-sided device-independent quantum key distribution (DIQKD) BB84 \cite{BB84} was proven using a monogamy-of-entanglement game.
In order to reduce an attack to the protocol to a MoE game, we consider an entanglement-based variant of the original BB84 protocol, which implies security of the latter \cite{BB84withoutBell}. The BB84 entangled version, tolerating an error~$p_{err}$ in Bob's measurement results, is described as follows \cite{TomamichelMonogamyGame2013}:
\begin{enumerate}
    \item Alice prepares $n$ EPR pairs, keeps a half and sends the other half to Bob from each EPR pair. Bob confirms he received them. 
    \item Alice picks a random basis, either computational or Hadamard, to measure each qubit, sends them to Bob and both measure, obtaining $X$ and $Y$, respectively. 
    \item Alice sends a random subset $X_T\subset X$ of size $t$ and sends it to Bob. If the corresponding $Y_T$ has a relative Hamming distance greater than $p_{err}$, they abort.
    \item In order to perform error correction, Alice sends a syndrome 
    $S(X_{\Bar{T}})$ of length $s$ (the leakage) and a random hash function $F:\{0,1\}^{n-t}\rightarrow \{0,1\}^l$, where $l$ is the length of the final key, from a universal family of hash functions to Bob.  
    \item Finally, for privacy amplification, Alice and Bob compute $K=F(X_{T^c})$ and $\hat{K}=F(\hat{X}_{T^c})$, where $\hat{X}_{T^c}$ is the corrected version of $Y_{T^c}$. 
\end{enumerate}

The security proof relies on considering an eavesdropper Eve and an untrusted Bob's measurement device which could behave maliciously, associating this situation with a MoE game where Eve's goal is to guess the value of Alice's (playing now the role of the referee) raw key~$X$. They prove that it can tolerate a noise up to 1.5\% asymptotically.  Here, we apply the techniques used above for QPV to pass from a MoE game to SDP to get numerical results. We prove security for $n=1$, still remaining if the results can be generalized to arbitrary~$n$, and furthermore we prove security considering photon loss.  The latter consideration takes into account the loss of photons after Bob confirmed their reception in step 1., see e.g.~\cite{Niemietz_2021}. \\
In a similar way as deriving the restrictions of \eqref{eq upperbound p_ans all constraints}, we maximize the probability of answering for Eve controlling maliciously Bob's measurement device. Alice measures $\{V^x_0,V^x_1\}_{x\in\{0,1\}}$, where $V_0^{x}=\ketbra{0_{x}}{0_{x}}$ and $V_1^{x}=\ketbra{1_{x}}{1_{x}}$, i.e.~measures in the computational or Hadamard basis and the two cooperative adversaries (Eve and Bob's device), as argued above, perform projective measurements $\{E^x_e\}$,  $\{B^x_b\}$, respectively, where $e,b\in\{0,1,\perp\}$. Therefore, the above techniques applied to QKD reduce to maximize  $p_{ans}=\frac{1}{2}\sum_{x\in \{0,1\},e\in \{0,1\}}\expectedbraket{E_e^xB_e^x}$ subject to \textit{(i)} the strategy for the extended MoE game in $\mathcal{Q}_{\ell}$, \textit{(ii)} the restrictions of the QKD protocol and \textit{(iii)} Bob's device subject to a measurement error $p_{err}$. \\
Constraint \textit{(i)} remains unaltered from the above discussion, and \textit{(ii)} and \textit{(iii)} are given in the following way:\\
\textbf{Constraint \textit{(ii)}} Since Eve's task is to guess Alice's raw key, her measurements cannot be distinct, and therefore, for all $e\in\{0,1\}$, and for all $b$,
\begin{equation}
\label{eq Eve equal meas Alice constriant}
    \expectedbraket{V_eE_{1-e}B_b}=0.
\end{equation}
\textbf{Constraint \textit{(iii)}} Bob's device measurement error is mathematically expressed as
\begin{equation}
\label{eq condition perr QKD by def 1}
    \frac{\expectedbraket{V_{0}^{x}E_{0}^{x}B_{1}^{x}}}{\expectedbraket{V_{0}^{x}(E_{0}^{x}+E_{1}^{x})(B_{0}^{x}+B_{1}^{x})}}\leq p_{err}, \hspace{1cm} \frac{\expectedbraket{V_{1}^{x}E_{1}^{x}B_{0}^{x}}}{\expectedbraket{V_{1}^{x}(E_{0}^{x}+E_{1}^{x})(B_{0}^{x}+B_{1}^{x})}}\leq p_{err}.
\end{equation} 
\begin{prop} \label{prop ineq perr for QKD } Let $e,b\{0,1\}$, then the terms $\expectedbraket{E_e^{x}B_{b}^{x'}}$ can be bounded by $p_{err}$ by the below inequality:
\begin{equation}
    \label{eq perr constraint sum ab QKD}
    \sum_{eb}(2-\norm{V_e^x+V_b^{x'}})\expectedbraket{E_e^{x}B_{b}^{x'}}\leq p_{err}\sum_{e,b}\expectedbraket{E_e^{x'}B_{b}^{x'}}. 
\end{equation}
\end{prop}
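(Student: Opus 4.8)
The plan is to follow exactly the strategy of Proposition~\ref{prop ineq perr for QPV}, with the single structural change that the answer-agreement restriction used there is replaced here by the constraint \eqref{eq Eve equal meas Alice constriant} forcing Eve to reproduce Alice's outcome. Concretely, I would start from the two measurement-error hypotheses \eqref{eq condition perr QKD by def 1}, written in the basis $x'$, clear their (positive) denominators, and add the two resulting scalar inequalities. The right-hand side then becomes $p_{err}$ times $\expectedbraket{V_0^{x'}(E_0^{x'}+E_1^{x'})(B_0^{x'}+B_1^{x'})}+\expectedbraket{V_1^{x'}(E_0^{x'}+E_1^{x'})(B_0^{x'}+B_1^{x'})}$, which collapses to $p_{err}\sum_{e,b}\expectedbraket{E_e^{x'}B_b^{x'}}$ after using $V_0^{x'}+V_1^{x'}=\mathbb{I}$; this is precisely the target right-hand side.

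Next I would generate the left-hand side through the operator-norm relaxation. For each pair $e,b$ the rank-one verifier projectors satisfy $V_e^x+V_b^{x'}\preceq\norm{V_e^x+V_b^{x'}}\mathbb{I}$, i.e.\ $(2-\norm{V_e^x+V_b^{x'}})\mathbb{I}\preceq 2\,\mathbb{I}-V_e^x-V_b^{x'}$; tensoring with the positive operator $E_e^x\otimes B_b^{x'}$ and evaluating in $\ket{\psi}$ yields the pointwise bound $(2-\norm{V_e^x+V_b^{x'}})\expectedbraket{E_e^x B_b^{x'}}\leq 2\expectedbraket{E_e^x B_b^{x'}}-\expectedbraket{V_e^x E_e^x B_b^{x'}}-\expectedbraket{V_b^{x'}E_e^x B_b^{x'}}$. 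Summing over $e,b\in\{0,1\}$ and writing $E:=E_0^x+E_1^x$, $B':=B_0^{x'}+B_1^{x'}$, the middle term simplifies via \eqref{eq Eve equal meas Alice constriant} (which gives $\expectedbraket{V_e^x E_e^x B'}=\expectedbraket{V_e^x E B'}$) together with $\sum_e V_e^x=\mathbb{I}$, so that $2\expectedbraket{E B'}-\sum_e\expectedbraket{V_e^x E_e^x B'}=\expectedbraket{E B'}=\sum_b\expectedbraket{E B_b^{x'}}$. Substituting $\mathbb{I}-V_b^{x'}=V_{1-b}^{x'}$ then leaves $\sum_{eb}(2-\norm{V_e^x+V_b^{x'}})\expectedbraket{E_e^x B_b^{x'}}\leq\expectedbraket{V_1^{x'}E B_0^{x'}}+\expectedbraket{V_0^{x'}E B_1^{x'}}$, i.e.\ exactly the two Alice--Bob disagreement weights. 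For $x=x'$ these coincide (after re-inserting \eqref{eq Eve equal meas Alice constriant}) with the numerators of \eqref{eq condition perr QKD by def 1}, and are therefore bounded by the $p_{err}$ right-hand side assembled in the first step.

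The step I expect to be the main obstacle is controlling the cross-basis case $x\neq x'$, where $E=E_0^x+E_1^x$ lives in a different basis than Alice and Bob, so the weights $\expectedbraket{V_1^{x'}E B_0^{x'}}$ and $\expectedbraket{V_0^{x'}E B_1^{x'}}$ cannot be read off directly from \eqref{eq condition perr QKD by def 1}. The plan there is to discard Eve's register using the sub-normalisation $E_0^x+E_1^x\preceq\mathbb{I}$ (and likewise $B_0^{x'}+B_1^{x'}\preceq\mathbb{I}$) via positivity of $E_\perp^x$, and then re-expand in the $x'$ basis so that the measurement-error assumption applies. Handling the leftover $\perp$-contributions together with the three-register bookkeeping cleanly --- exactly as in the appendix proof of Proposition~\ref{prop ineq perr for QPV} --- is the delicate part; the operator-norm relaxation and the use of \eqref{eq Eve equal meas Alice constriant} are precisely what force the final right-hand side to collapse to the stated $p_{err}\sum_{e,b}\expectedbraket{E_e^{x'}B_b^{x'}}$.
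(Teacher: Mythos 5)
Your proposal is essentially the paper's own proof: the paper likewise clears the denominators of \eqref{eq condition perr QKD by def 1} and collapses the right-hand side with $V_0^{x}+V_1^{x}=\mathbb{I}$ (its \eqref{eq inequality p_err QKD}), splits $2\expectedbraket{E_e^{x}B_{b}^{x'}}$ as $\expectedbraket{(V_e^{x}+V_b^{x'})E_e^{x}B_{b}^{x'}}+\expectedbraket{V_{1-e}^{x}E_e^{x}B_{b}^{x'}}+\expectedbraket{V_{1-b}^{x'}E_e^{x}B_{b}^{x'}}$ (which is your operator-norm relaxation in disguise), kills the middle term via \eqref{eq Eve equal meas Alice constriant}, and then discards Eve's register using $E_0^{x}+E_1^{x}\preceq\mathbb{I}$ before applying the error hypothesis at basis $x'$. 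The ``delicate part'' you flag --- the leftover $E_\perp^{x'}$ contributions when re-expanding $\expectedbraket{V_1^{x'}B_0^{x'}}+\expectedbraket{V_0^{x'}B_1^{x'}}$ in the $x'$ basis --- is precisely the step the paper passes over silently: its inference of \eqref{eq inequality p_err 1 QKD} from \eqref{eq inequality p_err QKD} drops terms of the form $\expectedbraket{V_e^{x}E_\perp^{x}B_{1-e}^{x}}$, which \eqref{eq Eve equal meas Alice constriant} as written (covering only $E_{1-e}$, not $E_\perp$) does not annihilate, so your concern is legitimate but your route is no less complete than the paper's.
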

The proof, see Appendix~\ref{Appendix proof ineq perr QKD}, uses the same techniques as for the proof of Proposition~\ref{prop ineq perr for QPV}. 
Using the above constraints, we can find a subset of the security region ($SSR$), defined analogously for the current case, for the QKD protocol with the following SDP:
\begin{equation}
\label{eq upperbound p_ans all constraints QKD}
\boxed{
\begin{split}
    &\max \frac{1}{2}\sum_{x}(\expectedbraket{E_0^{x}B_0^{x}}+\expectedbraket{E_1^{x}B_1^{x}});\\
    &\textrm{subject to: the linear constraints for } \mathbf{S}^{\eta}_{MoE} \in \mathcal{Q}_n,\\& \textrm{ \hspace{15mm} and Equation~\eqref{eq perr constraint sum ab QKD}}. 
\end{split}}
\end{equation}
A $SSR$ from the solutions of \eqref{eq upperbound p_ans all constraints QKD} for different $p_{err}$ for the first level of the NPA hierarchy is plotted in Fig.~\ref{Figure plot pans for QKD} 
\begin{figure}[h]
\centering
\includegraphics[width=100mm]{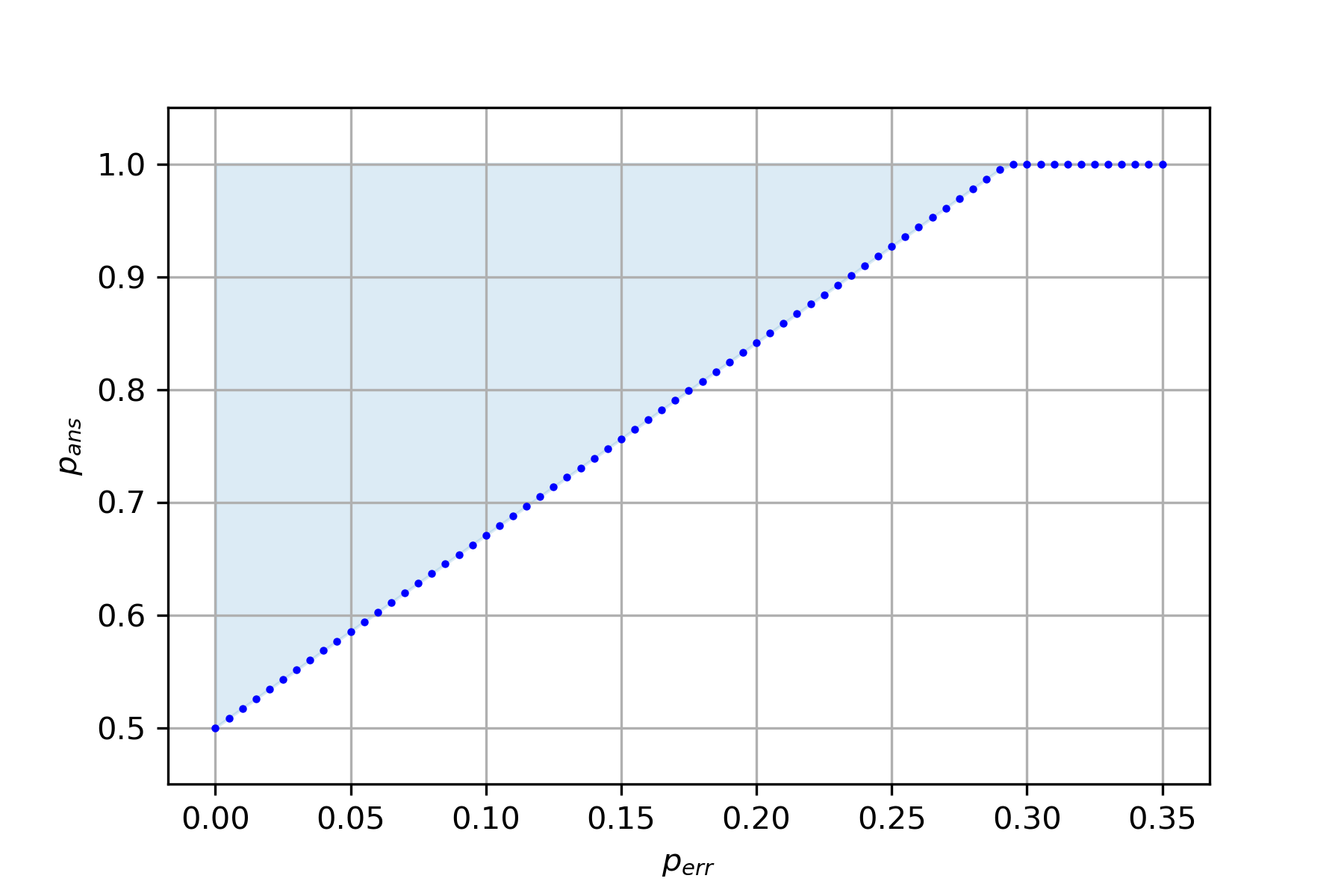}
\caption{Solution of the first level of the NPA hierarchy of the SPD \eqref{eq upperbound p_ans all constraints QKD} (blue dots) and the light blue area corresponds to a $SSR$.}
\label{Figure plot pans for QKD}
\end{figure}

From the solutions of \eqref{eq upperbound p_ans all constraints QKD}, represented in Fig.~\ref{Figure plot pans for QKD}, we find that for $p_{err}\approx 0.2929$, an eavesdropper Eve controlling Bob's device can always answer without being caught, implying that the probability of winning such a MoE game is upper bounded by $0.7071$. Therefore, the protocol, for $n=1$, can tolerate a noise at least up to $0.2929$. 

\newpage
\section{Discussion}

We have studied the $\mathrm{QPV_{BB84}}$ protocol, providing a tight characterization of its security with loss of quantum information and a prover subject to an experimental error under non-entangled attackers that can do LOQC.
Since this protocol is secure only for a transmission rate of photons $\eta\geq\frac{1}{2}$, which is still hard for current technology to achieve, we introduced an extension of the protocol more resistant to transmission loss of the quantum information.
The new protocol has the advantage, like the $\mathrm{QPV_{BB84}}$ protocol, that only a single qubit is required to be transmitted in per round of the protocol and the honest prover only needs to broadcast classical information, which does not encounter the problems of loss and slow transmission, making it a good candidate for future implementation. 

We have also extended our analysis to the lossy version of the \QPVBBf~ protocol, thereby exhibiting a protocol which is secure against attackers sharing an amount of entanglement that scales in the amount of \emph{classical} information, while the honest parties only need to manipulate a single qubit per round.
By showing (partial) loss-tolerance, the resulting (especially multi-basis) protocol will be much easier to implement in practice.
The security proof of that protocol still holds identically in case the transmission of the quantum state is slow, but only the classical bits are transmitted fast, which can be experimentally convenient.

We applied the proof techniques used to show security to improve the upper bounds known so far for certain types of monogamy-of-entanglement games and, as a particular application, we improve the security analysis of one-sided device-independent quantum key distribution for the particular case of $n=1$. However, we leave as an open question whether this can be generalized to show security for arbitrary $n$.

\paragraph{Open questions.}
A technical question in the QPV setting which we leave open is to obtain a tight finite statistical analysis for these protocols over multiple rounds.
It seems intuitively clear that, since honest parties can achieve a better combination of error and response rate than any attacker for any round, the serial repetition of the proposed protocols should be secure---and we indeed expect this to be the case.
However, attackers have a choice in what attack strategy to apply every round, and might even gain a slight amount by being adaptive, e.g.\ play a low-loss low-error round if the attackers had a lucky guess in a previous round.
Because of this, finding the technical tools to bounds the best possible attack over many rounds is non-trivial.

{The work of Johnston, Mittal, Russo, and Watrous \cite{Extended_non-local_games_andMoE_games} shows techniques how to handle extended non-local games, such as the monogamy-of-entanglement games, using SDPs.
Because our work directly extends the earlier independent partial results of Buhrman, Schaffner, Speelman, and Zbinden~\cite[Chapter 5]{FlorianThesis}, we use a slightly-different SDP formulation.
It would be very interesting to attempt to extend the results from \cite{Extended_non-local_games_andMoE_games} to extended non-local games where the players are allowed to return `loss' with some probability---and investigate whether their SDP formulation gives equivalent results to ours.}

Additionally, we note that (also for the proposed lossy protocols) an exponential gap remains between the entanglement required of the best attack known and the lower bounds we are able to prove. That is, we show security against attackers sharing a linear amount of entanglement, but we only know of an explicit attack whenever the attackers share an \emph{exponential} number of qubits.
More efficient attacks are known for specific functions, such as $f$ computable in logarithmic space~\cite{Buhrman_2013} (for a routing version of the protocol, however the technique can easily be adapted), cf.~\cite{cree2022code}, but even these take a polynomial amount of entanglement.
Closing this gap remains a large and interesting open problem.

A related open question from the other side therefore also remains: It would be helpful to exhibit these linear lower bounds for specific efficiently-computable functions, instead of just showing that the bound holds for \emph{most} functions.

\textbf{Acknowledgments.} We thank the Dutch Ministry of Economic Affairs and Climate Policy (EZK), supporting this work as part of the Quantum Delta NL programme, and Jaume de Dios Pont for fruitful discussions about the probability simplex. 

\bibliographystyle{alphaurl}
\bibliography{references}

\newpage
\begin{appendices}
\section{Non-local games and the NPA hierarchy} \label{appendix non-logal games and NPA hierarchy}
Let $\mathcal{G}$ be a non-local game where two non-communicating distant parties, Alice and Bob, have respective questions $x\in \mathcal{X}$ and $y\in \mathcal{Y}$, given according to a probability distribution $q(x,y)$, they input their questions in a respective black box, and they get as outputs measurement outcomes  $a\in \mathcal{A}$ and $b\in\mathcal{B}$, for $\mathcal{X},\mathcal{Y},\mathcal{A},\mathcal{B}$, finite alphabets. The winning condition is determined by the predicate $f(a,b,x,y)$, taking value 1 if the game is won and 0, otherwise. The behavior of the box is completely characterized by the probability of getting outcomes $a$ and $b$ having measured $x$ and $y$, $p(a,b|x,y)$, and the set of all probabilities, $\{p(a,b|x,y)\}$, encoded in a stochastic matrix $P\in L(\mathbb{R}^{\mathcal{X}}\otimes \mathbb{R}^{\mathcal{Y}},\mathbb{R}^{\mathcal{A}}\otimes \mathbb{R}^{\mathcal{B}})$, where $L$ is the set of linear operators, such that $P(a,b|x,y)=p(a,b|x,y)$, is called behavior. The average winning probability is given by
\begin{equation}
    \omega(\mathcal{G})=\sum_{x,y,a,b}q(x,y)f(a,b,x,y)p(a,b|x,y):=\expectedbraket{K,P},
\end{equation}
where $K$ is the matrix defined as $K(a,b|x,y)=q(x,y)f(a,b,x,y)$.

\begin{definition}\label{def of behaviour} A behavior $P$ is quantum if there exists a pure state $\ket{\psi}$ in a Hilbert space $\mathcal{H}$, a set of measurement operators $\{A_a^x\}_{a\in\mathcal{A}}$ for Alice, and a set of measurement operators $\{B_b^y\}_{b\in\mathcal{B}}$ for Bob such that for all $a\neq a'\in\mathcal{A}$ and $b\neq b'\in\mathcal{B}$,
\begin{equation}
\label{eq def behaviour}
    p(a,b|x,y)=\bra{\psi}A_a^xB_b^y\ket{\psi},
\end{equation}
with the measurement operators satisfying
\begin{enumerate}
    \item $A_a^{x\dagger}=A_a^x$ and $B_b^{y\dagger}=B_b^y$,
    \item $A_a^xA_{a'}^x$=0 and $B_b^yB_{b'}^y=0$,
    \item $\sum_{a\in\mathcal{A}}A_a^x=\mathbb{I}$ and $\sum_{b\in\mathcal{B}}B_b^y=\mathbb{I}$,
    \item $[A_a^x,B_b^y]=0$. 
\end{enumerate}
The tuple $\mathbf{S}=\{\ket{\psi},A_a^x,B_b^y\}_{x\in\mathcal{X},y\in\mathcal{Y},a\in\mathcal{A},b\in\mathcal{B}}$ is called strategy, and the set of all quantum behaviors is denoted by $\mathcal{Q}$. Abusing notation, we will denote $\mathbf{S}\in\mathcal{Q}$.
\end{definition}
Similarly, a behavior $P$ belongs to the set of quantum behaviors $\mathcal{Q}'$ if the Hilbert space can be written as $\mathcal{H}=\mathcal{H}_A\otimes\mathcal{H}_B$ and the measurement operators for Alice and Bob act on $\mathcal{H}_A$ and $\mathcal{H}_B$, respectively, and fulfil the same constraints as in Definition \ref{def of behaviour} (notice that commutativity is immediately implied because of the tensor product structure). Notice that by construction, $\mathcal{Q}'\subseteq\mathcal{Q}$ and for finite dimensional Hilbert space, they turn out to be identical \cite{NPA2008}.

Therefore, the behaviors \eqref{eq def behaviour} can be obtained via tensor product structure, which is the case that we will consider from now on.
The maximum winning probability using a quantum behavior $P$ is given by
\begin{equation}
\label{eq supremum w non-local game}
    \omega^*(\mathcal{G})=\sup_{P\in\mathcal{Q}}\expectedbraket{K,P}.
\end{equation}

In \cite{NPA2008}, Navascués, Pironio and Acín (NPA) introduced an infinite hierarchy of conditions satisfied by any set of quantum correlations $\mathcal{Q}$ that, each of them, can be tested using semidefinite programming, where the set $\mathcal{Q}$ is fully characterized by it. 
First, consider the Gram matrix $G$ of the vectors
\begin{equation}
    \mathcal{T}=\{\ket{\psi}\}\cup \{A_a^x\ket{\psi}: a\in\mathcal{A},x\in \mathcal{X}\}\cup \{B_b^y\ket{\psi}: b\in\mathcal{B},y\in \mathcal{Y}\},
\end{equation}
then $G$ contains all the values appearing in $P$. The matrix $G$, which we naturally label by the set $\mathcal{T}$, and its entreis fulfill the following constraints: 
\begin{enumerate}
    \item $G$ is positive semidefinite,
    \begin{equation}
        G	\succeq 0.
    \end{equation}
    \item $\ket{\psi}$ is a normalized state, thus $\braket{\psi}{\psi}=1$, i.e.\ $ G_{0,0}=1$.
    \item $A_a^x$ and $B_b^y$ are projector operators, therefore,  $ \forall x \in \mathcal{X}$, $\forall y \in \mathcal{Y}$, $\forall a\neq a'\in \mathcal{A}$, $\forall b\neq b'\in \mathcal{B}$ and  $\forall T\in\mathcal{T}$:

     \begin{enumerate}
     \item  $\expectedbraket{A_a^xA_a^x}=\expectedbraket{A_a^{x}}$ and likewise for $B_b^y$.
    \item Because of completeness of measurements:
    \begin{equation}
    \begin{split}
        &\sum_{a\in\mathcal{A}}\expectedbraket{\psi|A_a^xT}=\expectedbraket{\psi| T}, \hspace{10mm} \sum_{a\in\mathcal{A}}\expectedbraket{TA_a^x|\psi}=\expectedbraket{ T |\psi},\\& 
        \sum_{b\in\mathcal{B}}\expectedbraket{\psi|B_b^yT}=\expectedbraket{\psi| T}, \hspace{10mm} \sum_{b\in\mathcal{B}}\expectedbraket{TB_b^y|\psi}=\expectedbraket{ T| \psi}.
    \end{split}
    \end{equation}
    \item Because they are orthogonal projections: $\expectedbraket{A_{a}^{x}A_{a'}^{x}}=0=\expectedbraket{B_{b}^{y}B_{b'}^{y}}$.
    \item Because they commute: $\expectedbraket{A_a^xB_b^y}=\expectedbraket{B_b^yA_a^x}$.
    \end{enumerate}
\end{enumerate}

Define
\begin{equation}
    \mathcal{Q}_1=\{P\textrm{ } \mid \textrm{ }\exists \textrm{ } G \textrm{ fulfilling 1.-3. and } P(a,b|x,y)=\expectedbraket{A_a^xB_b^y} \} \subset  L(\mathbb{R}^{\mathcal{X}}\otimes \mathbb{R}^{\mathcal{Y}},\mathbb{R}^{\mathcal{A}}\otimes \mathbb{R}^{\mathcal{B}}),
\end{equation}
By construction, $\mathcal{Q}_1\supseteq \mathcal{Q}$ and therefore,
\begin{equation}
\label{eq upperbound 1st level NPA}
    \omega^*(\mathcal{G})=\sup_{P\in\mathcal{Q}}\expectedbraket{K,P}\leq\sup_{P\in\mathcal{Q}_1}\expectedbraket{K,P}.
\end{equation}
Define the Hermitian operator $H$ as the matrix whose entries are $H(xa,yb)=H(yb,xa)=\frac{1}{2}q(x,y)f(a,b,x,y)$ for all $x\in\mathcal{X}$, $y\in\mathcal{Y}$, $a\in\mathcal{A}$, $b\in\mathcal{B}$ and all the other entries are $0$, so that $\expectedbraket{K,P}=\expectedbraket{H,G}$. The semidefinite program over all positive semidefinite matrices $G$ satisfying items $1$ to $3$ is the first level of the NPA hierarchy. The level $l$ of the hierarchy, $\mathcal{Q}_l$, see \cite{NPA2008} for a formal definition, is built considering the Gram matrix of the vectors of the Gram matrix of level $l-1$ and vectors corresponding to $l$ degree products of the projection operators. By construction, $\mathcal{Q}_l\supseteq \mathcal{Q}$. 
\begin{theorem}\cite{NPA2008} The NPA hierarchy converges to the set of quantum behaviors:
\begin{equation}
    \mathcal{Q}=\bigcap_{\ell\in \mathbb{N}}\mathcal{Q}_\ell.
\end{equation}
\end{theorem}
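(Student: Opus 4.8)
The plan is to establish the two inclusions $\mathcal{Q}\subseteq\bigcap_{\ell}\mathcal{Q}_\ell$ and $\bigcap_{\ell}\mathcal{Q}_\ell\subseteq\mathcal{Q}$ separately. The first is immediate from the construction: every level $\mathcal{Q}_\ell$ merely demands the existence of a positive semidefinite Gram matrix whose entries reproduce $P$ and obey constraints (Hermiticity, orthogonal projectivity, completeness of measurements, and the commutation $\expectedbraket{A_a^xB_b^y}=\expectedbraket{B_b^yA_a^x}$) that any genuine quantum behavior satisfies automatically, the certifying Gram matrix being the one built from $\mathcal{T}$. Hence $\mathcal{Q}\subseteq\mathcal{Q}_\ell$ for every $\ell$, and therefore $\mathcal{Q}\subseteq\bigcap_\ell\mathcal{Q}_\ell$.

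The substance is the reverse inclusion. First I would fix $P\in\bigcap_\ell\mathcal{Q}_\ell$ and, for each $\ell$, take a certifying moment matrix $G^{(\ell)}$. Since the alphabets $\mathcal{X},\mathcal{Y},\mathcal{A},\mathcal{B}$ are finite and all operators are projections, every matrix entry is bounded by $1$ in modulus by Cauchy--Schwarz, so the certificates live in a compact set. By a diagonal/consistency argument over the nested sequence of levels I would extract a single infinite moment matrix $G^{(\infty)}$, indexed by all finite words $w$ in the symbols $\{A_a^x,B_b^y\}$, all of whose finite principal submatrices are positive semidefinite and which respects the defining relations globally.

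Next I would run a GNS-type reconstruction. Let $\mathcal{V}$ be the free complex vector space spanned by the words $w$, and define the sesquilinear form $\langle w\,|\,w'\rangle:=G^{(\infty)}(w,w')$. Positivity of the finite submatrices makes this form positive semidefinite; I would quotient by its null space $\mathcal{N}=\{v:\langle v\,|\,v\rangle=0\}$ and complete to obtain a Hilbert space $\mathcal{H}$. I then define $A_a^x$ and $B_b^y$ to act by left multiplication on words. The projectivity and completeness relations encoded in $G^{(\infty)}$ guarantee these maps descend to well-defined, bounded (operator norm $1$), Hermitian, orthogonal projections summing to the identity on $\mathcal{H}$, while the encoded identity $\expectedbraket{A_a^xB_b^y}=\expectedbraket{B_b^yA_a^x}$, holding inside every matrix element, forces $[A_a^x,B_b^y]=0$ as operators. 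Taking $\ket{\psi}$ to be the class of the empty word, one verifies $p(a,b|x,y)=\bra{\psi}A_a^xB_b^y\ket{\psi}$, exhibiting exactly the commuting-operator model of Definition~\ref{def of behaviour} and hence $P\in\mathcal{Q}$.

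The main obstacle I anticipate is this reconstruction step: checking that left multiplication genuinely descends to a bounded operator on $\mathcal{V}/\mathcal{N}$, and that each relation of the free-algebra presentation survives the quotient --- in particular the $A$--$B$ commutation, which is precisely what upgrades a no-signaling behavior to a legitimate quantum strategy. A secondary subtlety is the compactness argument used to glue the finite-level certificates $G^{(\ell)}$ into a single consistent infinite moment matrix; here I would lean on the finiteness of the alphabets together with the uniform norm bound to ensure the limiting matrix exists and stays positive semidefinite on every finite block.
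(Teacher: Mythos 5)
The paper does not prove this theorem itself --- it is imported verbatim from \cite{NPA2008} --- so the only meaningful comparison is with the original argument of Navascu\'es, Pironio and Ac\'in, and your outline is essentially that argument: the trivial inclusion $\mathcal{Q}\subseteq\bigcap_\ell\mathcal{Q}_\ell$ via the Gram matrix of the strategy vectors, followed by entrywise boundedness of certificates, a diagonal/compactness extraction of a single infinite moment matrix, and a GNS reconstruction in which projectivity gives contractivity (hence well-definedness and boundedness of left multiplication on the quotient) and the moment-matrix relations force commutation, completeness, and the correct marginals.

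One point deserves emphasis, and you handle it correctly but should state it explicitly: the GNS construction produces a \emph{commuting-operator} model on a possibly infinite-dimensional Hilbert space, and this is exactly what membership in $\mathcal{Q}$ requires under the paper's Definition~\ref{def of behaviour} (condition 4 is $[A_a^x,B_b^y]=0$, not a tensor-product split). Had $\mathcal{Q}$ been defined through tensor products, your argument --- and the theorem as stated --- would fail in general, since the hierarchy is now known not to converge to the closure of the tensor-product set; the paper's remark that the two models coincide only covers the finite-dimensional case. Two minor details you gloss over are standard but worth writing out: (i) the uniform bound $\abs{G(v,w)}\leq 1$ on entries follows inductively from positive semidefiniteness of $2\times 2$ blocks together with the idempotence relations (giving $G(w,w)\leq G(w',w')\leq\dots\leq G(\emptyset,\emptyset)=1$), and (ii) the diagonal argument needs that a level-$\ell$ certificate restricts to a valid certificate at every lower level, which holds because the hierarchy's constraint sets are nested.
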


\section{Proof of Proposition \ref{prop ineq perr for QPV}}\label{Appendix proof ineq perr QPV}
Combining both expressions in \eqref{eq condition perr by def 1}, using the properties of the projectors and equation \eqref{eq restriction error epsilon}, we obtain the inequality
\begin{equation}
\label{eq inequality p_err}
    \expectedbraket{V_1^xA_0^xB_0^x}+\expectedbraket{V_0^xA_1^xB_1^x}\leq p_{err}(\expectedbraket{A_0^xB_0^x}+\expectedbraket{A_1^xB_1^x}).
\end{equation}
Because of \eqref{eq restriction error epsilon}, from \eqref{eq inequality p_err} we get
\begin{equation}
\label{eq inequality p_err 1}
    \expectedbraket{V_1^xA_0^x}+\expectedbraket{V_0^xA_1^x}\leq p_{err}(\expectedbraket{A_0^xB_0^x}+\expectedbraket{A_1^xB_1^x}),
\end{equation}
\begin{equation}
\label{eq inequality p_err 2}
    \expectedbraket{V_1^xB_0^x}+\expectedbraket{V_0^xB_1^x}\leq p_{err}(\expectedbraket{A_0^xB_0^x}+\expectedbraket{A_1^xB_1^x}),
\end{equation}
\begin{equation}
\label{eq inequality p_err 3}
    \expectedbraket{V_1^xA_0^x}+\expectedbraket{V_0^xB_1^x}\leq p_{err}(\expectedbraket{A_0^xB_0^x}+\expectedbraket{A_1^xB_1^x}), 
\end{equation}
\begin{equation}
\label{eq inequality p_err 4}
    \expectedbraket{V_1^xB_0^x}+\expectedbraket{V_0^xA_1^x}\leq p_{err}(\expectedbraket{A_0^xB_0^x}+\expectedbraket{A_1^xB_1^x}). 
\end{equation}

We will use it to find linear constraints on the entries of the gram matrix $G$ corresponding to $\expectedbraket{A_a^xB_b^{x'}}$. Consider
\begin{equation}
\label{equation split projectors}
\begin{split}
    2\expectedbraket{A_a^{x}B_{b}^{x'}}&=2\expectedbraket{\mathbb{I}\otimes A_a^{x}\otimes B_{b}^{x'}}= \expectedbraket{(V_a^{x}+V_{1-a}^{x})A_a^{x}B_{b}^{x'}}+
    \expectedbraket{(V_b^{x'}+V_{1-b}^{x'})A_a^{x}B_{b}^{x'}}\\&=
    \expectedbraket{(V_a^{x}+V_b^{x'})A_a^{x}B_{b}^{x'}}+\expectedbraket{V_{1-a}^{x}A_a^{x}B_{b}^{x'}}+\expectedbraket{V_{1-b}^{x'}A_a^{x}B_{b}^{x'}},
\end{split}
\end{equation}
then, summing over $a$ and $b$, we get
\begin{equation}
    \label{eq sum AaBb'}
    \begin{split}
    2\sum_{ab}\expectedbraket{A_a^{x}B_{b}^{x'}}&=\sum_{ab}\expectedbraket{((V_a^{x}+V_b^{x'}))A_a^{x}B_{b}^{x'}}+
    \expectedbraket{V_{1}^{x}A_0^{x}(B_{0}^{x'}+B_1^{x'})}+
    \expectedbraket{V_{0}^{x}A_1^{x}(B_{0}^{x'}+B_1^{x'})}+\\&
    \expectedbraket{V_{1}^{x'}(A_0^{x}+A_1^{x})B_{0}^{x'}}+
    \expectedbraket{V_{0}^{x'}(A_0^{x}+A_1^{x})B_{1}^{x'}}
   \\& \leq \sum_{ab}\expectedbraket{((V_a^{x}+V_b^{x'}))A_a^{x}B_{b}^{x'}}+
    \expectedbraket{V_{1}^{x}A_0^{x}}+
    \expectedbraket{V_{0}^{x}A_1^{x}}+
    \expectedbraket{V_{1}^{x'}B_{0}^{x'}}+
    \expectedbraket{V_{0}^{x'}B_{1}^{x'}},
    \end{split}
\end{equation}
where we used that $A_0^x+A_1^x\preceq \mathbb{I}$ and $B_0^{x'}+B_1^{x'}\preceq \mathbb{I}$. Then, using \eqref{eq inequality p_err 1} and \eqref{eq inequality p_err 2}, we recover \eqref{eq perr constraint sum ab}.

On the other hand, recall that 
\begin{equation}\label{eq projectors V attack as ketbras}    V_a^{x}=\ketbra{a_{x}}{a_{x}} \textrm{  and  } V_b^{x'}=\ketbra{b_{x'}}{b_{x'}},\end{equation}
and we can write 
\begin{equation}
\begin{split}
     &\ket{a_{x}}= \alpha_0^a\ket{0_{x'}}+\alpha_1^a\ket{1_{x'}}\\
     &\ket{b_{x'}}= \beta_0^b \ket{0_{x}}+\beta_1^b\ket{1_{x}},\\
\end{split}
\end{equation}
where $\alpha_i^a=\braket{i_{x'}}{a_{x}}$ and $\beta_j^b=\braket{j_{x}}{b_{x'}} $, where the dependence on $x$ and $x'$ is omitted for simplicity. We write the projectors \eqref{eq projectors V attack as ketbras} in the other basis in such a way that
\begin{equation}
\label{eq decompose projectors}
\begin{split}
    &V_a^{x}=\abs{\alpha_0^a}^2V_0^{x'}+\abs{\alpha_1^a}^2V_1^{x'}+\alpha_0^a\alpha_1^{a*}\ketbra{0_{x'}}{1_{x'}}
    +\alpha_0^{a*}\alpha_1^a\ketbra{1_{x'}}{0_{x'}},\\
    &V_b^{x'}=\abs{\beta_0^b}^2V_0^{x}+\abs{\beta_1^b}^2V_1^{x}+\beta_0^b\beta_1^{b*}\ketbra{0_{x}}{1_{x}}
    +\beta_0^{b*}\beta_1^{b}\ketbra{1_{x}}{0_{x}}.
\end{split}
\end{equation}
Plugging \eqref{eq decompose projectors} in \eqref{equation split projectors}, summing \eqref{equation split projectors} and summing over $a$ and $b$ in $\{0,1\}$,
\begin{equation}
    \begin{split}
        &4\sum_{a,b}\expectedbraket{A_a^{x}B_{b}^{x'}}=\\&
        \expectedbraket{((1+\abs{\beta_0^0 }^2)V_0^{x}+
        (1+\abs{\alpha_0^0 }^2)V_0^{x'}+\beta_0^0\beta_1^{0*}\ketbra{0_{x}}{1_{x}}+\beta_0^{0*}\beta_1^{0}\ketbra{1_{x}}{0_{x}}+\\&
        \alpha_0^0\alpha_1^{0*}\ketbra{0_{x'}}{1_{x'}}+\alpha_0^{0*}\alpha_1^{0}\ketbra{1_{x'}}{0_{x'}})A_0^xB_0^{x'}}+(2+\abs{\beta_1^0}^2)\expectedbraket{V_1^xA_0^xB_0^{x'}}+(2+\abs{\alpha_1^0}^2)\expectedbraket{V_1^{x'}A_0^xB_0^{x'}}+\\&
        \expectedbraket{((1+\abs{\beta_0^1 }^2)V_0^{x}+
        (1+\abs{\alpha_1^0 }^2)V_1^{x'}+\beta_0^1\beta_1^{1*}\ketbra{0_{x}}{1_{x}}+\beta_0^{1*}\beta_1^{1}\ketbra{1_{x}}{0_{x}}+\\&
        \alpha_0^0\alpha_1^{0*}\ketbra{0_{x'}}{1_{x'}}+\alpha_0^{0*}\alpha_1^{0}\ketbra{1_{x'}}{0_{x'}})A_0^xB_1^{x'}}+(2+\abs{\beta_1^1}^2)\expectedbraket{V_1^xA_0^xB_1^{x'}}+(2+\abs{\alpha_0^0}^2)\expectedbraket{V_0^{x'}A_0^xB_1^{x'}}+\\&
        \expectedbraket{((1+\abs{\beta_1^0 }^2)V_1^{x}+
        (1+\abs{\alpha_0^1 }^2)V_0^{x'}+\beta_0^0\beta_1^{0*}\ketbra{0_{x}}{1_{x}}+\beta_0^{0*}\beta_1^{0}\ketbra{1_{x}}{0_{x}}+\\&
        \alpha_0^1\alpha_1^{1*}\ketbra{0_{x'}}{1_{x'}}+\alpha_0^{1*}\alpha_1^{1}\ketbra{1_{x'}}{0_{x'}})A_1^xB_0^{x'}}+(2+\abs{\beta_0^0}^2)\expectedbraket{V_0^xA_1^xB_0^{x'}}+(2+\abs{\alpha_1^1}^2)\expectedbraket{V_1^{x'}A_1^xB_0^{x'}}+\\&
        \expectedbraket{((1+\abs{\beta_1^1 }^2)V_1^{x}+
        (1+\abs{\alpha_1^1 }^2)V_1^{x'}+\beta_0^1\beta_1^{1*}\ketbra{0_{x}}{1_{x}}+\beta_0^{1*}\beta_1^{1}\ketbra{1_{x}}{0_{x}}+\\&
        \alpha_0^1\alpha_1^{1*}\ketbra{0_{x'}}{1_{x'}}+\alpha_0^{1*}\alpha_1^{1}\ketbra{1_{x'}}{0_{x'}})A_1^xB_1^{x'}}+(2+\abs{\beta_0^1}^2)\expectedbraket{V_0^xA_1^xB_1^{x'}}+(2+\abs{\alpha_0^1}^2)\expectedbraket{V_0^{x'}A_1^xB_1^{x'}}\\
    \end{split}
\end{equation}

Using that $A_0^x+A_1^x\preceq \mathbb{I}$ and $B_0^{x'}+B_1^{x'}\preceq \mathbb{I}$ and \eqref{eq inequality p_err 1} and \eqref{eq inequality p_err 2}, as in derivation of \eqref{eq perr constraint sum ab},
and bounding the terms that do not correspond to $\expectedbraket{V_{1-a}A_aB_b}$ or $\expectedbraket{V_{1-b}'A_aB_b}$ by the operator norm, we obtain \eqref{eq constraint perr 2 simplified}.

\section{Proof of Proposition \ref{prop ineq perr for QKD }}\label{Appendix proof ineq perr QKD}
Combining both expressions in \eqref{eq condition perr QKD by def 1}, using the properties of the projectors and equation \eqref{eq Eve equal meas Alice constriant}, we obtain the inequality, for $e,b\in\{0,1\}$
\begin{equation}
\label{eq inequality p_err QKD}
    \expectedbraket{V_0^xE_0^xB_1^x}+\expectedbraket{V_1^xE_1^xB_0^x}\leq p_{err}\sum_{e,b}\expectedbraket{E_e^xB_b^x}.
\end{equation}
Because of \eqref{eq Eve equal meas Alice constriant}, from \eqref{eq inequality p_err QKD} we get
\begin{equation}
\label{eq inequality p_err 1 QKD}
    \expectedbraket{V_0^xB_1^x}+\expectedbraket{V_1^xB_0^x}\leq p_{err}\sum_{e,b}\expectedbraket{E_e^xB_b^x}.
\end{equation}
In an analogy with \eqref{equation split projectors}, consider

\begin{equation}
\label{equation split projectors QKD}
    2\expectedbraket{E_e^{x}B_{b}^{x'}}=
    \expectedbraket{(V_e^{x}+V_b^{x'})E_e^{x}B_{b}^{x'}}+\expectedbraket{V_{1-e}^{x}E_e^{x}B_{b}^{x'}}+\expectedbraket{V_{1-b}^{x'}E_e^{x}B_{b}^{x'}},
\end{equation}
because of \eqref{eq Eve equal meas Alice constriant}, $\expectedbraket{V_{1-e}^{x}E_e^{x}B_{b}^{x'}}=0$, summing \eqref{equation split projectors QKD} over $e,b\in\{0,1\}$, using $A_0^x+A_1^x\preceq \mathbb{I}$ applying \eqref{eq inequality p_err 1 QKD}, we recover \eqref{eq perr constraint sum ab QKD}. 

\end{appendices}

\end{document}